%% file: main.tex
\documentclass[12pt,a4paper,twoside]{report}

\usepackage[pdfborder={0 0 0}]{hyperref}  
\usepackage[vmargin=20mm,hmargin=25mm]{geometry}  
\usepackage{graphicx} 
\usepackage{parskip}  
\usepackage{setspace} 
\usepackage{refcount} 
\usepackage{upquote}  
\usepackage{verbatim}
\newenvironment{code}{\verbatim}{\endverbatim}

\usepackage{graphicx}
\usepackage{amsmath}
\usepackage{amssymb}
\usepackage{amsfonts}
\usepackage{amsthm}
\usepackage{mathrsfs}
\usepackage{stmaryrd}
\usepackage{quiver}

\usepackage{newunicodechar}
\newunicodechar{ω}{\ensuremath{\omega}}
\newunicodechar{≤}{\ensuremath{\le}}
\newunicodechar{⊑}{\ensuremath{\sqsubseteq}}
\newunicodechar{≡}{\ensuremath{\equiv}}
\newunicodechar{≠}{\ensuremath{\ne}}
\newunicodechar{⊔}{\ensuremath{\sqcup}}
\newunicodechar{₂}{\ensuremath{{}_2}}
\newunicodechar{₃}{\ensuremath{{}_3}}
\newunicodechar{ℕ}{\ensuremath{\N}}
\newunicodechar{□}{\ensuremath{\square}}
\newunicodechar{∙}{\ensuremath{\cdot}}
\newunicodechar{⊥}{\ensuremath{\bot}}
\newunicodechar{∀}{\ensuremath{\mathnormal\forall}}

\newtheorem{definition}{Definition}[chapter]
\newtheorem{conjecture}[definition]{Conjecture}

\newtheorem{axiom}[definition]{Axiom}
\newtheorem{theorem}[definition]{Theorem}
\newtheorem{lemma}[definition]{Lemma}

\newcommand{\prn}[1]{\left(#1\right)}

\newcommand{\set}[1]{\left\{#1\right\}}
\newcommand{\sem}[1]{\left\llbracket#1\right\rrbracket}

\newcommand{\id}{\operatorname{id}}
\newcommand{\rec}{\operatorname{rec}}
\newcommand{\elim}{\operatorname{elim}}

\newcommand{\dom}{\operatorname{dom}}

\newcommand{\I}{\mathbb{I}}
\newcommand{\E}{\mathbb{E}}
\newcommand{\N}{\mathbb{N}}

\newcommand{\U}{\mathcal{U}}
\newcommand{\Op}{\mathcal{O}}

\newcommand{\w}{\omega}
\newcommand{\wb}{\overline{\omega}}

\newcommand{\parto}{\rightharpoonup}

\newif\ifsubmission 

\title{Topology in Synthetic Domain Theory and its Formalisation in Agda}
\author{Runze Xue}
\date{June 2025}
\newcommand{\candidatenumber}{9661I}
\newcommand{\college}{Girton College}
\newcommand{\coursethe}{Master of Philosophy in Advanced Computer Science}
\newcommand{\coursefor}{the Master of Philosophy in Advanced Computer Science}

\begin{document}

\begin{sffamily} 

\begin{titlepage}
\makeatletter

\hspace*{-14mm}\includegraphics[width=65mm]{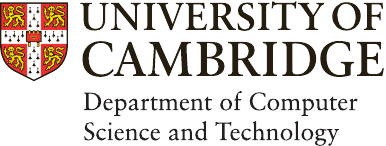}

\ifsubmission

\begin{Large}
\vspace{20mm}
Research project report title page

\vspace{35mm}
Candidate \candidatenumber

\vspace{42mm}
\textsl{``\@title''}

\end{Large}

\else

\begin{center}
\Huge
\vspace{\fill}

\@title
\vspace{\fill}

\@author
\vspace{10mm}

\Large
\college
\vspace{\fill}

\@date
\vspace{\fill}

\end{center}

\fi

\vspace{\fill}
\begin{center}
Submitted in partial fulfilment of the requirements for the\\
\coursethe
\end{center}

\makeatother
\end{titlepage}

\newpage

Total page count: \pageref{lastpage}

\makeatletter
\@tempcnta=\getpagerefnumber{lastcontentpage}\relax%
\advance\@tempcnta by -\getpagerefnumber{firstcontentpage}%
\advance\@tempcnta by 1%
\xdef\contentpages{\the\@tempcnta}%
\makeatother

Main chapters (excluding front-matter, references and appendix):
\contentpages~pages
(pp~\pageref{firstcontentpage}--\pageref{lastcontentpage})

Main chapters word count: 6190

Methodology used to generate that word count:

\begin{quote}
\begin{verbatim}
https://app.uio.no/ifi/texcount/online.php

File: report.tex
Encoding: utf8
Sum count: 6190
Words in text: 6190
Words in headers: 103
Words outside text (captions, etc.): 127
Number of headers: 25
Number of floats/tables/figures: 1
Number of math inlines: 594
Number of math displayed: 85
\end{verbatim}
\end{quote}

\end{sffamily}

\vspace{\fill}
\onehalfspacing
\textbf{\Huge Declaration}
\vspace{40pt}

I,
\makeatletter\ifsubmission
\candidatenumber,
\else
\@author\ of \college,
\fi\makeatother
being a candidate for \coursefor, hereby declare that this report and
the work described in it are my own work, unaided except as may be
specified below, and that the report does not contain material that
has already been used to any substantial extent for a comparable
purpose.
In preparation of this report, I adhered to the
\href{https://www.cst.cam.ac.uk/files/ai_policy.pdf}{Department of
Computer Science and Technology AI Policy}. I am content for
my report to be made available to the students and staff of the
University.



\ifsubmission\else
\bigskip
\textbf{Signed: Runze Xue}
\fi

\bigskip
\textbf{Date: 11th June, 2025}
\vspace{\fill}

\input{chapters/abstract}

\ifsubmission\else

\input{chapters/acknowledgement}

\fi
\cleardoublepage 

\tableofcontents

\label{firstcontentpage} 

\input{chapters/1-intro}

\input{chapters/2-background}

\input{chapters/3-interval}

\input{chapters/4-phoa}

\input{chapters/5-omega}

\input{chapters/6-complete}

\input{chapters/7-conclusion}

\label{lastcontentpage} 

\bibliographystyle{plain}
\bibliography{ref}










\appendix

\input{chapters/a-proof}

\label{lastpage}

\end{document}

%% file: chapters/abstract.tex

\chapter*{Abstract}

This project investigates the Phoa principle in synthetic domain theory (SDT), and provides a generalisation to the transfinite cases. The Phoa principle plays a pivotal role in SDT by illustrating how the paths give the information order on the interval type and other algebraic structures in SDT. The project defines the dual simplices and spines and introduces the concept of sobriomorphisms, which contributes to a new interpretation of the Phoa principle and its generalisations. Finally, the project proposes a hypothetical completeness theorem that may unify the Segal completeness and the chain completeness in SDT based on investigations on the Phoa principle in the project.

The project also includes axiomatisation of the interval type in Cubical Agda and the formalised proof for the main theorems.

%% file: chapters/acknowledgement.tex

\chapter*{Acknowledgments}

The author feels obliged to extend his highest gratitude to the project advisor, Dr. Jonathan Sterling. This project would not have been possible without his extraordinary guidance and support, as well as his consistent patience and professionalism during project supervision.

The author would also like to express sincere thankfulness to his friends from the Chinese programming language community: Xu ``Trebor'' Huang, Yinsen ``Tesla'' Zhang, Yue ``Patrick'' Yao and Yuchen ``Alex'' Jiang for providing helpful inspirations and insights on logic and type theory.

Finally, the author wishes to say thanks to Soruto and Mika Yurisaki for their enthusiastic companion and encouragement during the past year.

%% file: chapters/1-intro.tex

\chapter{Introduction}

Domain theory is a powerful tool for dealing with partial functions in the denotational semantics of programming languages. The classical approach assigns each type a domain, which is a poset with directed suprema, and sees functions between these types as homomorphisms between domains \cite{scott_domains_1982,scott_outline_1970}. Meanwhile, the classical approach does not immediately scale to different computation models with different strengths of computability, or models like concurrent computations \cite{nygaard_domain_2004,winskel_powerdomains_1985}.

Synthetic domain theory (SDT) is a synthetic and axiomatic approach to domain theory \cite{fiore_axiomatic_1996,hyland_first_1991,reus_general_1997}. In SDT, the domains and orders are not introduced a priori. Instead, it constructs the theory in a topos \cite{mac_lane_sheaves_1994} with a classifier that classifies a certain class of subsets \cite{hyland_first_1991}. The illustration of partial functions and the construction of domains are then derived from this classifier.

However, defining the orders inside SDT is a non-trivial topic and leads to different choices of axioms that yield different models of SDT \cite{van_oosten_axioms_2000}. One promising approach is inspired by the homotopy type theory (HoTT) \cite{the_univalent_foundations_program_homotopy_2013}. HoTT gives a type theory that treats identity types as paths and uses algebraic topological constructions to give important properties related to identity types, including congruence, functional extensionality, and univalence. Due to the nature of identity types and homotopy, HoTT can be seen as a theory of $\infty$-groupoids.

Inspired by this, Riehl and Schulman gave a variant HoTT known as directed homotopy type theory (directed HoTT) and constructed a synthetic theory of $\infty$-categories \cite{riehl_type_2023}. To restrict a type in directed HoTT to behave like a $\infty$-category, the type must satisfy two completeness properties called Segal and Rezk. By seeing a poset as a thin $1$-category and taking the interval type as the subobject classifier, directed HoTT can therefore be used to axiomatise SDT, and hence we need to find proper axioms to imply that the interval itself is a domain, which includes showing that the interval is Segal and Rezk.

Phoa principle is one property about the classifier \cite{phoa_domain_1991} that is proven to be enough to imply the Segal and Rezk properties for the interval type along with other properties that altogether show that the interval is a poset \cite{sterling_domains_2025}. The Phoa principle has also been generalised to higher-dimensional simplices, and such generalisation is shown to be equivalent to the original Phoa principle \cite{pugh_when_2025}.

This thesis proposes a further generalisation of the Phoa principle and provides verified proofs in Cubical Agda \cite{vezzosi_cubical_2019}. Specifically, the thesis contains the following works:
\begin{enumerate}
    \item Rephrasing the higher Phoa principle by introducing a dual version of simplices.
    \item Generalising the higher Phoa principle to the transfinite cases based on the rephrasing.
    \item Introducing the concept of sobriomorphisms and revealing a possible connection between the Phoa principle and the chain-completeness, a property that makes a synthetic poset a domain.
    \item Axiomatising the interval type in Cubical Agda and proving the classical, higher, and transfinite Phoa principle with concrete constructions in Cubical Agda.
\end{enumerate}

The thesis starts with the algebraic structures on the interval type and illustrates the connection between the partial map classifier and the simplices in Chapter \ref{chapter-interval}. In Chapter \ref{chapter-phoa}, we give a review on the classical and higher Phoa principle with a rephrasing based on the dual simplices and sobriomorphisms. Then in Chapter \ref{chapter-omega}, we generalise the Phoa principle to the transfinite case that relates the simplices and the initial and final (co)algebras defined by the partial map classifier. Finally, in Chapter \ref{chapter-topology}, we make a brief investigation on the relation between the Phoa principle and the completeness properties involved in synthetic domains, and propose a conjectural completeness property that may possibly unify the Segal and the chain completeness. Several theorems and lemmata in the text are accompanied by an Agda object name written in monospace fonts. These refer to the corresponding formalisation code in the Agda codebase.

%% file: chapters/2-background.tex

\chapter{Background}

\section{Domain Theory}

Domain theory, first proposed by Scott \cite{scott_domains_1982,scott_outline_1970}, aims to provide a denotational semantics for languages with general recursions. The core idea in domain theory is to see types as domains, a poset with supremum for all directed subsets, and functions between types are monotonic functions preserving the directed suprema. We hereby introduce some fundamental concepts involved in domain theory. The contents are common results in the study of denotational semantics and are loosely based on the literature by Davey and Priestley \cite{davey_introduction_2002} and the lecture notes by Lennon-Bertrand \cite{lennon-bertrand_denotational_2024}.

\begin{definition}
    A binary relation $\preceq$ on a set $P$ is a \textbf{partial order} if it satisfies the following properties:
    \begin{itemize}
        \item Reflexivity: $\forall x\in P.x\preceq x$.
        \item Transitivity: $\forall x\in P.\forall y\in P.\forall z\in P.x\preceq y\wedge y\preceq z\implies x\preceq z$.
        \item Antisymmetry: $\forall x\in P.\forall y\in P.x\preceq y\wedge y\preceq x\implies x=y$.
    \end{itemize}
    A set equipped with a partial order $(P,\preceq)$ is called a \textbf{poset}.
\end{definition}

\begin{definition}
    Let $(P,\preceq)$ be a poset and $S\subseteq P$ be a subset of $P$. $s$ is an \textbf{upper (lower) bound} of $S$ if $\forall x\in S.s\preceq x$ ($\forall x\in S.x\preceq s$). If the set of all upper (lower) bounds of $S$ has a lower (upper) bound, it is called the \textbf{supremum (infimum)} of $S$, denoted $\sup S$ ($\inf S$). Suprema and infima are unique by definition if they exist.
\end{definition}

\begin{definition}
    Let $(D,\sqsubseteq)$ be a poset. A \textbf{$\omega$-chain} in $D$ is a sequence of elements $(a_n\in D)_{n\in\N}$ that satisfy $a_m\sqsubseteq a_n$ for any $m\le n$. $D$ is called a \textbf{predomain} or a \textbf{$\omega$-cpo} if all $\omega$-chains on $D$ have suprema. If in addition $D$ has a minimum, $D$ is called a \textbf{domain} or a \textbf{$\omega$-cppo}.\footnote{Some literature uses the concept of dcpo instead, requiring all directed subsets to have suprema, which is a stronger condition than the one we used here. However, $\omega$-chain completeness is sufficient in constructing fixed points and is consistent with the synthetic formalizations of domains in Chapter \ref{chapter-topology}.}
\end{definition}

An inspiring example of a domain is the set of partial functions $A\parto B$, where $f\sqsubseteq g$ iff $\dom f\subseteq\dom g$ and $\forall x\in\dom f.f(x)=g(x)$. In other words, $f\sqsubseteq g$ if $g$ is more ``informative'' than $f$ as a function. The order $\sqsubseteq$ in a domain is therefore often called the information order.

\begin{definition}
    A function between posets is \textbf{monotonic} if
    $$\forall x.\forall y.x\preceq y\implies f(x)\preceq f(y)$$
    A function between domains is \textbf{Scott-continuous} (or \textbf{continuous} for short) if it is monotonic and preserves all directed suprema.
\end{definition}

One key result about continuous functions in domain theory is Kleene's fixed point theorem.
\begin{theorem}[Kleene's fixed point theorem]
    Let $f:D\to D$ be a continuous function. $f$ always have a least fixed point given by
    $$\sup_{n\in\N}f^n(\bot)=\sup\set{\bot,f(\bot),f(f(\bot)),\dots}$$
\end{theorem}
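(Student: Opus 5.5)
The plan is the standard Kleene argument, adapted to the $\omega$-chain setting of the preceding definitions. Here $D$ is a domain in that sense: every $\omega$-chain has a supremum, and there is a minimum $\bot$. Continuity will only be used in the form ``$f$ is monotonic and preserves suprema of $\omega$-chains''. This is legitimate because an $\omega$-chain is a directed subset, so any function preserving directed suprema preserves these in particular.

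First I show that the sequence $a_n := f^n(\bot)$ is an $\omega$-chain, so that $a := \sup_{n\in\N} a_n$ exists by completeness of $D$. The step $a_n \sqsubseteq a_{n+1}$ goes by induction on $n$. The base case $a_0 = \bot \sqsubseteq f(\bot) = a_1$ holds because $\bot$ is the minimum. For the inductive step, $a_n \sqsubseteq a_{n+1}$ gives $a_{n+1} = f(a_n) \sqsubseteq f(a_{n+1}) = a_{n+2}$ by monotonicity. Transitivity then gives $a_m \sqsubseteq a_n$ for all $m \le n$, which is exactly the chain condition as stated in the definition.

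Second, I show that $a$ is a fixed point. Since $f$ preserves the supremum of the chain, $f(a) = \sup_n f(a_n) = \sup_n a_{n+1}$. The shifted chain $(a_{n+1})_n$ has the same upper bounds as $(a_n)_n$. Indeed, an upper bound of the shifted chain also bounds $a_0 = \bot$, since every element lies above $\bot$, and the converse is immediate. The two chains therefore have the same supremum, and uniqueness of suprema gives $f(a) = a$.

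Third, I show leastness. Let $x$ be any fixed point, $f(x) = x$. By induction, $a_n \sqsubseteq x$ for all $n$: we have $\bot \sqsubseteq x$, and $a_n \sqsubseteq x$ implies $a_{n+1} = f(a_n) \sqsubseteq f(x) = x$. Hence $x$ is an upper bound of the chain, so $a \sqsubseteq x$ because $a$ is the least upper bound.

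None of these steps is a genuine obstacle. The one point needing care is that the background definition of supremum is phrased a little loosely: it says the set of upper bounds ``has a lower bound'', whereas the intended meaning is a least element of that set. I will use that intended reading, in which the supremum is an upper bound lying below every upper bound. With that reading, the equality of the suprema of the original and shifted chains, and the final comparison $a \sqsubseteq x$, are immediate.
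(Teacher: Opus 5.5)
Your proof is correct. It is the standard Kleene argument: show the iterates form an $\omega$-chain by induction, get the fixed point because the shifted chain has the same upper bounds, and get leastness by induction against an arbitrary fixed point. The paper gives no proof of its own to compare with, since it states the theorem only as background from the cited domain-theory literature.

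Reading ``supremum'' as least upper bound is the right repair. The paper's definition of upper and lower bound also has the inequality reversed: it writes $s\preceq x$ for an upper bound. Your standard reading silently corrects that too.
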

The fixed point theorem illustrates the existence of recursion in the sense that the recursion can be constructed as a fixed point in the domain of partial functions using the $Y$-combinator.

\section{Synthetic Domain Theory}

Synthetic domain theory (SDT), as its name suggests, is a synthetic approach to domain theory \cite{hyland_first_1991,reus_general_1997,reus_program_1996}. In SDT, domains are treated as intuitionistic sets, and all functions expressible inside this are by nature continuous.

An important motivation of SDT is that the formalisation in domain theory often involves tedious definitions of the orders on different types and constructions of continuous functions between them. SDT tackles this issue by introducing a special object $\I$\footnote{Other common notations for the classifier include $\mathbb{S}$ and $\Sigma$.} that serves as a subobject classifier classifying a certain class of subsets on which the partial functions are studied. By postulating different axioms about $\I$, we may have different variants of the SDT scaling to different computation models, which is another advantage of SDT.

Meanwhile, SDT provides a type theory for users to formalise the proofs inside the language, and can be incorporated with other type theories, for example, HoTT\cite{the_univalent_foundations_program_homotopy_2013}, which allows users to deal with higher structures inside the theory.

\section{Homotopy Type Theory}

\begin{table}[h]
    \centering
    \begin{tabular}{l|l|l}
    \textbf{Homotopy Theory} & \textbf{Type Theory} & \textbf{Category Theory} \\ \hline
    Space & Type & Category \\ \hline
    Point & Term & Object \\ \hline
    Path & Identity Type & Morphism \\ \hline
    Identity Path & Reflexivity of Equality & Identity Morphism \\ \hline
    Path Composition & Transitivity of Equality & Morphism Composition \\ \hline
    Path Inversion & Symmetry of Equality &  \\ \hline
    Continuous Function & Function & Functor \\ \hline
    & Function Congruence & Functor Mapping Morphisms \\ \hline
    & Function Extensionality & Functor Mapping Objects \\ 
    \end{tabular}
    \caption{A Rosetta Stone between Homotopy Theory, Type Theory, and Category Theory}
    \label{tab-rosetta}
\end{table}

In this thesis, we use the homotopy type theory (HoTT) as our meta-language for theorem proving. HoTT is a type theory in which identity types are treated as paths in homotopy theory \cite{the_univalent_foundations_program_homotopy_2013}. Table \ref{tab-rosetta} lists the correspondence between the key concepts in homotopy theory and type theory. Functions in HoTT by nature satisfy the congruence rule (all functions preserve equality) and the function extensionality (two functions are equal iff their values are equal anywhere). See Section \ref{sect-order} for explicit proofs of these properties.

An important variant of HoTT is the cubical type theory (CuTT) \cite{cohen_cubical_2018}. Compared with HoTT, CuTT introduces interval variables, allowing users to describe and manipulate paths and shapes explicitly. CuTT also introduces the higher inductive type (HIT), allowing an inductive data type to include equality information. For example, in terms of HIT, the integer type may be defined as
\begin{code}
data Int : Type where
    minus : ℕ → ℕ → Int
    eq : ∀ a b c d → a + d ≡ b + c → minus a b ≡ minus c d
\end{code}
which encodes the mathematical definition $(\N\times\N)/((a,b)\sim(c,d):a+d=b+c)$. Users are required to demonstrate that the equality is preserved when pattern-matching on a HIT. CuTT has multiple implementations in proof assistants, and one of them is Cubical Agda \cite{vezzosi_cubical_2019}. The codebase for this thesis provides the formalisation of some major theorems in Cubical Agda. See Appendix \ref{apdx-agda} for details.

Directed HoTT is another variation of HoTT used to study the synthetic theory of $\infty$-categories \cite{riehl_type_2023}. In directed HoTT, paths may not be invertible, while other properties including function congruence and function extensionality still hold. Table \ref{tab-rosetta} provides a correspondence between category theory, type theory, and homotopy theory.

\section{Conventions of Notations}
The thesis uses a mixed style of set-theoretic and type-theoretic notations. For a term $x$ of type $A$, both the notations $x\in A$ and $x:A$ may occur in the text. The thesis also uses a set-theoretic style notation of subsets, that is
$$\set{x\in A:p(x)}:=\sum_{x:A}p(x)$$
where $p:A\to\U$.

The universe of all types is denoted $\U$, and the universe of all propositions is denoted $\Omega$. We assume that all types in $\U$ are h-sets and all propositions n $\Omega$ are h-props which means that we consider two propositions equal if they are logical equivalent. This assumption is only for the sake of simplicity and can be removed with only minor modifications to the theory\footnote{In fact, we only need to additionally assume that $\I$ is a set.}. The initial and final objects are denoted $\bot$ and $\top$, which are also seen as the falsity and truth in $\Omega$. The only element of $\top$ is $*$.

%% file: chapters/3-interval.tex

\chapter{The Interval Type and Simplices}
\label{chapter-interval}

The interval type plays a pivotal role in SDT in two aspects: It illustrates a primitive homotopy type, as its name suggests, a closed interval, which allows us to define higher-dimensional structures like cubes and simplices, and to talk about homotopy theory in this framework. Meanwhile, it classifies certain subobjects, namely, the open sets on which the partial functions are studied in SDT. This chapter describes how these two aspects of the interval type interact on the properties of simplices, which are important when illustrating information order in homotopy.

\section{The Interval Type}

We denote the interval type as $\I$, along with its two ends as $0$ and $1$, respectively. The existence of the interval type is postulated in the theory. To rule out trivial models, we require $0\ne1$. For any $i\in\I$, denote $\sem{i}:=(i=1)$. To make $\I$ a classifier, we expect $\I$ to be a subuniverse of the universe of propositions $\Omega$, which leads to the following axiom \footnote{Axioms about the interval type will always be postulated throughout the thesis unless otherwise indicated.}:
\begin{axiom}
    \label{ax-prop}
    $\sem{-}:\I\to\Omega$ is an embedding that sends $1\mapsto\top$ and $0\mapsto\bot$.
\end{axiom}

We also observe that $(\Omega,\wedge,\vee,\to,\bot,\top)$ is a bounded distributive lattice, where $\to$ serves as the partial order induced by the lattice. We further expect this embedding to be an embedding of lattice, and as a result, the interval type seen as a subuniverse of propositions will be closed under conjunction and disjunction. 
\begin{axiom}
    \label{ax-lattice}
    $\I$ has a bounded distributive lattice $(\I,\sqcap,\sqcup,\sqsubseteq,0,1)$ structure satisfying
    $$\begin{aligned}
        \sem{i\sqcap j}&=\sem{i}\wedge\sem{j} \\
        \sem{i\sqcup j}&=\sem{i}\vee\sem{j} \\
        i\sqsubseteq j&=\sem{i}\to\sem{j} \\
    \end{aligned}$$
\end{axiom}
It should be emphasised that the name of the interval type does not suggest that the partial order defined here is necessarily a linear order.

\section{The Partial Map Classifier}
We now illustrate how the interval type classifies the subsets and, in addition, the partial maps.
\begin{definition}
    For any $A$, the function type $A\to\I$ is called the \textbf{observational algebra}\footnote{The naming of this concept is attributed to \cite{sterling_domains_2025}.} of $A$, denoted $\Op(A)$. Each element of $\chi:\Op(A)$ identifies a subset $U$ of $A$ defined as
    $$U:=\set{a\in A:\chi(a)=1}=\sum_{a:A}\sem{\chi(a)}$$
    Such a subset is called an \textbf{open subset} of $A$. For simplicity, we may use $U:\Op(A)$ or $U\in\Op(A)$ to refer to the subset itself. The observational algebra is made a bounded distributive lattice by taking pointwise joins and meets.
\end{definition}
One may immediately see that the lattice of the interval induces a bounded distributive lattice on any observational algebra by pointwise operations, which further guarantees that the open sets are closed under finite unions and intersections and that the empty and universal sets are always open. This observation is similar to the definition of the frame of open sets in topology, despite the fact that we so far know nothing about whether open sets are still closed under arbitrary unions. Nevertheless, we still see the observational algebra $\Op(A)$ as the topology of $A$.

With the open set classifier, we can now classify the partial maps defined in the open sets, which is to find a functor $L$ such that $A\to LB$ gives exactly the type of partial maps from $A$ to $B$.

\begin{definition}
    The \textbf{lifting functor} $L$ is defined as
    $$LA:=\sum_{i:\I}(\sem{i}\to A)$$
    that acts on the function $f:A\to B$
    $$Lf:=\lambda(i,x).(i,\lambda\phi.f(x(\phi))):LA\to LB$$
\end{definition}
\begin{theorem}
    \label{thm-classifier}
    The lifting functor classifies partial maps in open subsets. Specifically, for any $A$ and $B$, the set of all partial functions from $A$ to $B$ is isomorphic to $A\to LB$.
\end{theorem}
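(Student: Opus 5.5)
We first make precise what ``partial function from $A$ to $B$ defined on an open subset'' means. Following the definition of open subsets, we take the type of partial maps to be
$$\mathrm{Par}(A,B):=\sum_{\chi:\Op(A)}\Big(\Big(\sum_{a:A}\sem{\chi(a)}\Big)\to B\Big),$$
that is, an open subset $U$ of $A$ (given by its characteristic map $\chi$) together with a total function $U\to B$. The plan is to build an explicit equivalence $\mathrm{Par}(A,B)\simeq(A\to LB)$ by rearranging $\Sigma$- and $\Pi$-types. By univalence, or simply because all types are h-sets, this equivalence then yields the claimed isomorphism.

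The chain of equivalences we intend to use is
$$\sum_{\chi:A\to\I}\Big(\Big(\sum_{a:A}\sem{\chi(a)}\Big)\to B\Big)\;\simeq\;\sum_{\chi:A\to\I}\prod_{a:A}\big(\sem{\chi(a)}\to B\big)\;\simeq\;\prod_{a:A}\sum_{i:\I}\big(\sem{i}\to B\big)\;=\;A\to LB.$$
The first step is currying: a function out of a $\Sigma$-type is the same as a dependent function of two arguments. The second step is the type-theoretic axiom of choice, i.e.\ the distributivity of $\Pi$ over $\Sigma$, which holds definitionally in type theory. Concretely, the forward map sends $(\chi,f)$ to $\lambda a.(\chi(a),\lambda\phi.f(a,\phi))$. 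The backward map sends $g$ to $(\lambda a.\pi_1(g(a)),\lambda(a,\phi).\pi_2(g(a))(\phi))$. Both round trips are identities by $\eta$-rules for $\Sigma$ and $\Pi$ together with function extensionality, which is available in HoTT/CuTT.

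The main obstacle is conceptual rather than computational. We must justify that this $\Sigma$-type presentation of ``partial functions on open subsets'' is the right one. In particular, a partial function should be determined by its domain as a subset and not by a chosen code for that subset. Axiom~\ref{ax-prop} handles this: since $\sem{-}$ is an embedding, an open subset $U\subseteq A$ (a subtype whose membership predicate factors through $\I$) corresponds to a unique $\chi:A\to\I$. Hence quantifying over $\chi$ is the same as quantifying over open subsets, with no overcounting. We will state this as a short preliminary remark: the map $\chi\mapsto\sum_{a}\sem{\chi(a)}$ from $\Op(A)$ to subtypes of $A$ is an embedding, because postcomposition with the embedding $\sem{-}$ is an embedding. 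After that remark, the equivalence above finishes the proof. Naturality in $A$ and $B$, where $Lf$ acts by postcomposition, can be checked directly but is not required for the statement.
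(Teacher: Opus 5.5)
Your proposal is correct and takes the same route as the paper. Both proofs curry the function out of $\sum_{a:A}\sem{\chi(a)}$ and then apply the type-theoretic axiom of choice to move $\sum_{\chi}$ past $\prod_{a}$, arriving at $\prod_{a:A}\sum_{i:\I}(\sem{i}\to B)=A\to LB$. The paper states this as a bare chain of isomorphisms, splitting the choice step into two lines; your explicit forward and backward maps, and your use of Axiom~\ref{ax-prop} to rule out overcounting of open subsets, are sound additions the paper leaves implicit.
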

\begin{proof}
    $$\begin{aligned}
        \sum_{U:\Op(A)}(U\to B)&=\sum_{\chi:\Op(A)}\sum_{a:A}\sem{\chi(a)}\to B \\
        &\cong\sum_{\chi:\Op(A)}\prod_{a:A}\sem{\chi(a)}\to B \\
        &\cong\prod_{a:A}\sum_{\chi:A\to\I}\sem{\chi(a)}\to B \\
        &\cong\prod_{a:A}\sum_{i:\I}\sem{i}\to B \\
        &=A\to LB
    \end{aligned}$$
    Hence, $A\to LB$ classifies the partial maps between $A$ and $B$.
\end{proof}
The lifting functor can be seen as a synthetic counterpart to the maybe monad in the sense that every $LA$ has an element $\bot_A:=(0,\elim_\bot)$ similar to \verb|Nothing| and an embedding
$$\eta_A:A\hookrightarrow LA:=\lambda x.(1,\lambda\_.x)$$
similar to \verb|Just|. In fact, one may consider an model of SDT where $\I$ is just the Boolean type, and the lifting functor in this model is exactly the maybe monad.

\section{The Simplices}
\begin{definition}
    The \textbf{$n$-simplex} $\Delta^n$ in SDT is defined as
    $$\Delta^n:=\set{(i_0,\dots,i_{n-1})\in\I^n:i_0\sqsupseteq\dots\sqsupseteq i_{n-1}}$$
    Specially, $\Delta^0:=\top$ and $\Delta^{-1}:=\bot$.
\end{definition}
Sorting the interval elements in descending order seems counterintuitive, since doing the other way is more common and will illustrate exactly the same structure. The reason we do so will be revealed later, but we still use $\Delta_n$ to denote the $n$-simplex with reversed permutation (i.e. in ascending order) of the interval elements.

One surprising result in SDT shows the connection between the simplices and the lifting functor. One can notice this connection from some simple observations.
\begin{lemma}
    \label{lemma-l-delta-0}
    $L\bot\cong\top$.
\end{lemma}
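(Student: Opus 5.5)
We need $L\bot = \sum_{i:\I}(\sem{i}\to\bot)\cong\top$. The plan is to show that this type is contractible, with centre $\bot_\bot=(0,\elim_\bot)$. There are three steps: first reduce the fibre to a proposition, then use Axiom \ref{ax-prop} to pin down $i$, and finally conclude contractibility.

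For any $i:\I$, the type $\sem{i}\to\bot$ is the negation $\neg\sem{i}$. Since $\bot$ is a proposition, this type is a proposition by function extensionality. So $L\bot$ is equivalent to the subtype $\set{i\in\I:\neg\sem{i}}$, and it suffices to show that this subtype has exactly one element, namely $0$.

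\emph{Existence.} Axiom \ref{ax-prop} gives $\sem{0}=\bot$. Transporting $\elim_\bot$ along this equality yields an element of $\sem{0}\to\bot$, so $(0,\elim_\bot)$ lies in $L\bot$.

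\emph{Uniqueness.} Suppose $i:\I$ and $\neg\sem{i}$ holds. Then $\sem{i}\leftrightarrow\bot$, and since propositions are identified up to logical equivalence (propositional univalence, as assumed in the conventions), $\sem{i}=\bot=\sem{0}$. By Axiom \ref{ax-prop}, $\sem{-}$ is an embedding and hence injective, so $i=0$. Because the second component lives in a proposition, it follows that $(i,f)=(0,\elim_\bot)$ for every $(i,f):L\bot$.

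Together these show that $L\bot$ is contractible, so $L\bot\cong\top$ via the unique map to $\top$ and the map $*\mapsto(0,\elim_\bot)$.

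The only substantive step is using that $\sem{-}$ is an embedding to conclude $i=0$ from $\sem{i}=\bot$. Everything else is routine manipulation of $\Sigma$-types over propositions. One point to be careful about is that the equality $\sem{i}=\bot$ is obtained from the logical equivalence via the paper's convention on $\Omega$ (propositional extensionality), not from an additional axiom.
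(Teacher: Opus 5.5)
Your proof is correct and follows essentially the same route as the paper: the paper reads the second component as a witness of $\sem{i}\to\sem{0}$, i.e.\ $i\sqsubseteq 0$, and concludes $i=0$ from the minimality of $0$ (that is, antisymmetry), which is the same as your step from $\sem{i}\leftrightarrow\sem{0}$ to $i=0$ via propositional extensionality and injectivity of $\sem{-}$. You also state explicitly that the fibre $\sem{i}\to\bot$ is a proposition, so each element of $L\bot$ is determined by $i$; the paper leaves this detail implicit.
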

\begin{proof}
    For any $(i,x)\in L\bot$, $x:\sem{i}\to\bot=\sem{i}\to\sem{0}$, which means $i\sqsubseteq0$. Meanwhile, $0$ is the minimum in $\I$, so we must have $i=0$, and therefore $(0,\rec_\bot)$ is the only element in $L\bot$.
\end{proof}
\begin{lemma}
    \label{lemma-l-delta-1}
    $L\top\cong\I$.
\end{lemma}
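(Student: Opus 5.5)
The plan is to exhibit mutually inverse maps $\pi:L\top\to\I$ and $\iota:\I\to L\top$ and verify both round trips. By definition $L\top=\sum_{i:\I}(\sem{i}\to\top)$. Take $\pi$ to be the first projection $\pi(i,x):=i$, and take $\iota(i):=(i,\lambda\_.*)$, the pair whose second component is the unique map into $\top$.

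The composite $\pi\circ\iota$ is the identity on $\I$ judgementally, since $\pi(\iota(i))=i$. For the other composite, $\iota(\pi(i,x))=(i,\lambda\_.*)$ must be shown equal to $(i,x)$. Since the first components agree, it suffices to identify the second components in the fibre $\sem{i}\to\top$. This type is contractible because $\top$ is contractible: by function extensionality, any two maps into $\top$ agree, as $x(\phi)=*$ for every $\phi$. Hence $x=\lambda\_.*$, and the two pairs are equal. Equivalently, one can say that $\sum_{i:\I}(\sem{i}\to\top)$ is a sum of contractible fibres and so is equivalent to its base $\I$.

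I do not expect a real obstacle here. The only point needing care is the equality in the $\Sigma$-type, and it reduces to the uniqueness of maps into $\top$ via function extensionality, which the background section takes as available in HoTT. Unlike Lemma \ref{lemma-l-delta-0}, this argument does not need Axiom \ref{ax-lattice} or the embedding property of Axiom \ref{ax-prop}: the fibre over every $i$ is trivial, not only the fibre over $0$. In the surrounding narrative, this lemma together with Lemma \ref{lemma-l-delta-0} suggests the pattern $L\Delta^{n-1}\cong\Delta^n$, with $\Delta^{-1}=\bot$, $\Delta^0=\top$ and $\Delta^1=\I$.
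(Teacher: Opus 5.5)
Your proposal is correct and is essentially the paper's proof: the paper gives the same pair of maps $(i,x)\mapsto i$ and $i\mapsto(i,\lambda\_.*)$ and says only that they ``can be easily verified'' to be inverse. Your argument for $\iota\circ\pi=\id$, using function extensionality and the contractibility of $\sem{i}\to\top$, simply writes out that verification.
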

\begin{proof}
    Denote the only element of $\top$ by $*$. The isomorphism is given as
    $$\begin{aligned}
        (i,x)&\mapsto i \\
        i&\mapsto(i,\lambda\_.*)
    \end{aligned}$$
    which can be easily verified.
\end{proof}
The observations above give rise to the conjecture that $\Delta^n\cong L^n\top\cong L^{n+1}\bot$. However, difficulties arise when we try to show, for example, $L\I\cong\Delta^2$: We have no measures to construct a tuple of two interval elements from an instance of $(i,x)\in L\I$ since we cannot apply $x$ without knowing anything about $i$. It turns out that we need one more axiom to reach the proof.

\begin{definition}
    \label{def-sigma}
    For any type $A$, an \textbf{$L$-algebra} on $A$ is a function $LA\to A$, and a \textbf{$L$-coalgebra} on $A$ is a function $A\to LA$.
\end{definition}
\begin{axiom}
    \label{ax-l-alg}
    There exists a $L$-algebra $\sigma$ on $\I$ satisfying
    $$\sem{\sigma(i,j)}=\sum_{\phi:\sem{i}}\sem{j(\phi)}$$
    which is called the \textbf{$\sigma$-structure} on $\I$.
\end{axiom}
This axiom suggests that $\I$ as a subuniverse of propositions is closed under $\Sigma$-types. A type like $\I$ that inherit the algebra of propositions via an embedding is called a dominance \cite{reus_general_1997}. It is often additionally required that $\sem{i}\leftrightarrow\sem{j}\implies i=j$, yet this is redundant if we assume $\Omega$ to be the universe of h-props. The intuitive interpretation of the $\sigma$-structure leads to some useful lemmata.
\begin{lemma}
    \label{lemma-sigma-le}
    For any $(i,j)\in L\I$, $\sigma(i,j)\sqsubseteq i$.
\end{lemma}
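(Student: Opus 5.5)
By Axiom \ref{ax-lattice}, the order on $\I$ is given by $i\sqsubseteq j=\sem{i}\to\sem{j}$. It therefore suffices to construct, for a fixed $(i,j)\in L\I$ with $i:\I$ and $j:\sem{i}\to\I$, a function
$$\sem{\sigma(i,j)}\to\sem{i}.$$
The plan is to unfold $\sem{\sigma(i,j)}$ using Axiom \ref{ax-l-alg}, read off the required function directly, and then repackage it as the inequality in $\I$.

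By Axiom \ref{ax-l-alg}, $\sem{\sigma(i,j)}=\sum_{\phi:\sem{i}}\sem{j(\phi)}$. The required map is then the first projection
$$\mathrm{pr}_1:\sum_{\phi:\sem{i}}\sem{j(\phi)}\to\sem{i},\qquad(\phi,\psi)\mapsto\phi.$$
To obtain a map out of $\sem{\sigma(i,j)}$ itself, we transport $\mathrm{pr}_1$ along the equality of Axiom \ref{ax-l-alg}. Since all propositions are h-props and equality in $\Omega$ is logical equivalence, this is simply precomposition with the forward direction of that equivalence.

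Finally, applying Axiom \ref{ax-lattice} in the reverse direction turns the implication $\sem{\sigma(i,j)}\to\sem{i}$ into $\sigma(i,j)\sqsubseteq i$.

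No step is a real obstacle. The only point needing care is bookkeeping: the axiom states $i\sqsubseteq j$ as an equality of propositions $\sem{i}\to\sem{j}$, so we must make sure to move from the proposition to the order relation rather than the other way round. Otherwise the argument is just the first projection of a $\Sigma$-type.
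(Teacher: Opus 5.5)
Your proposal is correct and is essentially the paper's proof. Both take the first projection $\sum_{\phi:\sem{i}}\sem{j(\phi)}\to\sem{i}$, read it as a map $\sem{\sigma(i,j)}\to\sem{i}$ via Axiom \ref{ax-l-alg}, and conclude $\sigma(i,j)\sqsubseteq i$ by Axiom \ref{ax-lattice}.
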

\begin{proof}
    By Axiom \ref{ax-l-alg}, the left projection $\pi_0$ of $\sum_{\phi:\sem{i}}\sem{j(\phi)}$ has the type $$\pi_0:\sem{\sigma(i,j)}\to\sem{i}$$
    Thus by Axiom \ref{ax-lattice}, $\sigma(i,j)\sqsubseteq i$.
\end{proof}
\begin{lemma}
    \label{lemma-sigma-meet}
    For any $i,j\in\I$, $\sigma(i,\lambda\_.j)=i\sqcap j$.
\end{lemma}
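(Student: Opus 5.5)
The plan is to reduce the equation to an equality of propositions and then use that $\sem{-}$ is an embedding. By Axiom \ref{ax-prop}, $\sem{-}:\I\to\Omega$ is an embedding, hence injective. So it suffices to show
$$\sem{\sigma(i,\lambda\_.j)}=\sem{i\sqcap j}.$$
Since we work with $\Omega$ as the universe of h-props, where equality is logical equivalence, this further reduces to exhibiting maps in both directions.

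The first step is to compute both sides with the earlier axioms. Axiom \ref{ax-l-alg} gives
$$\sem{\sigma(i,\lambda\_.j)}=\sum_{\phi:\sem{i}}\sem{(\lambda\_.j)(\phi)}=\sum_{\phi:\sem{i}}\sem{j},$$
after $\beta$-reducing the constant function. Axiom \ref{ax-lattice} gives $\sem{i\sqcap j}=\sem{i}\wedge\sem{j}$.

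The second step observes that a $\Sigma$-type over a constant family is the product $\sem{i}\times\sem{j}$, which is exactly $\sem{i}\wedge\sem{j}$. The two directions are simply $(\phi,\psi)\mapsto(\phi,\psi)$ each way, or equivalently an appeal to propositional extensionality. Injectivity of $\sem{-}$ then yields $\sigma(i,\lambda\_.j)=i\sqcap j$.

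No step is genuinely hard. The only point needing care is the identification of the non-dependent $\Sigma$ with the conjunction in $\Omega$; both are h-props and are logically equivalent, so the h-prop convention closes the gap.
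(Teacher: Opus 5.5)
Your proposal is correct and matches the paper's proof. Both compute $\sem{\sigma(i,\lambda\_.j)}=\sum_{\phi:\sem{i}}\sem{j}=\sem{i}\wedge\sem{j}=\sem{i\sqcap j}$ using Axioms \ref{ax-l-alg} and \ref{ax-lattice}, and then conclude by injectivity of $\sem{-}$ from Axiom \ref{ax-prop}.
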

\begin{proof}
    By Axiom \ref{ax-lattice} and \ref{ax-l-alg}, we have
    $$\sem{\sigma(i,\lambda\_.j)}=\sum_{\phi:\sem{i}}\sem{j}=\sem{i}\wedge\sem{j}=\sem{i\sqcap j}$$
    Since $\sem{-}$ is monic by Axiom \ref{ax-prop}, $\sigma(i,\lambda\_.j)=i\sqcap j$.
\end{proof}
\begin{lemma}
    \label{lemma-sigma-eta}
    For any $(i,j)\in L\I$ and $\phi\in\sem{i}$, $j(\phi)=\sigma(i,j)$.
\end{lemma}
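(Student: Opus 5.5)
Since $\sem{-}$ is an embedding by Axiom \ref{ax-prop}, it suffices to show that $\sem{j(\phi)}$ and $\sem{\sigma(i,j)}$ are logically equivalent propositions. The two sides are then equal in $\Omega$, because we assume $\Omega$ consists of h-props identified up to logical equivalence. Injectivity of $\sem{-}$ then gives $j(\phi)=\sigma(i,j)$ in $\I$. This is the same strategy as in the proof of Lemma \ref{lemma-sigma-meet}.

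By Axiom \ref{ax-l-alg} we have $\sem{\sigma(i,j)}=\sum_{\psi:\sem{i}}\sem{j(\psi)}$, so we produce maps in both directions. Forward: given $p:\sem{j(\phi)}$, the pair $(\phi,p)$ is an element of $\sum_{\psi:\sem{i}}\sem{j(\psi)}$. Backward: given $(\psi,q)$ with $\psi:\sem{i}$ and $q:\sem{j(\psi)}$, we must produce an element of $\sem{j(\phi)}$. Here we use that $\sem{i}=(i=1)$ is a proposition, because $\I$ is a set. Hence $\psi=\phi$, and transporting $q$ along this path in the family $\lambda\psi.\sem{j(\psi)}$ gives an element of $\sem{j(\phi)}$.

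The only mildly delicate point is the backward direction, which needs $\sem{i}$ to be a proposition so that the dependence of $j$ on its proof argument is trivial. This follows from the standing assumption that types, and in particular $\I$, are h-sets, equivalently from Axiom \ref{ax-prop} placing $\sem{i}$ in $\Omega$. Once that is in place, the rest is the routine propositional-extensionality step.

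Alternatively, one could derive $j(\phi)\sqsubseteq\sigma(i,j)$ and $\sigma(i,j)\sqsubseteq j(\phi)$ via Axiom \ref{ax-lattice} and conclude by antisymmetry of the lattice order. Each of those two inequalities is one of the directions above, so this is the same argument in different notation.
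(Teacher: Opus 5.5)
Your proof is correct, but it is organised differently from the paper's. The paper uses $\phi : i = 1$ to substitute $1$ for $i$. It then uses $\sem{1}=\top$ to replace $j$ by the constant function $\lambda\_.j(*)$. It finishes with the equational chain $\sigma(1,\lambda\_.j(*)) = 1\sqcap j(*) = j(*)$, which invokes Lemma~\ref{lemma-sigma-meet} and the lattice law $1\sqcap x = x$. You skip both the reduction to the constant case and Lemma~\ref{lemma-sigma-meet}. Instead you rerun the injectivity-plus-propositional-extensionality argument directly on the general family $j$, and get the backward map from proof irrelevance of $\sem{i}$.

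The paper's route is shorter and more algebraic because it reuses the previous lemma. It also matches the formalisation, where $\sqcap$ is defined as $\sigma(i,\lambda\_.j)$. Yours is self-contained: it needs only the defining equation of $\sigma$. It also shows exactly where the h-set assumption on $\I$ enters. In the paper that assumption is hidden in the step ``replace any $\phi\in\sem{i}$ with $*\in\top$'' and in the identification $j=\lambda\_.j(*)$, both of which rely on the same proof irrelevance. Underneath, both arguments rest on the same three facts: injectivity of $\sem{-}$, propositional extensionality in $\Omega$, and $\sem{i}$ being a proposition.
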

\begin{proof}
    Whenever the proposition $\sem{i}\in\Omega$ is inhabited, we can always replace $i$ with $1$ and any $\phi\in\sem{i}$ with $*\in\top$. Thus, we have
    $$\sigma(i,j)=\sigma(1,j)=\sigma(1,\lambda\_.j(*))=1\sqcap j(*)=j(*)=j(\phi)$$
\end{proof}

Although the $\sigma$-structure is powerful enough to prove $L\I\cong\Delta^2$, we need to generalise it to higher dimensions to prove our conjecture in any dimension.
\begin{definition}
    \label{def-sigma-n}
    The $\sigma$-structure on $\I$ can be generalised to $L$-algebras on any $n$-cube $\I^n$ given by
    $$\sigma_n(i,j:\sem{i}\to\I^n):=(\sigma(i,\pi_0\circ j),\dots,\sigma(i,\pi_{n-1}\circ j))$$
    which is called the \textbf{$\sigma$-structure} on $\I^n$.
\end{definition}
Likewise, we have the higher-dimensional counterparts for Lemma \ref{lemma-sigma-le}, \ref{lemma-sigma-meet}, and \ref{lemma-sigma-eta}.
\begin{lemma}
    \label{lemma-sigma-le-n}
    For any $(i,j)\in L\I^n$ and any $0\le k< n$, we have
    $$\pi_k(\sigma_n(i,j))\sqsubseteq i$$
\end{lemma}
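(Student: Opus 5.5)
The claim reduces directly to Lemma \ref{lemma-sigma-le} by unfolding Definition \ref{def-sigma-n}. Fix $(i,j)\in L\I^n$ with $j:\sem{i}\to\I^n$ and $0\le k<n$. By the definition of $\sigma_n$, the $k$-th component of $\sigma_n(i,j)$ is computed by applying the one-dimensional $\sigma$-structure to the $k$-th projection of $j$:
$$\pi_k(\sigma_n(i,j))=\sigma(i,\pi_k\circ j).$$
The first step is to record this equation. It holds definitionally, since projecting out of the explicitly given tuple just picks its $k$-th entry.

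Next, observe that $(i,\pi_k\circ j)$ is itself an element of $L\I$. Indeed, $\pi_k\circ j:\sem{i}\to\I$, so the pair has exactly the shape $\sum_{i:\I}(\sem{i}\to\I)$. Equivalently, it is $(L\pi_k)(i,j)$ by the definition of the action of $L$ on morphisms. We may therefore apply Lemma \ref{lemma-sigma-le} to it, which gives $\sigma(i,\pi_k\circ j)\sqsubseteq i$. Substituting the equation from the first step yields $\pi_k(\sigma_n(i,j))\sqsubseteq i$, as required.

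There is no real obstacle here. The only point to check is that the componentwise definition of $\sigma_n$ gives the projection equation on the nose rather than only up to a path. In the Agda formalisation this is immediate if $\I^n$ is represented as an iterated product or a function $\mathrm{Fin}\,n\to\I$, and the equation then follows by reflexivity. If instead one prefers an argument that avoids unfolding $\sigma$, one can go back to Axiom \ref{ax-l-alg}. The first projection of $\sum_{\phi:\sem{i}}\sem{(\pi_k\circ j)(\phi)}$ gives a map $\sem{\pi_k(\sigma_n(i,j))}\to\sem{i}$, and Axiom \ref{ax-lattice} then converts this map into the inequality.
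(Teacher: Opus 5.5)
Your proposal is correct and matches the paper's approach: the paper states that this lemma is a direct consequence of Definition \ref{def-sigma-n} and the one-dimensional case. That is exactly your reduction $\pi_k(\sigma_n(i,j))=\sigma(i,\pi_k\circ j)$ followed by Lemma \ref{lemma-sigma-le} applied to $(i,\pi_k\circ j)\in L\I$, and your alternative via the first projection out of $\sum_{\phi:\sem{i}}\sem{\pi_k(j(\phi))}$ simply inlines the proof of Lemma \ref{lemma-sigma-le}.
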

\begin{lemma}
    \label{lemma-sigma-meet-n}
    For any $i\in\I$ and $j=(j_0,\dots,j_{n-1})\in\I^n$, we have
    $$\sigma(i,\lambda\_.j)=(i\sqcap j_0,\dots,i\sqcap j_{n-1})$$
\end{lemma}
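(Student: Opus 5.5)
The plan is to reduce the statement componentwise to the one-dimensional Lemma \ref{lemma-sigma-meet}. Note that the $\sigma$ in the statement is to be read as the $n$-dimensional structure $\sigma_n$ of Definition \ref{def-sigma-n}, applied to the element $(i,\lambda\_.j)\in L\I^n$. Since $\I^n$ is a product, two of its elements are equal exactly when all of their projections $\pi_k$, $0\le k<n$, are equal. So it suffices to fix an arbitrary $k$ and compare the $k$-th components of both sides.

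Unfolding Definition \ref{def-sigma-n}, the $k$-th component of $\sigma_n(i,\lambda\_.j)$ is $\sigma(i,\pi_k\circ(\lambda\_.j))$. By definitional $\beta$-reduction of the composite, $\pi_k\circ(\lambda\_.j)=\lambda\_.\pi_k(j)=\lambda\_.j_k$. Hence this component is $\sigma(i,\lambda\_.j_k)$. Lemma \ref{lemma-sigma-meet}, applied to $i$ and $j_k$, gives $\sigma(i,\lambda\_.j_k)=i\sqcap j_k$, which is exactly the $k$-th component of the right-hand side.

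Collecting the components gives the tuple equation. In the type theory this is immediate, because the $n$-tuple is built by pairing and equality of pairs is componentwise. For a general $n$ one can run an induction on $n$, with $\I^0=\top$ trivial and $\I^{n+1}=\I\times\I^n$ handled by the head case plus the inductive hypothesis.

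I expect no real obstacle. The only care needed is notational, namely recognising that the constant function composed with a projection is again a constant function, so that Lemma \ref{lemma-sigma-meet} applies verbatim.
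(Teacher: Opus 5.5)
Your proposal is correct and takes the same route as the paper, which says only that the lemma follows directly from Definition \ref{def-sigma-n} and the one-dimensional Lemma \ref{lemma-sigma-meet}. Your componentwise computation $\pi_k(\sigma_n(i,\lambda\_.j))=\sigma(i,\lambda\_.j_k)=i\sqcap j_k$ fills in exactly that, and reading $\sigma$ as $\sigma_n$ is the intended interpretation.
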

\begin{lemma}
    \label{lemma-sigma-eta-n}
    For any $(i,j)\in L\I^n$ and $\phi\in\sem{i}$, $j(\phi)=\sigma_n(i,j)$.
\end{lemma}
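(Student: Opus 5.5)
The statement is Lemma \ref{lemma-sigma-eta-n}, and the plan is to reduce it componentwise to its one-dimensional counterpart, Lemma \ref{lemma-sigma-eta}. Two elements of $\I^n$ are equal exactly when all their projections agree, since $\I^n$ is an iterated product. So it suffices to show, for every $0\le k<n$, that
$$\pi_k(j(\phi))=\pi_k(\sigma_n(i,j)).$$

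First, I unfold Definition \ref{def-sigma-n}. This gives $\pi_k(\sigma_n(i,j))=\sigma(i,\pi_k\circ j)$ on the nose. The pair $(i,\pi_k\circ j)$ is an element of $L\I$, because $\pi_k\circ j:\sem{i}\to\I$. We are also given $\phi\in\sem{i}$.

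Second, I apply Lemma \ref{lemma-sigma-eta} to this pair and to $\phi$. It yields
$$(\pi_k\circ j)(\phi)=\sigma(i,\pi_k\circ j),$$
and the left-hand side is just $\pi_k(j(\phi))$. Combining the two steps gives the required equality for each $k$. Reassembling the components then gives $j(\phi)=\sigma_n(i,j)$.

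I do not expect any real obstacle. The only point needing care is bookkeeping: the statement of Lemma \ref{lemma-sigma-meet-n} overloads $\sigma$ for $\sigma_n$, and the proof should make clear that $\sigma_n$ is defined projectionwise. In a formal setting such as Agda, the $n$-fold product may be encoded as functions $\mathrm{Fin}\,n\to\I$ or as nested pairs. In that case the reassembly step is function extensionality, or induction on $n$ using $\I^{n+1}=\I\times\I^n$, respectively. Either way it is routine. Lemmas \ref{lemma-sigma-le-n} and \ref{lemma-sigma-meet-n} follow by the same componentwise argument from their one-dimensional counterparts.
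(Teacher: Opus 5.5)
Your proposal is correct and follows essentially the same route as the paper, which states that this lemma is a direct consequence of Definition~\ref{def-sigma-n} and its one-dimensional counterpart Lemma~\ref{lemma-sigma-eta}. You carry that out componentwise: apply Lemma~\ref{lemma-sigma-eta} to $(i,\pi_k\circ j)$, then reassemble the components via equality in the product.
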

All the above lemmata are direct consequences of Definition \ref{def-sigma-n} and the corresponding $1$-dimensional cases.

The $\sigma$-structures above are defined on cubes. Before using them to prove our conjecture, we need to restrict it onto simplices.
\begin{lemma}
    The $\sigma$-structure on $\I^n$ defines a $L$-algebra on $\Delta^n\subseteq\I^n$ by restriction.
\end{lemma}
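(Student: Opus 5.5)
We must show that for $(i,j)\in L\Delta^n$, meaning $i\in\I$ and $j:\sem{i}\to\Delta^n$, the tuple $\sigma_n(i,\iota\circ j)$ lies in $\Delta^n$, where $\iota:\Delta^n\hookrightarrow\I^n$ is the inclusion. Since $\Delta^n$ is a subtype of $\I^n$ cut out by a proposition, it suffices to check that the components $k_m:=\sigma(i,\pi_m\circ j)$ satisfy $k_m\sqsupseteq k_{m+1}$ for each $0\le m<n-1$. The algebra on $\Delta^n$ is then $(i,j)\mapsto\sigma_n(i,\iota\circ j)$ together with this proof, and it is automatically compatible with $\iota$.

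By Axiom \ref{ax-lattice}, $k_{m+1}\sqsubseteq k_m$ is the implication $\sem{k_{m+1}}\to\sem{k_m}$. By Axiom \ref{ax-l-alg}, we have $\sem{k_{m+1}}=\sum_{\phi:\sem{i}}\sem{\pi_{m+1}(j(\phi))}$. So given $\phi:\sem{i}$ and $\psi:\sem{\pi_{m+1}(j(\phi))}$, we need to produce an element of $\sum_{\phi:\sem{i}}\sem{\pi_m(j(\phi))}$. We reuse $\phi$ for the first component.

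For the second component, note that $j(\phi)\in\Delta^n$ gives $\pi_{m+1}(j(\phi))\sqsubseteq\pi_m(j(\phi))$. Again by Axiom \ref{ax-lattice}, this yields a map $\sem{\pi_{m+1}(j(\phi))}\to\sem{\pi_m(j(\phi))}$, which we apply to $\psi$. This closes the argument.

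Alternatively, one can argue via Lemma \ref{lemma-sigma-eta-n}. Under the hypothesis $\sem{k_{m+1}}$, Lemma \ref{lemma-sigma-le-n} gives $\sem{i}$. Then $\sigma_n(i,\iota\circ j)=j(\phi)\in\Delta^n$ directly, so $\sem{k_m}$ holds.

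The only subtle point is bookkeeping. The ordering condition must be checked under the hypothesis $\phi:\sem{i}$, since $j$ can only be applied to such a witness. The $\Sigma$-description of $\sigma$ is exactly what supplies that witness. No further obstacles are expected; the cases $n=0$ and $n=1$ are vacuous.
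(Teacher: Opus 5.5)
Your main argument is correct and is essentially the paper's proof. Like the paper, you unfold $\sem{\sigma(i,\pi_{m+1}\circ j)}$ as $\sum_{\phi:\sem{i}}\sem{\pi_{m+1}(j(\phi))}$ via Axiom \ref{ax-l-alg}, keep the witness $\phi$, and push the second component along the implication supplied by $j(\phi)\in\Delta^n$ and Axiom \ref{ax-lattice}; your alternative via Lemmas \ref{lemma-sigma-le-n} and \ref{lemma-sigma-eta-n} is also sound, since the goal is a proposition and $\sem{k_{m+1}}$ already yields $\sem{i}$.
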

\begin{proof}
    Let $(i,j)\in L\Delta^n$ where $n\ge2$. For any $0\le k\le n-2$,
    $$\forall(\phi:\sem{i}).\pi_k(j(\phi))\sqsupseteq\pi_{k+1}(j(\phi))$$
    Meanwhile, by Definition $\ref{def-sigma}$ and $\ref{def-sigma-n}$, we have
    $$\sem{\pi_k(\sigma_n(i,j))}=\sem{\sigma_n(i,\pi_k\circ j)}=\sum_{\phi:\sem{i}}\sem{\pi_k(j(\phi))}$$
    and similarly
    $$\sem{\pi_{k+1}(\sigma_n(i,j))}=\sum_{\phi:\sem{i}}\sem{\pi_{k+1}(j(\phi))}$$
    Axiom \ref{ax-lattice} assures that the function type $\sem{\pi_{k+1}(j(\phi))}\to\sem{\pi_k(j(\phi))}$ is inhabited, and thus so is $\sem{\pi_{k+1}(\sigma_n(i,j))}\to\sem{\pi_k(\sigma_n(i,j))}$. Therefore, $\sigma_n(i,j)$ gives a descending sequence and is an element of $\Delta^n$.
\end{proof}

We can now prove the conjecture that $\Delta^n\cong L^n\top$
\begin{lemma}[\texttt{SemiLattice.L□↓≡□↓}]
    \label{lemma-l-delta-n}
    For any $n\ge1$, $L\Delta^n\cong\Delta^{n+1}$.
\end{lemma}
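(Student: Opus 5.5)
The plan is to write down explicit maps in both directions and check that they are mutually inverse. The left-to-right map prepends the ``support'' coordinate and applies the $\sigma$-structure to the rest. The right-to-left map uses the first coordinate as the domain of definition.

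Define $f:L\Delta^n\to\Delta^{n+1}$ by
$$f(i,j):=(i,\pi_0(\sigma_n(i,j)),\dots,\pi_{n-1}(\sigma_n(i,j))).$$
By the preceding lemma, $\sigma_n(i,j)\in\Delta^n$, so its coordinates are descending. By Lemma \ref{lemma-sigma-le-n}, $\pi_0(\sigma_n(i,j))\sqsubseteq i$. Hence $f(i,j)$ is a descending $(n+1)$-tuple and lies in $\Delta^{n+1}$.

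Conversely, define $g:\Delta^{n+1}\to L\Delta^n$ by
$$g(i_0,i_1,\dots,i_n):=(i_0,\lambda\_.(i_1,\dots,i_n)).$$
This is well-typed because $(i_1,\dots,i_n)\in\Delta^n$ by the descending condition; we do not even need the hypothesis $\phi:\sem{i_0}$.

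For $f\circ g=\id$, we compute $\sigma_n(i_0,\lambda\_.(i_1,\dots,i_n))$. By Lemma \ref{lemma-sigma-meet-n} it equals $(i_0\sqcap i_1,\dots,i_0\sqcap i_n)$. Since $i_k\sqsubseteq i_0$ for all $k$, each meet $i_0\sqcap i_k$ equals $i_k$. So we recover $(i_0,i_1,\dots,i_n)$. Equality in $\Delta^{n+1}$ is equality of the underlying tuple, since the ordering condition is a proposition.

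For $g\circ f=\id$, we must show $(i,\lambda\_.\sigma_n(i,j))=(i,j)$ in $L\Delta^n$. The first components agree. For the second, we use function extensionality over $\phi:\sem{i}$ and reduce to $\sigma_n(i,j)=j(\phi)$, which is exactly Lemma \ref{lemma-sigma-eta-n}. Again the membership proofs in $\Delta^n$ are propositions, so equality of underlying $\I^n$-tuples suffices.

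The main obstacle is this last step. It needs Lemma \ref{lemma-sigma-eta-n}, whose one-dimensional version rests on substituting $i=1$ whenever $\sem{i}$ is inhabited. This relies on the embedding in Axiom \ref{ax-prop}: from $\phi:\sem{i}$ we get $i=1$, and we transport $j$ along that path, using that $\sem{1}=\top$ has a unique element. Everything else is routine bookkeeping with projections and subtypes. Because the argument is uniform in $n\ge1$, no induction is needed here. Only the iterated corollary $\Delta^n\cong L^n\top$ uses induction, with Lemma \ref{lemma-l-delta-1} as the base case.
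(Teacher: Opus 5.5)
Your proposal is correct and matches the paper's proof: the same maps $f(i,j)=(i,\sigma_n(i,j))$ and $g(i_0,\dots,i_n)=(i_0,\lambda\_.(i_1,\dots,i_n))$, with $f\circ g=\id$ obtained from Lemma~\ref{lemma-sigma-meet-n} and $i_0\sqcap i_k=i_k$, and $g\circ f=\id$ obtained from function extensionality together with Lemma~\ref{lemma-sigma-eta-n}. You are slightly more explicit than the paper on two points. You cite the restriction lemma to get $\sigma_n(i,j)\in\Delta^n$, and you note that the simplex conditions are propositions, so equality in the subtypes reduces to equality of the underlying tuples.
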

\begin{proof}
    We give the isomorphism as follows:
    $$\begin{aligned}
        f(i,j)&:=(i,\sigma_n(i,j))&:L\Delta^n\to\Delta^{n+1} \\
        g(i)&:=(i_0,\lambda\_.(i_1,\dots,i_n))&:\Delta^{n+1}\to L\Delta^n \\
    \end{aligned}$$
    By Lemma $\ref{lemma-sigma-le-n}$, the concatenation $(i,\sigma(i,j))$ is descending and is an element of $\Delta^{n+1}$. This ensures that $f$ is well defined.

    We first show $f\circ g=\id$. Take any $(i_0\sqsupseteq\dots\sqsupseteq i_n)\in\Delta^{n+1}$. We can see
    $$\begin{aligned}
        f(g(i_0,\dots,i_n))=&(i_0,\sigma_n(i_0,\lambda\_.(i_1,\dots,i_n))) \\
        =&(i_0,i_0\sqcap i_1,\dots,i_0\sqcap i_n) \\
        =&(i_0,i_1,\dots,i_n) \\
    \end{aligned}$$
    where $i_0\sqcap i_k=i_k$ for any $0\le k\le n$ since $i_0\sqsupseteq i_k$.

    We then show $g\circ f=\id$. Take any $(i,j)\in L\Delta^n$. We can see
    $$g(f(i,j))=(i,\lambda\_.\sigma_n(i,j))$$
    We will show that $j=\lambda\_.\sigma_n(i,j)$. By the extensionality of functions, this is equivalent to showing that $j(\phi)=\sigma_n(i,j)$ for any $\phi\in\sem{i}$, which is exactly Lemma \ref{lemma-sigma-eta-n}.
\end{proof}
Combining Lemma \ref{lemma-l-delta-0}, \ref{lemma-l-delta-1}, and \ref{lemma-l-delta-n} together, we can immediately reach our goal.
\begin{theorem}[\texttt{SemiLattice.Δ≡□↓}]
    \label{thm-l-delta}
    By iteratively composing the maps defined in Lemma \ref{lemma-l-delta-n}, we can get the following isomorphism:
    $$\Delta^n\cong L^n\top\cong L^{n+1}\bot$$
\end{theorem}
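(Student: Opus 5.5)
The plan is induction on $n$, using the functoriality of $L$ to transport isomorphisms. First I will note that $L$ preserves isomorphisms. If $\alpha:A\cong B$ with inverse $\beta$, then $L\alpha$ and $L\beta$ are mutually inverse, because $L(\beta\circ\alpha)=L\beta\circ L\alpha$ and $L\id=\id$. Both identities follow directly from the definition $Lf(i,x)=(i,\lambda\phi.f(x(\phi)))$ by function extensionality.

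The second isomorphism $L^n\top\cong L^{n+1}\bot$ comes straight from this. Lemma \ref{lemma-l-delta-0} gives $\top\cong L\bot$, and applying $L^n$ yields $L^n\top\cong L^n(L\bot)=L^{n+1}\bot$. So the substantive part is $\Delta^n\cong L^n\top$, which I will prove by induction on $n\ge0$:
\begin{itemize}
\item Case $n=0$: $\Delta^0=\top=L^0\top$ by definition.
\item Case $n=1$: $\Delta^1=\I\cong L\top$ by Lemma \ref{lemma-l-delta-1}. Here $\Delta^1$ is the set of one-element tuples with no order constraint, so it identifies with $\I$.
\item Inductive step, $n\ge1$: assume $\Delta^n\cong L^n\top$. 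Then
$$\Delta^{n+1}\cong L\Delta^n\cong L(L^n\top)=L^{n+1}\top.$$
The first isomorphism is Lemma \ref{lemma-l-delta-n}, and the second is $L$ applied to the induction hypothesis.
\end{itemize}
Composing these steps gives the iterated map described in the statement. Explicitly, an element of $\Delta^n$ is sent by the inverse map $g$ of Lemma \ref{lemma-l-delta-n} to a pair $(i_0,\lambda\_.(i_1,\dots,i_{n-1}))$, and the same is done recursively inside.

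The main obstacle is bookkeeping rather than mathematics, especially in a formalised setting. Two points need care. The first is identifying $\Delta^1$ with $\I$ and $\Delta^0$ with $\top$ so that the base cases line up with Lemma \ref{lemma-l-delta-n}, which only covers $n\ge1$. The second is checking that the functorial action of $L$ really gives inverse maps, since $Lf$ acts under the binder $\phi:\sem{i}$.

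I would handle the second point by proving once that $L$ preserves composition and identities, using function extensionality on the component $\sem{i}\to A$. If the setting is univalent, an alternative is to turn each isomorphism into a path of types and transport along $\mathrm{ap}_L$. This avoids the functoriality lemmas entirely, at the cost of a less explicit map.
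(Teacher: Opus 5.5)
Your proof is correct and takes essentially the same approach as the paper, which proves the theorem in one line by combining Lemmas \ref{lemma-l-delta-0}, \ref{lemma-l-delta-1}, and \ref{lemma-l-delta-n} iteratively. Your induction, together with the explicit check that $L$ preserves isomorphisms by functoriality, spells out the bookkeeping the paper leaves implicit.
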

This theorem relates the simplices in homotopy theory to the lifting functor in domain theory. It will enable us to give a homotopy interpretation of the initial (final) $L$-(co)algebra in the future. Before that, we will first investigate the topology of simplices using the Phoa principle.

%% file: chapters/4-phoa.tex

\chapter{The Phoa Principle and the Topology of Simplices}
\label{chapter-phoa}

This chapter investigates the topology of the simplices using the Phoa principle and its generalisation in higher dimensions \cite{pugh_when_2025}. The original Phoa principle asserts that any function $p:\I\to\I$ is monotonic and is uniquely identified by its value at two endpoints, namely $p(0)$ and $p(1)$. Pugh and Sterling generalised it to give similar results on functions from higher simplices to the interval \cite{pugh_when_2025}. We will see that these results identify the topology of simplices with a series of objects with different homotopy structures called spines.

\section{The Phoa Principle}
The homotopy in SDT is expected to give the information order for any type. For the interval type, this means that every path in the interval type witnesses the information order between its two endpoints, or more specifically, the paths on $\I$ shall have the following properties:
\begin{enumerate}
    \item Every function $p:\I\to\I$ satisfies $p(0)\sqsubseteq p(1)$, and
    \item For any $i,j\in\I$ such that $i\sqsubseteq j$, there exists a unique function $p:\I\to\I$ satisfying $p(0)=i$ and $p(1)=j$.
\end{enumerate}
The lattice structure of $\I$ as postulated in Axiom \ref{ax-lattice} already subsumes the existence in the second property as
$$p(t):=i\sqcup(t\sqcap j)$$
automatically satisfies $p(0)=i$ and $p(1)=j$ if $i\sqsubseteq j$. Uniqueness is required in the second property since the paths on $\I$ shall illustrate a preorder, that is, a thin category. Combining the uniqueness and the existence, one will see the following axiom:
\begin{axiom}[Interpolation axiom]
    \label{ax-interpole}
    For any function $p:\I\to\I$,
    $$p(i)=p(0)\sqcup(i\sqcap p(1))$$
\end{axiom}
One can interpret this axiom as stating that every function $\I\to\I$ can be linearly interpolated. This interpolation implies a stronger variant of the first aforementioned property:
\begin{lemma}[\texttt{Lattice.SMonotone}]
    \label{lemma-mono}
    Every function $\I\to\I$ is monotonic.
\end{lemma}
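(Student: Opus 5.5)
The plan is to read monotonicity directly off the interpolation formula in Axiom \ref{ax-interpole}. Fix $p:\I\to\I$ and $i,j\in\I$ with $i\sqsubseteq j$; we must show $p(i)\sqsubseteq p(j)$. By Axiom \ref{ax-interpole},
$$p(i)=p(0)\sqcup(i\sqcap p(1)),\qquad p(j)=p(0)\sqcup(j\sqcap p(1)).$$
The whole argument then reduces to monotonicity of the lattice operations of Axiom \ref{ax-lattice} in each argument.

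First, $i\sqsubseteq j$ gives $i\sqcap p(1)\sqsubseteq j\sqcap p(1)$. Through the embedding $\sem{-}$ this is just the implication $\sem{i}\wedge\sem{p(1)}\to\sem{j}\wedge\sem{p(1)}$, which follows from $\sem{i}\to\sem{j}$. Second, joining with $p(0)$ preserves the order: $\sem{p(0)}\vee\sem{i\sqcap p(1)}\to\sem{p(0)}\vee\sem{j\sqcap p(1)}$. Rewriting both sides with the interpolation equations gives $p(i)\sqsubseteq p(j)$. Alternatively, one can stay purely lattice-theoretic, using $a\sqsubseteq b \iff a\sqcap b=a$ together with distributivity; the propositional route is shorter given how $\sqsubseteq$ is defined in Axiom \ref{ax-lattice}.

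None of these steps is a real obstacle. The only care needed is to state precisely which lattice monotonicity facts are used, since $\sqsubseteq$ is defined as implication of the underlying propositions rather than via meets. If this becomes awkward in the formalisation, I would first prove a small helper lemma saying that $\sqcap$ and $\sqcup$ are monotone in each argument, and derive the statement from it. As a sanity check, setting $i=0$ and $j=1$ recovers the expected $p(0)\sqsubseteq p(1)$, which is the first desired property listed before Axiom \ref{ax-interpole}.
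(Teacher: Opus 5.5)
Your proof is correct and follows essentially the same route as the paper. The paper obtains this lemma directly from the interpolation formula of Axiom~\ref{ax-interpole}. You additionally spell out the monotonicity of $\sqcap$ and $\sqcup$, via the embedding $\sem{-}$ and Axiom~\ref{ax-lattice}, which the paper leaves implicit.
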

Another important result one can immediately obtain from Axiom \ref{ax-interpole} is the Phoa principle.
\begin{theorem}[Phoa principle, \texttt{Lattice.Phoa}]
    \label{thm-phoa}
    The precomposition of inclusion $\set{0,1}\subseteq\I$ gives an isomorphism
    $$\I^\I\cong\Delta_2=\set{(i,j)\in\I^2:i\sqsubseteq j}$$
\end{theorem}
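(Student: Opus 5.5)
The isomorphism is built explicitly from the restriction map and the interpolation formula. Define
$$F:\I^\I\to\Delta_2,\qquad F(p):=(p(0),p(1)),$$
and
$$G:\Delta_2\to\I^\I,\qquad G(i,j):=\lambda t.\,i\sqcup(t\sqcap j).$$
The proof then checks that both maps are well defined and mutually inverse.

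First, $F$ is well defined. Its output must satisfy $p(0)\sqsubseteq p(1)$. This follows from Lemma \ref{lemma-mono}, since $p$ is monotonic and $0\sqsubseteq1$. It can also be read off directly from Axiom \ref{ax-interpole} at $i=1$: $p(1)=p(0)\sqcup p(1)$, so $p(0)\sqsubseteq p(1)$ in the lattice of Axiom \ref{ax-lattice}.

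Second, $G$ lands in $\I^\I$ trivially. To show $F\circ G=\id$, compute with the bounded distributive lattice laws:
$$G(i,j)(0)=i\sqcup(0\sqcap j)=i,\qquad G(i,j)(1)=i\sqcup j=j,$$
where the last equality uses the hypothesis $i\sqsubseteq j$.

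Third, for $G\circ F=\id$, take any $p:\I\to\I$. Then $G(F(p))=\lambda t.\,p(0)\sqcup(t\sqcap p(1))$. By Axiom \ref{ax-interpole}, this agrees with $p(t)$ for every $t$, so function extensionality gives $G(F(p))=p$.

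None of these steps should pose a real obstacle, since Axiom \ref{ax-interpole} was tailored to make the Phoa principle immediate. The only point requiring care is that $\Delta_2$ is a subtype $\sum_{(i,j)}(i\sqsubseteq j)$. Equality of its elements therefore needs the order component to be a proposition. That holds because $i\sqsubseteq j$ is $\sem{i}\to\sem{j}$, a function into an h-prop, so equality in $\Delta_2$ reduces to equality of the underlying pairs.
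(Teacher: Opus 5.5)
Your proposal is correct and follows the paper's approach: the paper derives the Phoa principle directly from Axiom \ref{ax-interpole}, with existence of the path given by $t\mapsto i\sqcup(t\sqcap j)$, uniqueness by the interpolation axiom, and $p(0)\sqsubseteq p(1)$ by monotonicity. Your explicit mutual-inverse check, including the remark that $\Delta_2$ is a subtype because $\sqsubseteq$ is h-propositional, is just the spelled-out version of that argument.
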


\section{The Higher Phoa Principle}
One may ask if the results can be generalised in larger dimensions\footnote{Though the dimension of a type is not formally defined, one may still see this as referring to subsets of $n$-cubes.}. A simple example will be the generalisation of the monotonicity:
\begin{lemma}[\cite{pugh_when_2025}, \texttt{Lattice.SMonotoneN}]
    \label{lemma-mono-n}
    For any $n\in\N$ and $f:\I^n\to\I$, $f$ is monotonic with respect to the product poset on $\I^n$.
\end{lemma}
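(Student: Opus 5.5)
The plan is to reduce to the one-variable case, Lemma \ref{lemma-mono}, by varying one coordinate at a time. Monotonicity with respect to the product order means: if $(i_0,\dots,i_{n-1})\sqsubseteq(j_0,\dots,j_{n-1})$ componentwise, then $f(i_0,\dots,i_{n-1})\sqsubseteq f(j_0,\dots,j_{n-1})$. The case $n=0$ is trivial, since $\I^0=\top$ and reflexivity of $\sqsubseteq$ suffices. For $n\ge1$ I would argue by induction on $n$, or equivalently by a chain of $n$ single-coordinate steps.

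The key step is monotonicity in each coordinate separately. Fix $k$ and all coordinates except the $k$-th, and consider the section
$$p(t):=f(i_0,\dots,i_{k-1},t,i_{k+1},\dots,i_{n-1}):\I\to\I.$$
By Lemma \ref{lemma-mono}, $p$ is monotonic, so $i_k\sqsubseteq j_k$ implies $p(i_k)\sqsubseteq p(j_k)$. If one wants to avoid citing Lemma \ref{lemma-mono}, the same conclusion follows from Axiom \ref{ax-interpole}: $p(t)=p(0)\sqcup(t\sqcap p(1))$, and the right-hand side is monotone in $t$ because $\sqcap$ and $\sqcup$ are monotone in any lattice.

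Then chain the coordinate changes. Set $x^{(m)}:=(j_0,\dots,j_{m-1},i_m,\dots,i_{n-1})$ for $0\le m\le n$, so that $x^{(0)}=i$ and $x^{(n)}=j$. Consecutive tuples $x^{(m)}$ and $x^{(m+1)}$ differ only in coordinate $m$, where $i_m\sqsubseteq j_m$. The previous step therefore gives $f(x^{(m)})\sqsubseteq f(x^{(m+1)})$, and transitivity of $\sqsubseteq$ on $\I$ yields $f(i)\sqsubseteq f(j)$. In the inductive formulation, one curries $f$ as $\I\to(\I^{n-1}\to\I)$ and applies the induction hypothesis to $f(j_0,-)$ together with the one-variable case to $\lambda t.f(t,i_1,\dots,i_{n-1})$.

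I do not expect a real obstacle here. The only care needed is bookkeeping: keep the intermediate tuples in $\I^n$ (not in the simplex), and remember that $\sqsubseteq$ on $\I$ is a genuine partial order, since Axiom \ref{ax-lattice} identifies it with implication, which makes transitivity available. In the formalisation, the fiddliest part is likely the currying and uncurrying of $\I^n$ as iterated products.
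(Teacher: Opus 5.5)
Your proposal is correct and takes the same approach as the paper, which also chains the single-coordinate steps $f(i_1,\dots,i_n)\sqsubseteq f(j_1,i_2,\dots,i_n)\sqsubseteq\dots\sqsubseteq f(j_1,\dots,j_n)$; each step is one-variable monotonicity (Lemma \ref{lemma-mono}) and the steps combine by transitivity. Your intermediate tuples $x^{(m)}$, the $n=0$ case, and the remark that transitivity comes from $\sqsubseteq$ being implication spell out details the paper leaves implicit.
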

\begin{proof}
    For any $i=(i_1,\dots,i_n)$ and $j=(j_1,\dots,j_n)$ such that $i\sqsubseteq j$, $i_k\sqsubseteq j_k$ for any $1\le k\le n$. Therefore,
    $$\begin{aligned}
        f(i)=&f(i_1,i_2,\dots,i_n) \\
        \sqsubseteq&f(j_1,i_2,\dots,i_n) \\
        \sqsubseteq&f(j_1,j_2,\dots,i_n) \\
        \dots& \\
        \sqsubseteq&f(j_1,j_2,\dots,j_n)=f(j) \\
    \end{aligned}$$
\end{proof}

Recall that our goal is to study the topology of any $n$-simplex $\Delta^n$, which means to find more information about functions of type $\Op(\Delta^n):=\Delta^n\to\I$. The lemma above surprisingly leads to a generalised interpolation of these functions. Analogously to the two endpoints of $\I$, the $n$-simplex has $n+1$ vertices:
\begin{definition}
    The $k$-th \textbf{vertex} counting from $0$ of the $n$-simplex $\Delta^n$ is
    $$v_k:=(1^k0^{n-k})=(\overbrace{1,\dots,1}^\text{$k$ copies},\overbrace{0,\dots,0}^\text{$n-k$ copies})$$
\end{definition}
We can see that $v_a\sqsubseteq v_b$ for any $a\le b$. Thus, by Lemma \ref{lemma-mono-n}, $(f(v_0),\dots,f(v_n))$ gives an ascending sequence of $\I$ and hence is an element of $\Delta_{n+1}$. This process of ``sampling'' a function on vertices defines an embedding from the set of functions $\Delta^n\to\I$ to $\Delta_{n+1}$.

Intuitively, we conjecture that $\I^{\Delta^n}\cong\Delta_{n+1}$, which we call the higher Phoa principle. To prove this, it is necessary to find the inverse of the embedding $\I^{\Delta^n}\hookrightarrow\Delta^{n+1}$, which, as given in \cite{pugh_when_2025}, turns out to be
$$(j_0,\dots,j_n)\in\Delta_{n+1}\mapsto\lambda(i_1,\dots,i_n).j_0\sqcup\bigsqcup_{k=1}^ni_k\sqcap j_k$$
\begin{theorem}[Higher Phoa principle \cite{pugh_when_2025}, \texttt{Lattice.SMonotoneN}]
    \label{thm-phoa-n}
    The evaluation on vertices $\set{v_0,\dots,v_n}\subseteq\Delta^n$ gives an isomorphism
    $$\I^{\Delta^n}\cong\Delta_{n+1}$$
\end{theorem}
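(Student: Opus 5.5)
The plan is to show directly that evaluation on vertices, $e(f):=(f(v_0),\dots,f(v_n))$, and the interpolation map
$$h(j_0,\dots,j_n):=\lambda(i_1,\dots,i_n).\,j_0\sqcup\bigsqcup_{k=1}^n i_k\sqcap j_k$$
are mutually inverse. The map $h$ needs no well-definedness check, since any lattice expression gives a function $\Delta^n\to\I$.

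For $e$ to land in $\Delta_{n+1}$ we need $f(v_0)\sqsubseteq\dots\sqsubseteq f(v_n)$. Lemma \ref{lemma-mono-n} is stated for $f:\I^n\to\I$, not for $f:\Delta^n\to\I$, and a function on $\Delta^n$ cannot obviously be extended to $\I^n$. I would therefore not rely on it. Instead I would read monotonicity off the interpolation formula proved below: once $f=h(f(v_0),\dots,f(v_n))$ pointwise, the right-hand side is visibly monotone. In fact the induction below gives, at each stage, the ascending property of the vertex values directly.

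The identity $e\circ h=\id$ is a routine computation. Evaluating $h(j)$ at $v_k=(1^k0^{n-k})$ gives $j_0\sqcup j_1\sqcup\dots\sqcup j_k$, which equals $j_k$ because $j$ is ascending.

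The main step is $h\circ e=\id$, that is, the generalised interpolation
$$f(i_1,\dots,i_n)=f(v_0)\sqcup\bigsqcup_{k=1}^n i_k\sqcap f(v_k)\qquad(\ast)$$
for every $f:\Delta^n\to\I$. I would prove $(\ast)$ by induction on $n$. The case $n=0$ is trivial, and $n=1$ is exactly Axiom \ref{ax-interpole}.

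For the inductive step, fix $(i_1\sqsupseteq\dots\sqsupseteq i_n)$ and consider the path $q(t):=f(i_2\sqcup t,i_2,\dots,i_n)$. It stays inside $\Delta^n$ because $i_2\sqcup t\sqsupseteq i_2$. Since $i_1=i_2\sqcup i_1$, Axiom \ref{ax-interpole} gives
$$f(i)=q(i_1)=f(i_2,i_2,\dots,i_n)\sqcup\bigl(i_1\sqcap f(1,i_2,\dots,i_n)\bigr).$$
Both terms are functions on $\Delta^{n-1}$ of the variables $(i_2,\dots,i_n)$:
\begin{itemize}
\item $g_1(x):=f(x_1,x_1,x_2,\dots)$ is $f$ restricted along the diagonal;
\item $g_2(x):=f(1,x_1,x_2,\dots)$ is $f$ restricted to the face $i_1=1$.
\end{itemize}
Under the diagonal, the vertices $w_0,\dots,w_{n-1}$ of $\Delta^{n-1}$ map to $v_0,v_2,\dots,v_n$. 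Under the face inclusion they map to $v_1,v_2,\dots,v_n$.

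Applying the inductive hypothesis to $g_1$ and $g_2$ gives
$$f(i)=f(v_0)\sqcup\bigsqcup_{k\ge2} i_k\sqcap f(v_k)\;\sqcup\;i_1\sqcap\Bigl(f(v_1)\sqcup\bigsqcup_{k\ge2}i_k\sqcap f(v_k)\Bigr).$$
Distributing and using $i_k\sqsubseteq i_1$, the terms $i_1\sqcap i_k\sqcap f(v_k)$ are absorbed by $i_k\sqcap f(v_k)$. This yields exactly $(\ast)$.

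I expect the main obstacle to be choosing this reduction path correctly. The naive path $t\mapsto f(t,i_2,\dots)$ leaves the simplex, so one needs the $i_2\sqcup t$ trick, together with the bookkeeping of how the diagonal and face maps act on vertices. The final absorption step uses only distributivity from Axiom \ref{ax-lattice}.
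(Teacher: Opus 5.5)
Your proof is correct, but the inductive step is organised differently from the paper's.

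The paper freezes $i_1$ and applies the induction hypothesis once, to $g(x_1,\dots,x_{n-1}):=f(i_1\sqcup x_1,x_1,\dots,x_{n-1})$. The vertex values of $g$ are $f(i_1,0,\dots,0)$ followed by $f(v_2),\dots,f(v_n)$. Axiom \ref{ax-interpole} is then used only on the fixed edge $t\mapsto f(t,0,\dots,0)$, to rewrite $g(v_0)$ as $f(v_0)\sqcup(i_1\sqcap f(v_1))$. Evaluating $g$ at $(i_2,\dots,i_n)$ then gives $(\ast)$ directly, with no cross terms to absorb.

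You go in the opposite order. You apply Axiom \ref{ax-interpole} first, along the segment $t\mapsto(i_2\sqcup t,i_2,\dots,i_n)$, which varies with the point. You then use the hypothesis twice, and finally distributivity together with $i_k\sqsubseteq i_1$ to absorb the cross terms.

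The paper's route is more economical. Yours has the advantage that $g_1$ and $g_2$ are restrictions of $f$ to two fixed faces of $\Delta^n$, those opposite $v_1$ and $v_0$. This makes explicit that $\Delta^n$ is swept out by segments joining these two faces.

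On well-definedness of the evaluation map, you are right that Lemma \ref{lemma-mono-n}, which the paper cites just before the theorem, is stated for $\I^n$ rather than $\Delta^n$. This is easily repaired: apply Lemma \ref{lemma-mono} to the edge $t\mapsto f(1^{k-1},t,0^{n-k})$ from $v_{k-1}$ to $v_k$. Your alternative is also sound: read $f(v_{k-1})\sqsubseteq f(v_k)$ off $(\ast)$ evaluated at $v_k$. This is not circular, since your proof of $(\ast)$ never uses that the vertex values ascend.
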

\begin{proof}
    We need to prove two parts: The interpolation of values on the vertices reconstructs the function, and the value of a function interpolated from the coefficients in $\Delta_{n+1}$ evaluates to these coefficients on vertices.

    For the first part, we prove this by induction. Assume that $\I^{\Delta^{n-1}}\cong\Delta^n$. Take any $f:\Delta^n\to\I$ and let $j_k:=f(v_k)$ for all $0\le k\le n$. For any $i=(i_1,\dots,i_n)\in\Delta^n$, Let $g:\Delta^{n-1}\to\I$ be
    $$g(i'_1,\dots,i'_{n-1}):=f(i_1\sqcup i_1',i_1',\dots,i'_{n-1})$$
    Thus
    $$\begin{aligned}
        j'_0&:=g(v_0)=f(i_1,0,\dots,0) \\
        j'_k&:=g(v_k)=f(1,1,\dots,0)=f(v_{k+1})=j_{k+1} &1\le k\le n-1 
    \end{aligned}$$
    Since $j'_0$ depends only on $i_1$, by Axiom \ref{ax-interpole} we have
    $$j'_0=f(v_0)\sqcup(i_1\sqcap f(v_1))=j_0\sqcup(i_1\sqcap j_1)$$
    Hence, by the induction hypothesis, we have
    $$g(i'_1,\dots,i'_{n-1})=j'_0\sqcup\bigsqcup_{k=1}^{n-1}i'_k\sqcap j'_k=j_0\sqcup(i_1\sqcap j_1)\sqcup\bigsqcup_{k=1}^{n-1}i'_k\sqcap j_{k+1}$$
    Meanwhile, since $i_1\sqsupseteq i_2$, $i_1\sqcup i_2=i_1$, which gives
    $$\begin{aligned}
        f(i_1,i_2,\dots,i_n)=g(i_2,\dots,i_n)=j_0\sqcup(i_1\sqcap j_1)\sqcup\bigsqcup_{k=1}^{n-1}i_{k+1}\sqcap j_{k+1}=j_0\sqcup\bigsqcup_{k=1}^ni_k\sqcap j_k
    \end{aligned}$$
    Therefore the inductive case is proved. The base case can be simply derived from Theorem \ref{thm-phoa}.

    For the second part, we prove this directly. For any $(j_0,\dots,j_n)\in\Delta_{n+1}$, let
    $$f(i_1,\dots,i_n):=j_0\sqcup\bigsqcup_{k=1}^ni_k\sqcap j_k$$
    Take any $0\le l\le n$, we have
    $$\begin{aligned}
        f(v_l)&=j_0\sqcup\prn{\bigsqcup_{k=1}^l1\sqcap j_k}\sqcup\prn{\bigsqcup_{k=l+1}^n0\sqcap j_k} \\
        &=j_0\sqcup\prn{\bigsqcup_{k=1}^lj_k}\sqcup\prn{\bigsqcup_{k=l+1}^n0} \\
        &=\bigsqcup_{k=0}^lj_k=j_l \\
    \end{aligned}$$
    Therefore the second part is proved.
\end{proof}

\section{Duality between Ascending and Descending Sequences}
One may notice that we do not write the higher Phoa principle in the equivalent form $\I^{\Delta^n}\cong\Delta^{n+1}$, unlike the original statement in \cite{pugh_when_2025}. The reason we distinguish the ascending and descending sequences is due to a certain kind of duality between these two representations of simplices. We can illustrate this duality by studying the lattice structure on simplices.
\begin{lemma}
    $\I^n$, $\Delta^n$, and $\Delta_n$ are bounded distributive lattices under the element-wise joins and meets between sequences.
\end{lemma}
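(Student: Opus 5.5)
The plan is to obtain $\I^n$ as a product lattice and then show that $\Delta^n$ and $\Delta_n$ are sublattices of it containing the top and bottom. By Axiom \ref{ax-lattice}, $(\I,\sqcap,\sqcup,0,1)$ is a bounded distributive lattice. The lattice axioms (associativity, commutativity, absorption, idempotence, distributivity, unit laws) are all equations. Hence they hold in $\I^n$ componentwise when we set
$$i\sqcap j:=(i_0\sqcap j_0,\dots,i_{n-1}\sqcap j_{n-1}),\qquad i\sqcup j:=(i_0\sqcup j_0,\dots,i_{n-1}\sqcup j_{n-1}),$$
with bottom $0^n$ and top $1^n$. The induced order is the product order: $i\sqcap j=i$ iff $i_k\sqcap j_k=i_k$ for all $k$, which uses path equality of tuples being componentwise equality. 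The degenerate cases $n=0$, where $\I^0=\top=\Delta^0$ is the trivial lattice, and $\Delta^{-1}$, if intended, are handled separately or excluded.

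Since equations are inherited by any subset closed under the operations, it then suffices to check that $\Delta^n$ contains $0^n$ and $1^n$ and is closed under $\sqcap$ and $\sqcup$. The bounds are immediate, since $0\sqsupseteq0$ and $1\sqsupseteq1$. For closure, suppose $i_k\sqsupseteq i_{k+1}$ and $j_k\sqsupseteq j_{k+1}$. We need
$$i_k\sqcap j_k\sqsupseteq i_{k+1}\sqcap j_{k+1}\quad\text{and}\quad i_k\sqcup j_k\sqsupseteq i_{k+1}\sqcup j_{k+1},$$
which is monotonicity of $\sqcap$ and $\sqcup$ in each argument. This holds in any lattice, or alternatively follows directly through $\sem{-}$: $\sem{i_{k+1}}\wedge\sem{j_{k+1}}\to\sem{i_k}\wedge\sem{j_k}$, and similarly for $\vee$, using Axiom \ref{ax-lattice}. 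One could also invoke Lemma \ref{lemma-mono-n} applied to $\sqcap,\sqcup:\I^2\to\I$. The case of $\Delta_n$ is identical with the inequalities reversed, or follows by the reversal isomorphism $\Delta_n\cong\Delta^n$ on $\I^n$, which commutes with the componentwise operations.

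Since membership in $\Delta^n$ is a proposition (each $\sqsubseteq$ is a prop, $\I$ being a set), equality in the subtype is equality of underlying tuples. The lattice laws therefore transfer without any coherence issues.

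I expect no genuine obstacle. The only point needing care is the type-theoretic bookkeeping: making sure subtype equality reduces to equality in $\I^n$, and stating monotonicity of $\sqcap$ and $\sqcup$ explicitly rather than silently.
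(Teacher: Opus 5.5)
Your proof is correct, and it is the argument the paper leaves implicit (the lemma is stated there without proof): $\I^n$ is a product of copies of the lattice from Axiom~\ref{ax-lattice}, and $\Delta^n$ and $\Delta_n$ are sublattices containing $0^n$ and $1^n$ because $\sqcap$ and $\sqcup$ are monotone in each argument. Two minor remarks: citing Lemma~\ref{lemma-mono-n} is unnecessary and would bring in the interpolation axiom, since plain lattice monotonicity suffices; and $\Delta^{-1}=\bot$ must be excluded rather than handled separately, because an empty type has no bottom element.
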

With the interpolation property of the higher Phoa principle and the distributivity, we can also find
\begin{lemma}
    Every function $f:\Delta^n\to\I$ preserves the binary joins. That is, for any $i,j\in\Delta^n$, $f(i\sqcup j)=f(i)\sqcup f(j)$.
\end{lemma}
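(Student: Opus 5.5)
By Theorem \ref{thm-phoa-n}, every $f:\Delta^n\to\I$ is determined by the coefficients $j_k:=f(v_k)$, with $(j_0,\dots,j_n)\in\Delta_{n+1}$, and it is given explicitly by
$$f(i_1,\dots,i_n)=j_0\sqcup\bigsqcup_{k=1}^n i_k\sqcap j_k.$$
The plan is to substitute this normal form into both sides of $f(i\sqcup j)=f(i)\sqcup f(j)$ and compare them using only the lattice laws of Axiom \ref{ax-lattice}. To avoid a clash with the coefficient names, I will write the two arguments as $a=(a_1,\dots,a_n)$ and $b=(b_1,\dots,b_n)$ in $\Delta^n$.

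The first step is to note that $a\sqcup b$ is again an element of $\Delta^n$, by the preceding lemma that $\Delta^n$ is a lattice under element-wise operations. Concretely, $a_k\sqsupseteq a_{k+1}$ and $b_k\sqsupseteq b_{k+1}$ give $a_k\sqcup b_k\sqsupseteq a_{k+1}\sqcup b_{k+1}$. So $f(a\sqcup b)$ is defined and has coordinates $a_k\sqcup b_k$.

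The second step is the computation itself. On the left,
$$f(a\sqcup b)=j_0\sqcup\bigsqcup_{k=1}^n (a_k\sqcup b_k)\sqcap j_k.$$
Distributing $\sqcap j_k$ over the join gives $(a_k\sqcap j_k)\sqcup(b_k\sqcap j_k)$. On the right, $f(a)\sqcup f(b)$ is a join containing $j_0$ twice, and $j_0\sqcup j_0=j_0$ by idempotence. Using associativity and commutativity of $\sqcup$ to regroup the terms, both sides become the same finite join
$$j_0\sqcup\bigsqcup_{k=1}^n\bigl((a_k\sqcap j_k)\sqcup(b_k\sqcap j_k)\bigr).$$
I could do this by induction on $n$ to keep the finite joins formal, but a direct rearrangement should suffice.

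I do not expect a genuine obstacle. The only real content is the appeal to Theorem \ref{thm-phoa-n}, which guarantees that an arbitrary $f$, not just an interpolated one, has this normal form. After that, the argument is distributivity together with idempotence of the constant term $j_0$. Idempotence is exactly why the statement is about binary joins of nonempty families. The empty join would give $f(0,\dots,0)=j_0$, which need not equal $0$, so the corresponding statement for the bottom element fails in general and must not be claimed. In the Agda formalisation, the main bookkeeping would be the reassociation of the finite joins, which I would handle with a small lemma that $\bigsqcup_k(x_k\sqcup y_k)=\bigsqcup_k x_k\sqcup\bigsqcup_k y_k$, proved by induction on $n$.
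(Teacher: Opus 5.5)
Your proposal is correct and follows the paper's route. The paper derives this lemma from the interpolation formula of the higher Phoa principle (Theorem \ref{thm-phoa-n}) together with distributivity of $\I$, which is exactly your substitute-and-distribute computation, with idempotence absorbing the duplicated $j_0$; your checks that $a\sqcup b\in\Delta^n$ and that the nullary case fails because $f(0,\dots,0)=j_0$ match the paper's remark that $f(0)$ need not be $0$.
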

If we see the join as an analogy to vector addition, the above result suggests that every functional on $\Delta^n$ is linear, despite the fact that $f(0)$ may not be $0$. Now we include this additional requirement.
\begin{definition}
    A \textbf{linear functional} on $\Delta^n$ is a function $f:\Delta^n\to\I$ satisfying
    $$f(0):=f(0,\dots,0)=0$$
    The set of linear functionals on $\Delta^n$ is denoted $(\Delta^n)^*$.
\end{definition}
With the interpolation formula, one can immediately see that a functional is linear iff the interpolation coefficient $j_0=0$, which means that the sequence of coefficients $(j_0,\dots,j_n)\in\Delta_{n+1}$ degenerates to a sequence of a lower dimension $(j_1,\dots,j_n)\in\Delta_n$. This observation can be described in a linear-algebraic style.
\begin{definition}
    The \textbf{inner product} on $\I^n$ is a function $(-\cdot-):\I^n\times\I^n\to\I$, defined as
    $$(i_1,\dots,i_n)\cdot(j_1,\dots,j_n):=\bigsqcup_{k=1}^ni_k\sqcap j_k$$
\end{definition}
\begin{theorem}
    The restriction of the inner product on $\Delta^n$ and $\Delta_n$ induces a duality between $\Delta^n$ and $\Delta_n$. That is, the currying of the inner product $(-\cdot-):\Delta_n\to\Delta^n\to\I$ gives an isomorphism $(\Delta^n)^*\cong\Delta_n$.
\end{theorem}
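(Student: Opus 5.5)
We will define the map $\Phi:\Delta_n\to(\Delta^n)^*$ by $\Phi(j)(i):=i\cdot j=\bigsqcup_{k=1}^n i_k\sqcap j_k$ and show it is a bijection by restricting the higher Phoa isomorphism $\I^{\Delta^n}\cong\Delta_{n+1}$ of Theorem \ref{thm-phoa-n}. First, $\Phi$ is well defined. For $j\in\Delta_n$, evaluating at the origin gives $\Phi(j)(0,\dots,0)=\bigsqcup_k 0\sqcap j_k=0$, so $\Phi(j)$ is a linear functional.

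The candidate inverse is $\Psi:(\Delta^n)^*\to\Delta_n$, given by $\Psi(f):=(f(v_1),\dots,f(v_n))$. It lands in $\Delta_n$ because $v_1\sqsubseteq\dots\sqsubseteq v_n$ and $f$ is monotonic by Lemma \ref{lemma-mono-n}; monotonicity on $\I^n$ restricts to $\Delta^n$ since the vertices and the order are inherited. Alternatively, this follows from the fact that sampling lands in $\Delta_{n+1}$.

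For $\Psi\circ\Phi=\id$, take $j\in\Delta_n$ and compute $\Phi(j)(v_l)=\bigsqcup_{k=1}^l j_k=j_l$. This uses ascendingness of $j$ and is exactly the computation in the second part of the proof of Theorem \ref{thm-phoa-n}, with $j_0:=0$. Viewed differently, $(0,j_1,\dots,j_n)\in\Delta_{n+1}$ because $0$ is the minimum, and $\Phi(j)$ is precisely the Phoa interpolant of this sequence.

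For $\Phi\circ\Psi=\id$, take a linear $f$. By the first part of Theorem \ref{thm-phoa-n}, $f(i)=f(v_0)\sqcup\bigsqcup_{k=1}^n i_k\sqcap f(v_k)$. Since $v_0=(0,\dots,0)$, linearity gives $f(v_0)=0$, and the $0$ vanishes as the unit of $\sqcup$. Hence $f(i)=i\cdot\Psi(f)=\Phi(\Psi(f))(i)$, and function extensionality gives $\Phi(\Psi(f))=f$.

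The only step needing any care is identifying $(\Delta^n)^*$ with the subset $\set{j\in\Delta_{n+1}:j_0=0}\cong\Delta_n$ under the higher Phoa isomorphism, that is, checking that the bijections restrict correctly. Both directions reduce to the two observations above, $f(v_0)=0$ and $j_0=0$ being the minimum, so there is no genuine obstacle. The edge case $n=0$ is trivial: $\Delta^0=\top$, the empty join is $0$, and both sides are singletons.
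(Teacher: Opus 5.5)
Your proof is correct and takes the same route as the paper. The paper gives no separate proof, only the remark that, by the interpolation formula of Theorem~\ref{thm-phoa-n}, a functional is linear exactly when its coefficient $j_0=f(v_0)$ vanishes, so the coefficients reduce to $(j_1,\dots,j_n)\in\Delta_n$; your $\Phi$ and $\Psi$ carry out that restriction explicitly.

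One small nit: Lemma~\ref{lemma-mono-n} concerns functions on $\I^n$, so it does not directly apply to a function defined only on $\Delta^n$. To use it you would first extend $f$ along a retraction such as $(i_1,\dots,i_n)\mapsto(i_1,i_1\sqcap i_2,\dots,i_1\sqcap\dots\sqcap i_n)$. Your alternative justification, that sampling lands in $\Delta_{n+1}$ by Theorem~\ref{thm-phoa-n}, already suffices.
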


\section{The Spine and the Simplex}
It is obvious to see that an ascending sequence of $\I$ with $n$ elements is just a monotonic function from the set of indices $[n]:=\set{0\le\dots\le n}$ to $\I$. Meanwhile, a discrete set of $n$ points does not represent the poset $([n],\le)$ in our theory, since a function of type $[n]\to\I$ is not automatically monotonic. Thus, we need to find an object in our theory that precisely represents the order on $[n]$. One may argue that $\Delta^n$ can be such an object, as the higher Phoa suggests. However, $\Delta^n$ is not the only such object, and is not the ``minimal'' one. Take $n=2$ as an example. Geometrically, $\Delta^2$ represents a filled triangle with orientations on the edges:
\[\begin{tikzcd}[sep=large]
    1 & 2 \\
    0
    \arrow["{(1,i)}", from=1-1, to=1-2]
    \arrow["{(i,0)}", from=2-1, to=1-1]
    \arrow["{(i,i)}"', from=2-1, to=1-2]
\end{tikzcd}\]
A ``minimal'' object representing $[2]$ only need to illustrate $0\le1$ and $1\le2$, since the order $\sqsubseteq$ on $\I$ is transitive. Such an object, denoted $\Lambda_2$, should look like an amalgamation of two intervals:
\[\begin{tikzcd}[sep=large]
    1 & 2 \\
    0
    \arrow[from=1-1, to=1-2]
    \arrow[from=2-1, to=1-1]
\end{tikzcd}\]
which can also be seen as a pushout
\[\begin{tikzcd}
    0 && 1 && 2 \\
    & {0\le1} && {1\le2} \\
    && {\Lambda_3}
    \arrow[hook, from=1-1, to=2-2]
    \arrow[hook, from=1-3, to=2-2]
    \arrow[hook, from=1-3, to=2-4]
    \arrow[hook, from=1-5, to=2-4]
    \arrow[from=2-2, to=3-3]
    \arrow[from=2-4, to=3-3]
    \arrow["\lrcorner"{anchor=center, pos=0.125, rotate=135}, draw=none, from=3-3, to=1-3]
\end{tikzcd}\]
Where $0\le1$ and $1\le2$ are both copies of $\I$, with singletons $0$, $1$, and $2$ embeds into the two endpoints of each copy. Represented with higher inductive type (HIT) with directed homotopy, that would be
\begin{code}
data Λ₂ : Type where
    0 : Λ₂
    1 : Λ₂
    2 : Λ₂
    0≤1 : 0 ⊑ 1
    1≤2 : 1 ⊑ 2
\end{code}
The above code can be translated into HIT in Cubical Agda as
\begin{code}
data Λ₂ : Type where
    0 : Λ₂
    1 : Λ₂
    2 : Λ₂
    0≤1 : S → Λ₂
    1≤2 : S → Λ₂
    0≤1-s0 : 0≤1 s0 ≡ 0
    0≤1-s1 : 0≤1 s1 ≡ 1
    1≤2-s0 : 1≤2 s0 ≡ 1
    1≤2-s1 : 1≤2 s1 ≡ 2
\end{code}
To avoid collision with the built-in interval type of the Cubical Agda, the interval type $\I$ along with its two endpoints are named \verb|S|, \verb|s1|, and \verb|s2|. One can see that any function $f:\Lambda_3\to\I$ will always have $f(0)\sqsubseteq f(1)\sqsubseteq f(2)$. Thus, $\I^{\Lambda_2}\cong\Delta^3\cong\I^{\Delta^2}$. We now generalise this observation to larger dimensions.
\begin{definition}
    The \textbf{$n$-spine}, denoted $\Lambda_n$, is the colimit of the following diagram
    \[\begin{tikzcd}
        0 && 1 && 2 & \cdots & n \\
        & {0\le1} && {1\le2} && \cdots
        \arrow[hook, from=1-1, to=2-2]
        \arrow[hook, from=1-3, to=2-2]
        \arrow[hook, from=1-3, to=2-4]
        \arrow[hook, from=1-5, to=2-4]
        \arrow[hook, from=1-5, to=2-6]
        \arrow[hook, from=1-7, to=2-6]
    \end{tikzcd}\]
    with every natural number represent a copy of a singleton, every $k\le k+1$ represent a copy of $\I$, and every morphism points to the endpoints of $\I$. $\Lambda_n$ resembles the shape of $n$ connected intervals
    \[\begin{tikzcd}
        0 & 1 & 2 & \cdots & n
        \arrow[from=1-1, to=1-2]
        \arrow[from=1-2, to=1-3]
        \arrow[from=1-3, to=1-4]
        \arrow[from=1-4, to=1-5]
    \end{tikzcd}\]
\end{definition}
\begin{theorem}[\texttt{Lattice.PhoaΛ}]
    \label{thm-spine}
    The evaluation on vertices $\set{0,\dots,n}\subseteq\Lambda_n$ gives an isomorphism
    $$\I^{\Lambda_n}\cong\Delta_{n+1}$$
\end{theorem}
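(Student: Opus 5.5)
The plan is to use the universal property of $\Lambda_n$ as a colimit and then apply the Phoa principle (Theorem \ref{thm-phoa}) to each edge separately. Mapping out of a colimit into $\I$ turns the colimit diagram into a limit diagram of sets. Concretely, a function $f:\Lambda_n\to\I$ is the same data as three things: points $a_0,\dots,a_n\in\I$ (the values on the singletons), paths $p_1,\dots,p_n:\I\to\I$ (the restrictions to the copies $k-1\le k$ of $\I$), and the compatibility equations $p_k(0)=a_{k-1}$ and $p_k(1)=a_k$. In the Cubical Agda formulation this is exactly the recursion principle of the HIT $\Lambda_n$. The constructors $0,\dots,n$ give the $a_k$, the path constructors of type $\mathtt{S}\to\Lambda_n$ give the $p_k$, and the boundary equations give the compatibility. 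Function extensionality, together with the assumption that $\I$ is a set, makes this correspondence an isomorphism $\I^{\Lambda_n}\cong\lim(\cdots)$. The set assumption is what lets us ignore higher coherences.

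Next, replace each $p_k$ using Theorem \ref{thm-phoa}. We have $\I^\I\cong\Delta_2$ via $p\mapsto(p(0),p(1))$. So the data $(a_k,p_k)$ subject to the compatibility equations become tuples $(a_0,\dots,a_n)$ together with pairs $(b_k,c_k)\in\Delta_2$ satisfying $b_k=a_{k-1}$ and $c_k=a_k$. These equations let us eliminate the $b_k,c_k$ entirely, since the contractibility of singleton types removes them. What remains is exactly the conditions $a_{k-1}\sqsubseteq a_k$ for $1\le k\le n$. This is by definition an element of $\Delta_{n+1}$ written in ascending order. The composite map is evaluation at the vertices $0,\dots,n$, which is what the statement claims.

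The inverse, written explicitly, sends $(j_0\sqsubseteq\dots\sqsubseteq j_n)$ to the map $\Lambda_n\to\I$ defined by recursion. It sends $k\mapsto j_k$, sends the $k$-th edge to the path $t\mapsto j_{k-1}\sqcup(t\sqcap j_k)$, and discharges the boundary equations using $j_{k-1}\sqsubseteq j_k$. Evaluating this map at the vertices gives back $(j_0,\dots,j_n)$ definitionally.

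The main obstacle is the other round trip: showing that a given $f$ equals the map reconstructed from its vertex values. This requires the induction (elimination) principle of $\Lambda_n$ into the identity type. On the edges it reduces to Axiom \ref{ax-interpole}, namely $f(\mathrm{edge}_k(t))=f(k-1)\sqcup(t\sqcap f(k))$. The path-over conditions on the boundary constructors are trivial because $\I$ is a set. To handle a general $n$ in Agda without writing out a separate HIT for each dimension, I would define $\Lambda_n$ inductively as an iterated pushout $\Lambda_{n}=\Lambda_{n-1}\sqcup_{\top}\I$, gluing the last vertex to $0\in\I$. I would then prove the theorem by induction on $n$, using the pushout's universal property at each step. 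Each step reduces to showing $\Delta_{n}\times_{\I}\Delta_2\cong\Delta_{n+1}$, which is immediate.
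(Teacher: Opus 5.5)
Your proposal is correct and follows the argument the paper defers to its Agda formalisation: maps out of the spine HIT are computed via its recursion and induction principles, Phoa (equivalently, the interpolation axiom) is applied edge by edge, and general $n$ is handled by induction over an iterated gluing that adds one edge at a time. The only difference is cosmetic: you glue each new edge at the top vertex, whereas the paper's pointed-type constructor attaches a new minimum with a single path to the old base point. The paper chooses this so that the new vertex value is prepended to the ascending sequence, matching its cons-style \texttt{Vector} encoding.
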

The formalisation of $\Lambda_n$ and the proof of this theorem are included in the Agda code.

Despite the fact that a spine and a simplex are two different spaces in the sense that a simplex seems to have more points, Theorem \ref{thm-spine} still suggests that they share the same topology. We now make this argument concrete.
\begin{definition}
    If the precomposition of a map $f:A\to B$ between the observational algebras $f^*:\Op(B)\to\Op(A)$ is an isomorphism between lattices, $f$ is called a \textbf{sobriomorphism}, and $A$ and $B$ are said to be \textbf{sobriomorphic}.
\end{definition}
\begin{theorem}
    \label{thm-sobrio}
    The isomorphism $\Op(\Lambda_n)\cong\Op(\Delta^n)$ given by the higher Phoa principle is a sobriomorphism between $\Lambda_n$ and $\Delta^n$.
\end{theorem}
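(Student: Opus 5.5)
Since a sobriomorphism is defined by a map, the first step is to exhibit a concrete map $s:\Lambda_n\to\Delta^n$. I would then show that $s^*:\Op(\Delta^n)\to\Op(\Lambda_n)$ is a lattice isomorphism which agrees with the composite $\Op(\Delta^n)\cong\Delta_{n+1}\cong\Op(\Lambda_n)$ coming from Theorem \ref{thm-phoa-n} and Theorem \ref{thm-spine}.

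The map $s$ is defined by the universal property of the colimit $\Lambda_n$. Send the $k$-th singleton to the vertex $v_k$. Send the $k$-th copy of $\I$ (the edge $k\le k+1$) to the edge of $\Delta^n$ given by
$$t\mapsto(1^k,t,0^{n-k-1}),$$
which is descending and therefore lands in $\Delta^n$. At $t=0$ this is $v_k$ and at $t=1$ it is $v_{k+1}$, so the two legs agree on every shared vertex. This gives a cocone, hence the map $s$. In the HIT presentation this is a definition by pattern matching, where the path constructors are discharged by the equations above holding definitionally or by reflexivity.

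Next, the key commutation. For $f\in\Op(\Delta^n)$ we have $(f\circ s)(k)=f(v_k)$ for each vertex $k$ of $\Lambda_n$. So if $\mathrm{ev}_\Lambda:\Op(\Lambda_n)\to\Delta_{n+1}$ and $\mathrm{ev}_\Delta:\Op(\Delta^n)\to\Delta_{n+1}$ are the vertex-evaluation maps, then
$$\mathrm{ev}_\Lambda\circ s^*=\mathrm{ev}_\Delta.$$
Both evaluation maps are isomorphisms, so $s^*=\mathrm{ev}_\Lambda^{-1}\circ\mathrm{ev}_\Delta$ is a bijection, and it is exactly the isomorphism $\Op(\Lambda_n)\cong\Op(\Delta^n)$ given by the higher Phoa principle, read in the right direction.

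It remains to check that $s^*$ preserves the lattice structure. This is immediate: precomposition preserves pointwise operations, since
$$(f\sqcup g)\circ s=(f\circ s)\sqcup(g\circ s),$$
and likewise for $\sqcap$, $0$ and $1$. A bijective lattice homomorphism is a lattice isomorphism, because the inverse of a bijective homomorphism of algebraic structures is again a homomorphism. Equivalently, the inverse can be checked directly: both evaluation maps are pointwise homomorphisms into $\Delta_{n+1}$ with its elementwise operations.

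I expect the main obstacle to be bookkeeping rather than mathematics. One part is the well-definedness of $s$ on the colimit, namely the coherence of the path constructors in the HIT. The other is matching conventions, since $\Delta^n$ is descending while the coefficients live in the ascending $\Delta_{n+1}$, so I must make sure the vertex orderings in $\Lambda_n$ and $\Delta^n$ line up with the edge parametrisation chosen above. Given Theorems \ref{thm-phoa-n} and \ref{thm-spine}, the substantive step is just the commutation $\mathrm{ev}_\Lambda\circ s^*=\mathrm{ev}_\Delta$, which holds by construction.
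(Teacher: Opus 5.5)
Your proof is correct and uses the same idea as the paper. Both $\Op(\Lambda_n)$ and $\Op(\Delta^n)$ are identified with $\Delta_{n+1}$ by vertex evaluation (Theorems \ref{thm-phoa-n} and \ref{thm-spine}), and the lattice operations are pointwise, so they are preserved. You also construct the edge map $s:\Lambda_n\to\Delta^n$, $t\mapsto(1^k,t,0^{n-k-1})$, and verify $\mathrm{ev}_\Lambda\circ s^*=\mathrm{ev}_\Delta$; the paper's proof omits this, although the definition of a sobriomorphism requires the isomorphism to be induced by precomposition with a map of spaces, so your version is the more complete one.
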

\begin{proof}
    From the higher Phoa principle and Theorem \ref{thm-spine} we can see that a function of type $\Delta^n\to\I$ or $\Lambda_n\to\I$ is uniquely decided by its values on the vertices. Since evaluating a function on a set of points preserves the joins and meets, the isomorphism $\Op(\Lambda_n)\cong\Op(\Delta^n)$ is an isomorphism between the lattices.
\end{proof}

%% file: chapters/5-omega.tex

\chapter{The Transfinite Phoa Principle and the universal \texorpdfstring{$L$}{L}-(co)algebra}
\label{chapter-omega}

This chapter gives a concrete construction of the initial $L$-algebra and the final $L$-coalgebra, both of which will play a pivotal role in investigating the general recursion in SDT. The construction is based on the relation between the lifting functor and the simplices revealed in Chapter \ref{chapter-interval}. Furthermore, this construction will lead to a generalisation of the Phoa principle for the initial $L$-algebra, which enables us to compare the topology of the transfinite counterpart of the simplex and the spine.

\section{The Initial \texorpdfstring{$L$}{L}-algebra and the Final \texorpdfstring{$L$}{L}-coalgebra}
We introduce the concept of the initial $L$-algebra and the final $L$-coalgebra in the language of category theory.
\begin{definition}
    Let $\alpha:LA\to A$ and $\beta:LB\to B$ be two $L$-algebras. A function $f:A\to B$ is a \textbf{homomorphism between the $L$-algebras} $\alpha$ and $\beta$ iff the following diagram commutes:
    \[\begin{tikzcd}
        LA & LB \\
        A & B
        \arrow["Lf", from=1-1, to=1-2]
        \arrow["\alpha"', from=1-1, to=2-1]
        \arrow["\beta", from=1-2, to=2-2]
        \arrow["f"', from=2-1, to=2-2]
    \end{tikzcd}\]
    The \textbf{category of $L$-algebras} is a category where objects are $L$-algebras and the morphisms are homomorphisms between the $L$-algebras. Dually, we also have the \textbf{category of $L$-coalgebras}.
\end{definition}
\begin{definition}
    The \textbf{initial $L$-algebra} is the initial object of the category of $L$-algebras, denoted $\phi:L\w\to\w$. The \textbf{final $L$-coalgebra} is the final object of the category of $L$-coalgebras, denoted $\psi:\wb\to L\wb$.
\end{definition}

One may attempt to construct the initial $L$-algebra and the final $L$-coalgebra using the Ad\'amek's fixed point theorem \cite{adamek_free_1974}, which asserts that the initial $L$-algebra is the colimit of the following diagram:
\[\begin{tikzcd}
    \bot & {L\bot} & {L^2\bot} & {L^3\bot} & \cdots
    \arrow["{!}", from=1-1, to=1-2]
    \arrow["{L!}", from=1-2, to=1-3]
    \arrow["{L^2!}", from=1-3, to=1-4]
    \arrow[from=1-4, to=1-5]
\end{tikzcd}\]
Unfortunately, this construction only works if $L$ preserves the $\omega$-chains, or equivalently, the initial $L$-algebra is inductive, which was once believed to be true \cite{hyland_first_1991}, but then disproved with counterexamples in the effective topos \cite{van_oosten_axioms_2000}. To proceed with our investigation, we treat this as an axiom.
\begin{axiom}
    \label{ax-inductive}
    The initial $L$-algebra $\omega$ exists and satisfies $\omega=\varinjlim_{n\in\N}L^n\bot$.
\end{axiom}

It is difficult to illustrate this colimit directly. However, Theorem \ref{thm-l-delta} gives rise to an isomorphic chain built from simplices:
\[\begin{tikzcd}
    \bot & {L\bot} & {L^2\bot} & {L^3\bot} & \cdots \\
    {\Delta^{-1}} & {\Delta^0} & {\Delta^1} & {\Delta^2} & \cdots
    \arrow["{!}", from=1-1, to=1-2]
    \arrow["\cong", from=1-1, to=2-1]
    \arrow["{L!}", from=1-2, to=1-3]
    \arrow["\cong", from=1-2, to=2-2]
    \arrow["{L^2!}", from=1-3, to=1-4]
    \arrow["\cong", from=1-3, to=2-3]
    \arrow[from=1-4, to=1-5]
    \arrow["\cong", from=1-4, to=2-4]
    \arrow["{s_0}", from=2-1, to=2-2]
    \arrow["{s_1}", from=2-2, to=2-3]
    \arrow["{s_2}", from=2-3, to=2-4]
    \arrow[from=2-4, to=2-5]
\end{tikzcd}\]
where $!=\rec_\bot$. We only need to find the exact form of $s_n:\Delta^{n-1}\to\Delta^n$. Recall that $L\bot=\set{(0,\rec_\bot)}$ and $Lf=\lambda(i,x).(i,f\circ x)$. Thus,
$$L!(0,\rec_\bot)=(0,\rec_\bot\circ\rec_\bot)=(0,(0,\rec_\bot))$$
which corresponds to $0\in\I=\Delta^1$. This means that $s_1(*)=0$ and more generally
$$s_n(i_1,\dots,i_{n-1})=(i_1,\dots,i_{n-1},0)$$
Therefore,
$$\Delta^\omega:=\varinjlim_{n\in\N}s_n=\prn{\sum_{n\in\N}\Delta^n}/(x\sim s_n(x))$$
We then further investigate what the elements of $\Delta^\omega$ exactly are. From the definition, $\Delta^\omega$ should contain all descending sequences of $\I$ of any finite length, up to injections by $s_n$. Let
$$\Delta^\infty:=\set{i\in\I^\N:\forall n\in\N.i(n)\sqsupseteq i(n+1)}$$
We can see that any $\Delta^n$ embeds into $\Delta^\infty$ via
$$\iota_n(i_1,\dots,i_n):=(i_1,\dots,i_n,0,\dots)$$
and that embedding commutes to $s_n$. Hence, $\Delta^\omega$ is exactly the subset of $\Delta^\infty$ containing the images of the embeddings. That is,
$$\Delta^\omega=\set{i\in\Delta^\infty:\exists n\in\N.i(n)=0}$$

We now construct the final $L$-coalgebra, which can be done by calculating the limit of the diagram
\[\begin{tikzcd}
    \top & {L\top} & {L^2\top} & {L^3\top} & \cdots \\
    {\Delta^0} & {\Delta^1} & {\Delta^2} & {\Delta^3} & \cdots
    \arrow["\cong", from=1-1, to=2-1]
    \arrow["{*}"', from=1-2, to=1-1]
    \arrow["\cong", from=1-2, to=2-2]
    \arrow["{L*}"', from=1-3, to=1-2]
    \arrow["\cong", from=1-3, to=2-3]
    \arrow["{L^2*}"', from=1-4, to=1-3]
    \arrow["\cong", from=1-4, to=2-4]
    \arrow[from=1-5, to=1-4]
    \arrow["{d_0}"', from=2-2, to=2-1]
    \arrow["{d_1}"', from=2-3, to=2-2]
    \arrow["{d_2}"', from=2-4, to=2-3]
    \arrow[from=2-5, to=2-4]
\end{tikzcd}\]
Similarly to the previous observations, it can be seen that $d_1(i,j)=(i)$, and more generally,
$$d_n(i_1,\dots,i_n)=(i_1,\dots,i_{n-1})$$
One can immediately realise that
$$\varprojlim_{n\in\N}d_n=\Delta^\infty$$
\begin{definition}
    $\Delta^\omega$ is the $\omega$-simplex and $\Delta^\infty$ is the $\infty$-simplex.
\end{definition}
\begin{theorem}
    $\Delta^\omega\cong\w$ carries the initial $L$-algebra
    $$\phi(i,j)=(i,\sigma_\omega(i,j))$$
    where
    $$\sigma_\omega(i,j:\sem{i}\to\I^\N):=\lambda n.\sigma(i,j(-)(n))$$

    $\Delta^\infty\cong\wb$ carries the final $L$-coalgebra
    $$\psi(i)=(i(0),\lambda\_.\lambda n.i(n+1))$$
\end{theorem}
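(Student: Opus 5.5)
The plan has two parts. For the initial algebra, I would use Axiom \ref{ax-inductive} together with Adámek's construction. For the final coalgebra, I would use the dual construction as the limit of the $L^n\top$ chain. In each case the structure map is obtained by transporting the canonical (co)algebra along the isomorphisms of Theorem \ref{thm-l-delta}. Throughout, $\Delta^\omega$ is identified with $\set{i\in\Delta^\infty:\exists n.\,i(n)=0}$ as computed above.

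\textbf{Initial algebra.} By Axiom \ref{ax-inductive}, $\w=\varinjlim_n L^n\bot$. The canonical algebra $L\w\to\w$ is the colimit of the maps $L(L^n\bot)\to L^{n+1}\bot$; for this one uses that $L\w\cong\varinjlim_n L^{n+1}\bot$, which is exactly the $\omega$-continuity packaged in Axiom \ref{ax-inductive}. The argument then runs in four steps.
\begin{enumerate}
\item Check that the squares in the displayed ladder commute, i.e.\ that the isomorphisms of Lemma \ref{lemma-l-delta-n} intertwine $L^n!$ with $s_n$. This is an induction on $n$. The inductive step reduces to
$$f(i,\,s_{n}\circ j)=(i,\sigma_{n}(i,s_n\circ j))=s_{n+1}(i,\sigma_{n-1}(i,j)),$$
which holds because $\sigma(i,\lambda\_.0)=i\sqcap 0=0$ by Lemma \ref{lemma-sigma-meet}.
\item Conclude that $\w\cong\Delta^\omega$, since isomorphic diagrams have isomorphic colimits.
\item Compute the transported algebra levelwise. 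On $L\Delta^{n}$ it is $f(i,j)=(i,\sigma_n(i,j))$, so after embedding by $\iota$ it is $(i,\sigma_\omega(i,j))$.
\item Check that $\phi(i,j)=(i,\sigma_\omega(i,j))$ is well defined on all of $L\Delta^\omega$. Given $(i,j)$, one needs some $n$ with $\sigma_\omega(i,j)(n)=0$.
\end{enumerate}

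The last step is the main obstacle. The witness $n$ lives under the hypothesis $\sem i$, so one cannot simply read it off from $j$. My first attempt is to avoid the problem by using the colimit presentation: $L\w$ is itself a colimit of $L\Delta^n$, so every $(i,j)$ factors through some finite stage $L\Delta^n$. There $\sigma_n$ gives a sequence whose $n$-th entry is $\sigma(i,\lambda\_.0)=0$. Compatibility of $\phi$ with the inclusions is then the computation in step 1, and initiality is inherited from Adámek's theorem through the isomorphism.

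\textbf{Final coalgebra.} For $\wb$, I would apply the dual of Adámek's theorem, namely $\wb=\varprojlim_n L^n\top$. The plan has three steps.
\begin{enumerate}
\item Verify that the squares intertwining $L^n{*}$ with $d_n$ commute. As before this is an induction, with base case $L{*}(i,x)=*$. The inductive step uses that $\sigma_n$ commutes with truncation, which holds because $\sigma_n$ is defined componentwise.
\item Conclude that $\varprojlim d_n=\Delta^\infty$: a compatible family of finite descending sequences is exactly an infinite descending sequence.
\item Transport the coalgebra using the inverse $g$ from Lemma \ref{lemma-l-delta-n} levelwise. This gives $(i_0,\dots,i_n)\mapsto(i_0,\lambda\_.(i_1,\dots,i_n))$, and in the limit $\psi(i)=(i(0),\lambda\_.\lambda n.\,i(n+1))$.
\end{enumerate}

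The finality argument needs $L$ to preserve this $\omega^{\mathrm{op}}$-limit. I would prove that directly by computing $L\Delta^\infty\cong\Delta^{\infty}$ via $(i,j)\mapsto(i,\sigma_\omega(i,j))$. This works verbatim as in Lemma \ref{lemma-l-delta-n}, using Lemma \ref{lemma-sigma-eta-n} componentwise and function extensionality. With this isomorphism, $\psi$ is its inverse composed with the shift, and Lambek-style finality follows from the limit's universal property.
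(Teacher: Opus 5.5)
Your proposal is correct and follows the paper's route: Ad\'amek's construction of $\w$ as $\varinjlim L^n\bot$ and of $\wb$ as $\varprojlim L^n\top$, transported along the chain isomorphisms of Theorem \ref{thm-l-delta} with connecting maps $s_n$ and $d_n$. You also supply details the paper leaves implicit: the commutation of the ladders via $\sigma(i,\lambda\_.0)=0$, the finite-stage factorisation that makes $\phi$ land in $\Delta^\omega$, and the check that $L$ preserves the $\omega^{\mathrm{op}}$-limit (your map $L\Delta^\infty\to\Delta^\infty$ is the canonical comparison map), which the dual Ad\'amek theorem requires.
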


Similarly to $\Delta_n$, we also have
$$\Delta_\infty:=\set{i\in\I^\N:\forall n\in\N.i(n)\sqsubseteq i(n+1)}$$
and
$$\Delta_\omega=\set{i\in\Delta_\infty:\exists n\in\N.i(n)=1}$$
Though these objects do not occur as an initial(final) $L$-coalgebra, we will find their connection with $\Delta^\omega$ and $\Delta^\infty$ via a transfinite variant of the Phoa principle.

\section{The Transfinite Phoa Principle}
Recall that the higher Phoa principle samples a function $\Delta^n\to\I$ on the $n+1$ vertices. We now repeat this sampling process on $\Delta^\omega$. It is not difficult to see that the vertices mapped from $\Delta^n$ to $\Delta^\omega$ via $\iota_n$ are
$$v_k:=(1^k\overline{0})=(\overbrace{1,\dots,1}^\text{$k$ copies},0,\dots)$$
which defines an embedding $\N\hookrightarrow\Delta^\omega$.
\begin{lemma}
    For any function $f:\Delta^\omega\to\I$, the evaluation
    $$j:=\lambda n.f(v_n)=f(1^n\overline{0})$$
    gives an increasing sequence, and hence $j\in\Delta_\infty$.
\end{lemma}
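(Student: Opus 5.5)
The claim reduces to $f(v_n)\sqsubseteq f(v_{n+1})$ for every $n\in\N$, i.e. $\lambda n.f(v_n)$ satisfies the defining condition of $\Delta_\infty$. The plan is to connect $v_n$ and $v_{n+1}$ by a path inside $\Delta^\omega$ and then invoke Lemma \ref{lemma-mono}. Concretely, for fixed $n$ I will define
$$p:\I\to\Delta^\omega,\qquad p(t):=(1^n\,t\,\overline{0})=(\overbrace{1,\dots,1}^\text{$n$ copies},t,0,0,\dots).$$

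First I will check that $p$ is well defined. The sequence is descending because $1\sqsupseteq1$, $1\sqsupseteq t$, $t\sqsupseteq0$ and $0\sqsupseteq0$, so it lies in $\Delta^\infty$. Its entry at index $n+1$ is $0$, so it lies in $\Delta^\omega=\set{i\in\Delta^\infty:\exists m.\,i(m)=0}$. Equivalently, $p(t)$ is $\iota_{n+1}(1,\dots,1,t)$, the image of a point of $\Delta^{n+1}$ under the canonical embedding, so it is a legitimate element of the colimit. By construction $p(0)=1^n\overline{0}=v_n$ and $p(1)=1^{n+1}\overline{0}=v_{n+1}$.

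Next, the composite $f\circ p:\I\to\I$ is monotonic by Lemma \ref{lemma-mono}. Since $0\sqsubseteq1$, this gives
$$f(v_n)=f(p(0))\sqsubseteq f(p(1))=f(v_{n+1}).$$
Equivalently, one could apply Axiom \ref{ax-interpole} directly: $f(p(1))=f(p(0))\sqcup(1\sqcap f(p(1)))\sqsupseteq f(p(0))$. As $n$ was arbitrary, $j(n)\sqsubseteq j(n+1)$ for all $n$, and therefore $j\in\Delta_\infty$.

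The only step needing care is that $p$ really lands in $\Delta^\omega$, since $\Delta^\omega$ was introduced as a quotient colimit and only afterwards identified with a subset of $\Delta^\infty$. If one prefers to avoid that identification, the alternative is to work at the finite stage: restrict $f$ along $\Delta^{n+1}\to\Delta^\omega$ to get $f_{n+1}:\Delta^{n+1}\to\I$. Then $v_n$ and $v_{n+1}$ are vertices of $\Delta^{n+1}$ with $v_n\sqsubseteq v_{n+1}$ componentwise. Lemma \ref{lemma-mono-n}, applied to $f_{n+1}$ extended to $\I^{n+1}$ or via the path $p$ within $\Delta^{n+1}$, then gives the same inequality. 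Either way the argument is a direct application of monotonicity along a single edge.
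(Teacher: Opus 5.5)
Your proof is correct, and your fallback argument is exactly the paper's. The paper restricts $f$ along $\Delta^n\hookrightarrow\Delta^\omega$ and invokes Lemma~\ref{lemma-mono-n} for the two vertices $v_{n-1},v_n$ of $\Delta^n$. Your main route instead isolates the single edge $t\mapsto(1^n\,t\,\overline{0})$ from $v_n$ to $v_{n+1}$. It then needs only the one-dimensional Lemma~\ref{lemma-mono}, or Axiom~\ref{ax-interpole} directly.

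This is the tighter argument. Lemma~\ref{lemma-mono-n} is stated for functions on the cube $\I^n$, but the restriction of $f$ is only defined on $\Delta^n$. The paper's step therefore tacitly needs one of two things. The first is an extension of $f|_{\Delta^n}$ to $\I^n$, which is only available through the interpolation formula of Theorem~\ref{thm-phoa-n}. The second is the observation that the relevant coordinatewise step stays inside the simplex, and that observation is precisely your edge path.

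For the same reason, in your fallback you should rely on ``via the path $p$ within $\Delta^{n+1}$'' rather than ``extended to $\I^{n+1}$'', since the latter needs Theorem~\ref{thm-phoa-n} to justify it.

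Finally, you suggest defining $p$ as the composite of $t\mapsto(1^n,t)\in\Delta^{n+1}$ with the colimit injection. That is the right way to make the argument independent of the identification of $\Delta^\omega$ with a subset of $\Delta^\infty$. Then $p(0)$ equals $v_n$ by the colimit relation $x\sim s_{n+1}(x)$.
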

\begin{proof}
    For any $n\ge1$, $v_{n-1},v_n\in\Delta^\omega$ can be seen as embedded from the corresponding vertices in the $n$-simplex. By precomposing the embedding $\Delta^n\hookrightarrow\Delta^\omega$ with $f$, we get its restriction on $\Delta^n$, which is monotonic by Lemma \ref{lemma-mono-n}. Therefore, $f(v_{n-1})\sqsubseteq f(n)$.
\end{proof}

This observation is enough for us to conjecture that $\I^{\Delta^\omega}\cong\Delta_\infty$, which we call the transfinite Phoa principle. Since both $\Delta^\omega$ and $\Delta_\infty$ are constructed by (co)limit, we can prove the transfinite Phoa principle from the higher version.
\begin{lemma}
    \label{lemma-chain}
    The higher Phoa principle gives an isomorphism between chains
    \[\begin{tikzcd}
        {\mathbb{I}^{\Delta^{-1}}} & {\mathbb{I}^{\Delta^0}} & {\mathbb{I}^{\Delta^1}} & {\mathbb{I}^{\Delta^2}} & \cdots \\
        {\Delta_0} & {\Delta_1} & {\Delta_2} & {\Delta_3} & \cdots
        \arrow["\cong", from=1-1, to=2-1]
        \arrow["{\mathbb{I}^{s_0}}"', from=1-2, to=1-1]
        \arrow["\cong", from=1-2, to=2-2]
        \arrow["{\mathbb{I}^{s_1}}"', from=1-3, to=1-2]
        \arrow["\cong", from=1-3, to=2-3]
        \arrow["{\mathbb{I}^{s_2}}"', from=1-4, to=1-3]
        \arrow["\cong", from=1-4, to=2-4]
        \arrow[from=1-5, to=1-4]
        \arrow["{d^0}"', from=2-2, to=2-1]
        \arrow["{d^1}"', from=2-3, to=2-2]
        \arrow["{d^2}"', from=2-4, to=2-3]
        \arrow[from=2-5, to=2-4]
    \end{tikzcd}\]
    where
    $$d^n:\Delta_{n+1}\to\Delta_n:=\lambda(i_1,\dots,i_{n+1}).(i_1,\dots,i_n)$$
\end{lemma}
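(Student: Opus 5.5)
The vertical maps are the isomorphisms $e_n:\I^{\Delta^n}\to\Delta_{n+1}$, $e_n(f):=(f(v_0),\dots,f(v_n))$, from the higher Phoa principle (Theorem \ref{thm-phoa-n}). For the degenerate cases $n=-1,0$ they are the trivial isomorphisms $\I^\bot\cong\top\cong\Delta_0$ and $\I^{\top}\cong\I\cong\Delta_1$; the latter is evaluation at the unique vertex $v_0=*$. Since each $e_n$ is already an isomorphism, an isomorphism of chains amounts to naturality. So the plan is to check that every square commutes, i.e.
$$d^n\circ e_{n}=e_{n-1}\circ\I^{s_n}\qquad\text{as maps }\I^{\Delta^n}\to\Delta_n .$$

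The key step is the computation of $s_n$ on vertices. For $0\le k\le n-1$, the vertex $v_k=(1^k0^{n-1-k})$ of $\Delta^{n-1}$ satisfies
$$s_n(v_k)=(1^k0^{n-1-k},0)=(1^k0^{n-k})=v_k\in\Delta^n .$$
Thus $s_n$ sends the $k$-th vertex of $\Delta^{n-1}$ to the $k$-th vertex of $\Delta^n$, for each $k<n$. The top vertex $v_n=(1^n)$ of $\Delta^n$ is not hit.

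Naturality then follows by evaluating both sides on an arbitrary $f:\Delta^n\to\I$. One path gives
$$e_{n-1}(\I^{s_n}f)=\bigl(f(s_n(v_0)),\dots,f(s_n(v_{n-1}))\bigr)=(f(v_0),\dots,f(v_{n-1})).$$
The other gives $d^n(f(v_0),\dots,f(v_n))=(f(v_0),\dots,f(v_{n-1}))$, since $d^n$ drops the last coordinate. These agree. The small cases need separate checks. For $n=0$, $d^0:\Delta_1\to\Delta_0=\top$ is the unique map, and so is $\I^{s_0}:\I^{\top}\to\I^{\bot}$. For $n=1$, $s_1(*)=0=v_0$, so $\I^{s_1}f=f(0)$, and $d^1$ projects $(f(0),f(1))$ to $f(0)$.

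The main obstacle is bookkeeping rather than mathematics. The paper indexes the $\Delta_{n+1}$ entries sometimes from $0$ and sometimes from $1$. One must confirm that the coordinate dropped by $d^n$ (the last one, $i_{n+1}$) corresponds to the value at the top vertex $v_n$, and not at $v_0$. This holds because $e_n$ lists values in the ascending order $f(v_0)\sqsubseteq\dots\sqsubseteq f(v_n)$ (Lemma \ref{lemma-mono-n}), and $s_n$ appends a $0$, which preserves low vertices. It also depends on $s_n$ appending $0$ at the end rather than the beginning, as derived above from $L!$.

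Once the squares commute, the family $(e_n)$ is a natural isomorphism of diagrams. The inverses $e_n^{-1}$, the interpolation maps, automatically form a natural transformation as well, so no separate check for them is required.
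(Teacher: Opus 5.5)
Your proposal is correct and follows essentially the same route as the paper. You observe that $s_n$ appends a $0$, so it sends each vertex $v_k$ of $\Delta^{n-1}$ ($k<n$) to $v_k$ in $\Delta^n$; hence $\I^{s_n}$ followed by evaluation on vertices agrees with evaluating on vertices and then dropping the last coordinate via $d^n$, and every square commutes. Your separate treatment of the degenerate cases $n=0,1$ and your remark that the componentwise inverses are automatically natural are minor details the paper leaves implicit.
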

\begin{proof}
    Let $f:\Delta^n\to\I$ and $j=(j_0,\dots,j_n):=(f(v_0),\dots,f(v_n))\in\Delta_{n+1}$. For any $(i_1,\dots,i_{n-1})\in\Delta^{n-1}$, we have
    $$\begin{aligned}
        &\I^{s_n}(f)(i_1,\dots,i_{n-1}) \\
        =&(f\circ s_n)(i_1,\dots,i_{n-1}) \\
        =&f(s_n(i_1,\dots,i_{n-1})) \\
        =&f(i_1,\dots,i_{n-1},0) \\
    \end{aligned}$$
    and hence $\I^{s_n}(f)(v_k)=f(v_k)$ for $v_0,\dots,v_{n-1}\in\Delta^{n-1}$. Thus,
    $$(\I^{s_n}(f)(v_0),\dots,\I^{s_n}(f)(v_{n-1}))=(j_0,\dots,j_{n-1})=d^n(j)$$
    Therefore, the diagram between the chains commutes, and the higher Phoa principle gives an isomorphism between them.
\end{proof}
\begin{theorem}[Transfinite Phoa principle, \texttt{Omega.Phoaω}]
    \label{thm-phoa-omega}
    The evaluation on vertices $v:\N\hookrightarrow\Delta^\omega$ gives an isomorphism
    $$\I^{\Delta^\omega}\cong\Delta_\infty$$
\end{theorem}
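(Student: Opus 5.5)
The plan is to turn the colimit description of $\Delta^\omega$ into a limit description of its observational algebra, and then pass Lemma \ref{lemma-chain} to the limit. By Axiom \ref{ax-inductive} and the identification above, $\Delta^\omega=\varinjlim_n \Delta^{n-1}$ along the maps $s_n$. The exponential $\I^{(-)}$ sends colimits to limits: a map out of the quotient $\prn{\sum_n\Delta^{n-1}}/(x\sim s_n(x))$ is exactly a family $f_n:\Delta^{n-1}\to\I$ with $f_{n+1}\circ s_n=f_n$. Hence
$$\I^{\Delta^\omega}\cong\varprojlim_n\I^{\Delta^{n-1}}$$
along the restriction maps $\I^{s_n}$. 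This is a routine use of the universal property of the quotient, or of the HIT, together with function extensionality.

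Next, Lemma \ref{lemma-chain} gives a natural isomorphism between the chain $(\I^{\Delta^{n-1}},\I^{s_n})$ and the chain $(\Delta_n,d^n)$. Isomorphic diagrams have isomorphic limits, so
$$\I^{\Delta^\omega}\cong\varprojlim_n\Delta_n.$$
An element of this limit is a compatible family of ascending tuples $(j_0,\dots,j_{n-1})$, each extending the previous one by truncation. Such a family is the same thing as a single sequence $j\in\I^\N$ with $j(n)\sqsubseteq j(n+1)$ for all $n$. So the limit is $\Delta_\infty$, exactly as $\varprojlim d_n=\Delta^\infty$ was computed earlier.

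Finally, I must check that the composite isomorphism is evaluation at the vertices $v_k=(1^k\overline0)$. Unwinding the construction, $f$ is sent to the family of its restrictions $f\circ\iota_{n-1}$. Under the higher Phoa principle each restriction is sent to its values at the vertices of $\Delta^{n-1}$. Since $\iota$ carries those vertices to the $v_k$, the $k$-th entry is $f(v_k)$.

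Concretely, the inverse sends $j\in\Delta_\infty$ to the function
$$i\mapsto j_0\sqcup\bigsqcup_{k\ge1} i_k\sqcap j_k.$$
This join is really finite: $i\in\Delta^\omega$ means $i(m)=0$ for some $m$, and because the sequence descends, every later term also vanishes. The function must be checked to be well defined on the quotient, which holds because appending a $0$ adds only a bottom term to the join.

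The main obstacle I expect is the first step. Making precise that maps out of the colimit $\Delta^\omega$ correspond to compatible families is delicate because $\Delta^\omega$ was also given as the subset of $\Delta^\infty$ with $\exists n.\,i(n)=0$. With that description one has to pick a witness $n$ for each point and show the value does not depend on the choice, since the existential is propositionally truncated. I would handle this by showing $f(i)$ is constant in $n$ via the compatibility $f_{n+1}\circ s_n=f_n$. By the truncation elimination into the set $\I$, this independence of the witness yields a well-defined map.
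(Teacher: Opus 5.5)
Your proposal is correct and is essentially the paper's proof. Maps out of the sequential colimit $\Delta^\omega=\varinjlim s_n$ form the limit of the chain $(\I^{\Delta^{n-1}},\I^{s_n})$; Lemma~\ref{lemma-chain} identifies that chain with $(\Delta_n,d^n)$, whose limit is $\Delta_\infty$. Your extra checks are sound but go beyond what the paper writes. These are that the composite is evaluation at the $v_k$, the explicit inverse formula, and the witness-independence argument for the subset description. The last is unnecessary if you take the colimit itself as the definition of $\Delta^\omega$, as the paper and its Agda formalisation do.
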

\begin{proof}
    From Lemma \ref{lemma-chain} and the property of the internal-hom, we have
    $$\I^{\Delta^\omega}=[\I,\Delta^\omega]=[\I,\varinjlim_{n\in\N}s_n]=\varprojlim_{n\in\N}[\I,s_n]=\varprojlim_{n\in\N}\I^{s_n}\cong\varprojlim_{n\in\N}d^n=\Delta_\infty$$
\end{proof}
One can realise that this proof by (co)limits only works for $\I^{\Delta^n}\cong\Delta_{n+1}$, and fails to proceed with $\I^{\Delta^n}\cong\Delta^{n+1}$ since the diagram no longer commutes if we replace $d^n$ with $d_n$. This makes a stronger argument for why we should treat $\Delta^n$ and $\Delta_n$ as dual but different objects, even though we obviously have $\Delta^n\cong\Delta_n$ for the finite case.

\section{The \texorpdfstring{$\omega$}{omega}-spine}
Similarly to the simplices, there also exists a transfinite counterpart for the spines, denoted $\Lambda_\omega$. Recall that a $n$-spine $\Lambda_n$ contains the minimal information to describe the order on $[n]=\set{0,\dots,n}$. A transfinite version of the spine should likewise grasp the information that generates the order on $\N$, which we can see is the set of relations $\set{n\le n+1:n\in\N}$. By extending the diagram used to define $\Lambda_n$, we can define $\Lambda_\omega$ as the limit of the following diagram:
\[\begin{tikzcd}
    0 && 1 && 2 & \cdots \\
    & {0\le1} && {1\le2} && \cdots
    \arrow[hook, from=1-1, to=2-2]
    \arrow[hook, from=1-3, to=2-2]
    \arrow[hook, from=1-3, to=2-4]
    \arrow[hook, from=1-5, to=2-4]
    \arrow[hook, from=1-5, to=2-6]
\end{tikzcd}\]
Though this diagram is infinite, we can find an equivalent finite limit by indexing the objects in the diagram using $\N$.
\begin{definition}
    The \textbf{$\omega$-spine} $\Lambda_\omega$ is the limit of the following diagram:
    \[\begin{tikzcd}
        \N \\
        {\N\times\I}
        \arrow["{p_0}"', curve={height=6pt}, from=1-1, to=2-1]
        \arrow["{p_1}", curve={height=-6pt}, from=1-1, to=2-1]
    \end{tikzcd}\]
    where $p_0(n):=(n,0)$ and $p_1(n):=(n+1,1)$. Equivalently, $\Lambda_\omega$ is the coequalizer of $p_0$ and $p_1$.
\end{definition}
The $\omega$-spine described as an HIT in directed homotopy is
\begin{code}
data Λω : Type where
    step : ℕ → Λω
    n≤n+1 : ∀ n → step n ⊑ step (suc n)
\end{code}
which can be translated into an HIT in Cubical Agda as
\begin{code}
data Λω : Type where
    step : ℕ → Λω
    n≤n+1 : ℕ → S → Λω
    n≤n+1-s0 : ∀ n → n≤n+1 n s0 ≡ step n
    n≤n+1-s1 : ∀ n → n≤n+1 n s1 ≡ step (suc n)
\end{code}
We can then get the following results:
\begin{theorem}[\texttt{Omega.PhoaΛω}]
    \label{thm-spine-omega}
    The evaluation on vertices $v:\N\hookrightarrow\Lambda_\omega$ gives an isomorphism
    $$\I^{\Lambda_\omega}\cong\Delta_\infty$$
\end{theorem}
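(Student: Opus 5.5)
We construct the two maps explicitly and show they are mutually inverse. The idea is to use the universal property of $\Lambda_\omega$ as a coequalizer (equivalently, the recursion principle of the higher inductive type \verb|Λω|) to describe $\I^{\Lambda_\omega}$ in terms of its generators. A function $f:\Lambda_\omega\to\I$ is the same as a pair $(a,p)$ with $a:\N\to\I$ (values on \verb|step|) and $p:\N\to\I\to\I$ (values on the edges \verb|n≤n+1|). These must satisfy the coherences $p_n(0)=a(n)$ and $p_n(1)=a(n+1)$. So
$$\I^{\Lambda_\omega}\cong\sum_{a:\N\to\I}\ \prod_{n:\N}\ \set{p\in\I^\I:p(0)=a(n)\wedge p(1)=a(n+1)}.$$
Since $\I$ is a set, these coherences are propositions, so no higher paths need tracking.

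Next we apply the Phoa principle (Theorem \ref{thm-phoa}) fibrewise. For each $n$, the type of $p\in\I^\I$ with prescribed endpoints $a(n)$, $a(n+1)$ corresponds under $\I^\I\cong\Delta_2$ to the set of pairs in $\Delta_2$ equal to $(a(n),a(n+1))$. That set is a proposition: it is contractible exactly when $a(n)\sqsubseteq a(n+1)$ and empty otherwise. Hence the inner product is equivalent to $\prod_{n}a(n)\sqsubseteq a(n+1)$. The whole type then collapses to $\set{a\in\I^\N:\forall n.\,a(n)\sqsubseteq a(n+1)}=\Delta_\infty$.

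We then check that the composite equivalence is the evaluation map $f\mapsto\lambda n.f(\text{step }n)$. This holds by construction, since the first component $a$ is exactly this evaluation. Concretely, the inverse sends $j\in\Delta_\infty$ to the function defined by recursion with $\text{step }n\mapsto j(n)$ and $(n\le n+1)(t)\mapsto j(n)\sqcup(t\sqcap j(n+1))$. The endpoint coherences hold by lattice laws together with $j(n)\sqsubseteq j(n+1)$. One round trip is immediate on vertices. The other round trip uses the induction principle of \verb|Λω|: on an edge, the interpolation axiom (Axiom \ref{ax-interpole}) applied to $t\mapsto f((n\le n+1)(t))$ gives exactly the interpolated value. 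The path cases are trivial because $\I$ is a set.

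The main obstacle is this last induction step in Cubical Agda. We must rewrite $f((n\le n+1)(0))$ as $f(\text{step }n)$ using the path constructors \verb|n≤n+1-s0| and \verb|n≤n+1-s1| inside the interpolation formula, and discharge the coherence cases over those path constructors. For the latter I would use the fact that $\I$ is a set (so any two parallel paths in $\I$ are equal). Alternatively, the argument parallels Theorem \ref{thm-spine} for finite $\Lambda_n$: one could reuse that proof edgewise, or present $\Lambda_\omega$ as $\varinjlim\Lambda_n$ and take the limit $\varprojlim d^n=\Delta_\infty$ as in Theorem \ref{thm-phoa-omega}. The direct fibrewise argument is cleaner, so I would pursue that first.
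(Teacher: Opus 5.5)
Your proof is correct. The paper gives no written argument for this theorem beyond the pointer to its Agda formalisation over the HIT presentation of $\Lambda_\omega$, and your direct route is the natural unfolding of that definition: the recursion principle reduces a map to vertex values plus edge maps with prescribed endpoints, the Phoa principle applied edgewise collapses each fibre to the proposition $a(n)\sqsubseteq a(n+1)$, and the explicit inverse $j(n)\sqcup(t\sqcap j(n+1))$ is checked by HIT induction using Axiom~\ref{ax-interpole} and the fact that $\I$ is a set. One caution: keep computing from the HIT's coherences, as you do, and not from the coequalizer as printed in the paper, because $p_0(n)=(n,0)$, $p_1(n)=(n+1,1)$ glues $(n,0)$ to $(n+1,1)$, which reverses the edge orientation and would yield decreasing sequences ($\Delta^\infty$); the maps that match the HIT are $n\mapsto(n,1)$ and $n\mapsto(n+1,0)$.
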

\begin{theorem}
    \label{thm-sobrio-omega}
    $\Lambda_\omega$ and $\Delta^\omega$ are sobriomorphic via the isomorphism given by the transfinite Phoa principle.
\end{theorem}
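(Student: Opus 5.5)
We prove Theorem \ref{thm-sobrio-omega} in the same way as Theorem \ref{thm-sobrio}, using Theorem \ref{thm-phoa-omega} and Theorem \ref{thm-spine-omega}. Both identify their observational algebras with $\Delta_\infty$ by evaluation on vertices: $\mathrm{ev}_\Delta:\Op(\Delta^\omega)\to\Delta_\infty$, $f\mapsto\lambda n.f(1^n\overline{0})$, and $\mathrm{ev}_\Lambda:\Op(\Lambda_\omega)\to\Delta_\infty$, $g\mapsto\lambda n.g(\mathrm{step}\,n)$. The composite $\mathrm{ev}_\Delta^{-1}\circ\mathrm{ev}_\Lambda:\Op(\Lambda_\omega)\cong\Op(\Delta^\omega)$ is the isomorphism in question. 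It remains to show that it is a lattice isomorphism. Ideally we also realise it as the precomposition $h^*$ of a map $h:\Lambda_\omega\to\Delta^\omega$, so that ``sobriomorphic'' holds in the literal sense of the definition.

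For the lattice part, equip $\Delta_\infty$ with pointwise joins, meets, $0$ and $1$. It is closed under these, since $\sqcap$ and $\sqcup$ are monotone in each argument, so pointwise operations preserve ascending sequences. Evaluation at a point commutes with the pointwise lattice operations on $\Op(A)$. Hence $\mathrm{ev}_\Delta$ and $\mathrm{ev}_\Lambda$ preserve $\sqcap$, $\sqcup$, $0$ and $1$, and are therefore bijective lattice homomorphisms, i.e.\ lattice isomorphisms. The composite of one with the inverse of the other is then a lattice isomorphism $\Op(\Lambda_\omega)\cong\Op(\Delta^\omega)$.

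To realise this as a precomposition, define $h:\Lambda_\omega\to\Delta^\omega$ by recursion on the HIT:
\[
h(\mathrm{step}\,n):=1^n\overline{0},\qquad h(n\le n+1,\,t):=(1^n,t,\overline{0}).
\]
The boundary conditions hold because $(1^n,0,\overline{0})=1^n\overline{0}$ and $(1^n,1,\overline{0})=1^{n+1}\overline{0}$. Each $(1^n,t,\overline{0})$ is descending and eventually $0$, so it lies in $\Delta^\omega$. Since $h$ sends vertices to vertices, $\mathrm{ev}_\Lambda\circ h^*=\mathrm{ev}_\Delta$. It follows that $h^*=\mathrm{ev}_\Lambda^{-1}\circ\mathrm{ev}_\Delta$ is exactly the isomorphism from the Phoa principles, and it is a lattice isomorphism by the previous paragraph.

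The main obstacle is making sure $h$ is well defined as a map into $\Delta^\omega$. If $\Delta^\omega$ is taken as the quotient colimit rather than as the subset $\set{i\in\Delta^\infty:\exists n.\,i(n)=0}$, we define $h$ on the path constructor through $\iota_{n+1}$ applied to $(1^n,t)\in\Delta^{n+1}$. We then check that the endpoints agree with the images of the vertices modulo $s_{n+1}$, which amounts to the equation $s_{n+1}(1^n)=(1^n,0)$.
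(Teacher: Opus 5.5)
Your proposal is correct and follows the paper's intended argument. The paper gives no separate proof here; its reasoning is that of Theorem \ref{thm-sobrio}: both observational algebras are identified with $\Delta_\infty$ by evaluation at vertices, and evaluation preserves the pointwise joins and meets. Your explicit map $h:\Lambda_\omega\to\Delta^\omega$ with $\mathrm{ev}_\Lambda\circ h^*=\mathrm{ev}_\Delta$ supplies what the paper leaves implicit: it is the inclusion $\Lambda_\omega\hookrightarrow\Delta^\omega$ used later in Chapter \ref{chapter-topology}, and with it the definition of sobriomorphism applies literally. Once $h$ is available, $h^*$ is automatically a lattice homomorphism, since the operations on $\Op(-)$ are pointwise.
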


%% file: chapters/6-complete.tex

\chapter{Completeness and Topology in Synthetic Domain Theory}
\label{chapter-topology}


\section{Orthogonality}
The Phoa principle and its variants are all statements of the form that ``a function can be uniquely identified by its values on a subset of its domain''. Expressed in commutative diagrams, that is
\[\begin{tikzcd}
    X & Y \\
    Z
    \arrow["f", from=1-1, to=1-2]
    \arrow["h"', from=1-1, to=2-1]
    \arrow["{\overline{h}}", dashed, from=1-2, to=2-1]
\end{tikzcd}\]
where $h$ uniquely factors through $f$ as $h=\overline{h}\circ f$.
Reus and Streicher gave a generalisation of statements of such kind in \cite{reus_general_1997}.
\begin{definition}
    For $f:X\to Y$ and $g:Z\to W$, if for any $h:X\to Z$ and $k:Y\to U$, there is a unique $\alpha:Y\to Z$ such that the following diagram commutes:
    \[\begin{tikzcd}
        X & Y \\
        Z & U
        \arrow["f", from=1-1, to=1-2]
        \arrow["h"', from=1-1, to=2-1]
        \arrow["\alpha"{description}, dashed, from=1-2, to=2-1]
        \arrow["k", from=1-2, to=2-2]
        \arrow["g"', from=2-1, to=2-2]
    \end{tikzcd}\]
    then $f$ is said to be \textbf{orthogonal} to $g$, denoted $f\perp g$. If $U=\top$ is the final object, $f$ is said to be \textbf{orthogonal} to $Z$, denoted $f\perp Z$, and $Z$ is said to be \textbf{right-orthogonal} to $f$, or \textbf{$f$-orthogonal}.
\end{definition}
An important result on completeness without any prior knowledge of $f$ is provided in \cite{reus_general_1997}.
\begin{theorem}
    \label{thm-prod}
    If for any $a\in A$, $B(a)$ is $f$-orthogonal, then $\prod_{a:A}B(a)$ is $f$-orthogonal.
\end{theorem}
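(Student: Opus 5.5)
We interpret $f$-orthogonality of a type $Z$ as the statement that precomposition $f^*:(Y\to Z)\to(X\to Z)$ is an equivalence: for every $h:X\to Z$ there is a unique $\alpha:Y\to Z$ with $\alpha\circ f=h$. With $U=\top$ the map $k$ carries no information, so this is exactly what the definition says. The plan is to show that precomposition with $f$ into $\prod_{a:A}B(a)$ factors, up to the currying equivalences, as a product of the precomposition maps into each $B(a)$. A product of equivalences is an equivalence.

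First I set up currying. The first equivalence swaps the order of arguments, and the second is function extensionality together with the $\Pi$/$\Pi$ swap:
$$\Big(Y\to\prod_{a:A}B(a)\Big)\cong\prod_{a:A}(Y\to B(a)),\qquad \Big(X\to\prod_{a:A}B(a)\Big)\cong\prod_{a:A}(X\to B(a)).$$
Next I check that under these equivalences $f^*$ corresponds to the map $\prod_{a:A}f^*_a$, which sends $(\alpha_a)_a$ to $(\alpha_a\circ f)_a$. Both sides send $\alpha$ to $\lambda a.\lambda x.\alpha(f(x))(a)$, so the square commutes definitionally. Finally, each $f^*_a:(Y\to B(a))\to(X\to B(a))$ is an equivalence by hypothesis. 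A family of equivalences induces an equivalence on dependent products, using the fibrewise inverses and function extensionality for the two homotopies. Therefore $f^*$ is an equivalence, because it is a composite of equivalences.

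Stated elementwise, the proof runs as follows. Given $h:X\to\prod_a B(a)$, let $\alpha_a$ be the unique extension of $\lambda x.h(x)(a)$ along $f$, and set $\alpha:=\lambda y.\lambda a.\alpha_a(y)$. Then $\alpha\circ f=h$ follows pointwise by function extensionality. For uniqueness, suppose $\beta\circ f=h$. Then each $\lambda y.\beta(y)(a)$ extends $\lambda x.h(x)(a)$, so it equals $\alpha_a$, and hence $\beta=\alpha$.

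The only delicate step is choosing the extensions $\alpha_a$ simultaneously for all $a$. This could look like it needs choice, but it does not: unique existence means the type of extensions is contractible, so its centre gives a canonical $\alpha_a$. By the h-set convention the uniqueness here is propositional, and a contractible type gives its element without any choice. So I expect no real obstacle beyond careful use of function extensionality.
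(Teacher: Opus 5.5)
Your proof is correct. The paper gives no proof of its own here: it attributes the result to Reus and Streicher \cite{reus_general_1997} and states it without argument. Your currying argument is the standard proof of that cited fact. You identify $f$-orthogonality of $Z$ with $f^*:(Y\to Z)\to(X\to Z)$ being an equivalence. You then transport $f^*$ along $\bigl(Y\to\prod_{a:A}B(a)\bigr)\cong\prod_{a:A}(Y\to B(a))$ to $\prod_{a:A}f^*_a$, and use that a family of equivalences induces an equivalence of $\Pi$-types.

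Your reading of the definition is the right one. With $U=\top$, the compatibility condition $g\circ h=k\circ f$ that the paper's general definition omits becomes vacuous. You also read orthogonality internally, quantifying over $h$ inside the type theory. That reading is what allows $A$ to be an arbitrary type.
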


There are several completeness properties involved in the foundations of domains in SDT. This chapter investigates the conditions for the interval type and, more generally, the observational algebras to satisfy those properties. This chapter will also attempt to illustrate the connections between the concept of sobriomorphisms and these properties.

\section{Orders from Paths and Completeness Properties}
\label{sect-order}

The purpose of introducing the interval type in SDT is to identify the information order on a type $X$ with the paths on it. We now see what properties we need to make the paths a domain.
\begin{definition}
    Let $A$ be a type and $x,y\in A$. The \textbf{path type} between $x$ and $y$, written as $\hom_A(x,y)$, $\hom(x,y)$, or $x\leadsto y$ if without ambiguity, is defined as
    $$\hom_A(x,y):=\set{p\in A^\I:p(0)=x\wedge p(1)=y}$$
\end{definition}
\begin{lemma}
    For any type $A$, $(-\leadsto-):A\to A\to \U$ is reflexive.
\end{lemma}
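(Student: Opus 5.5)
To show that $(-\leadsto-)$ is reflexive, it suffices to exhibit, for every $x\in A$, an inhabitant of $\hom_A(x,x)$. I would take the constant path $c_x:=\lambda\_.x\in A^\I$. By definition, $\hom_A(x,x)=\set{p\in A^\I:p(0)=x\wedge p(1)=x}$, so I still need proofs of $c_x(0)=x$ and $c_x(1)=x$. Both hold definitionally by $\beta$-reduction, so the witnesses are just reflexivity of equality in $A$. The pair $(c_x,(\mathsf{refl},\mathsf{refl}))$ is therefore the required element, and we may set $\mathrm{refl}_x:=(c_x,(\mathsf{refl},\mathsf{refl}))$.

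No axiom about $\I$ is needed here: not the interpolation axiom, not the lattice structure, and not even $0\ne1$. This matches the expectation that reflexivity of the information order is the ``identity path'' entry of Table \ref{tab-rosetta}. If one wants the later results of this section (transitivity via Segal-type conditions, antisymmetry via Rezk-type conditions) to behave well, it is worth recording that $\mathrm{refl}_x$ is natural in $x$. Concretely, for any $f:A\to B$, postcomposition sends $c_x$ to $c_{f(x)}$. This is again immediate from function extensionality and $\beta$-reduction.

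There is no real obstacle in this step. The only point needing care is bookkeeping. Since $\U$ consists of h-sets, the endpoint conditions $p(0)=x$ and $p(1)=y$ are propositions, so the element of $\hom_A(x,x)$ is determined by its underlying path $c_x$ alone. In the Agda formalisation, the corresponding term is simply \verb|(λ _ → x) , refl , refl|.
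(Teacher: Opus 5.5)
Your proof is correct and takes the same approach as the paper: it also uses the constant path $\lambda\_.x$ as the witness of $x\leadsto x$, with both endpoint conditions holding definitionally. Your additional remarks (naturality under postcomposition, and the endpoint conditions being propositions because all types are h-sets) are accurate but not needed for the lemma.
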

\begin{proof}
    $$\id:=\lambda x.\lambda\_.x:\forall(x:A).x\leadsto x$$
\end{proof}
\begin{lemma}
    Take any type $A$, $B$. $(-\leadsto-)$ is congruent to all functions $f:A\to B$.
\end{lemma}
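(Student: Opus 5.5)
The plan is to construct the action of $f$ on paths by postcomposition. Take $x,y\in A$ and a path $p\in\hom_A(x,y)$, so $p:\I\to A$ with $p(0)=x$ and $p(1)=y$. We define
$$\operatorname{ap}_f:=\lambda p.\,f\circ p\;:\;\hom_A(x,y)\to\hom_B(f(x),f(y)).$$
Congruence of $(-\leadsto-)$ with respect to $f$ then means exactly that this map is well defined, namely that $x\leadsto y$ implies $f(x)\leadsto f(y)$.

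To check that $f\circ p$ lies in $\hom_B(f(x),f(y))$, we evaluate it at the endpoints: $(f\circ p)(0)=f(p(0))=f(x)$ and $(f\circ p)(1)=f(p(1))=f(y)$. Each equation uses the ordinary congruence of the identity type under $f$ applied to the witnesses $p(0)=x$ and $p(1)=y$. Since $\hom$ is defined as a subset of $B^\I$ cut out by these two equations, and all types are assumed to be h-sets, the pair consisting of $f\circ p$ and these two equalities is an element of $\hom_B(f(x),f(y))$.

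For completeness we also record the expected functoriality facts. Reflexivity is preserved, since $f\circ(\lambda\_.x)=\lambda\_.f(x)$ holds definitionally. Postcomposition also commutes with composition of functions: $\operatorname{ap}_{g\circ f}=\operatorname{ap}_g\circ\operatorname{ap}_f$ and $\operatorname{ap}_{\id}=\id$. Both follow from associativity of function composition together with function extensionality, so equality of the path components suffices.

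No step here is a real obstacle. The only point needing care is what ``congruent'' is meant to include. If it means merely a map on path types, the argument above is immediate. If compatibility with composition of paths is also intended, that question would wait for the Segal-type completeness discussed later, and I would not attempt it here.
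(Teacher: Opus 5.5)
Your proposal is correct and takes the same approach as the paper, whose proof is exactly the postcomposition map $\lambda x.\lambda y.\lambda(p:x\leadsto y).f\circ p$; the paper leaves implicit the endpoint checks $(f\circ p)(0)=f(x)$ and $(f\circ p)(1)=f(y)$ that you spell out. Your functoriality remarks go beyond what the lemma asks. Function extensionality is not actually needed for them, since $(g\circ f)\circ p$ and $g\circ(f\circ p)$ agree definitionally.
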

\begin{proof}
    $$\lambda x.\lambda y.\lambda(p:x\leadsto y).f\circ p:\forall(x:A).\forall(y:A).(x\leadsto y)\to(f(x)\leadsto f(y))$$
\end{proof}
\begin{lemma}
    Take any type $A$, $B$, and two functions $f,g:A\to B$. If $f(x)\leadsto g(x)$, $f\leadsto g$.
\end{lemma}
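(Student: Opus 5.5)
The proof is a direct construction, mirroring the proofs of reflexivity and congruence just above: we produce an explicit term witnessing the implication. The hypothesis is read as a dependent family of paths $H:\prod_{x:A}(f(x)\leadsto g(x))$. Unfolding the definition of the path type, each $H(x)$ is a function $H(x):\I\to B$ with $H(x)(0)=f(x)$ and $H(x)(1)=g(x)$. The goal $f\leadsto g$ is, by definition, an element of $(B^A)^\I$, i.e.\ a function $p:\I\to(A\to B)$ with $p(0)=f$ and $p(1)=g$.

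The candidate path is obtained by swapping the two arguments of $H$:
$$p:=\lambda t.\lambda x.H(x)(t):\I\to(A\to B).$$
This is simply the currying isomorphism $(A\to B^\I)\cong(\I\to B^A)$, both sides being $A\times\I\to B$.

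The remaining step is to check the endpoint conditions. Here $p(0)=\lambda x.H(x)(0)$, and pointwise $H(x)(0)=f(x)$. So $p(0)=f$ follows from function extensionality in the ambient (homotopy) type theory, which is available since we work in HoTT/Cubical Agda as the meta-language. Symmetrically, $p(1)=\lambda x.H(x)(1)=g$. Since all types are assumed to be h-sets, these equalities are propositions, so the pair consisting of $p$ and the two endpoint proofs is a well-defined element of $\hom_{B^A}(f,g)$.

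There is no real obstacle; the only point requiring care is the appeal to the meta-level function extensionality. That is equality extensionality for $=$, not for $\leadsto$, so the argument is not circular. In Cubical Agda this is immediate, since $\mathtt{funExt}$ is definitional via interval abstraction.
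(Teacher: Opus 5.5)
Your proposal is correct and matches the paper's proof, which gives exactly the argument-swapping term $\lambda i.\lambda x.P(x)(i)$ for a family $P$ of pointwise paths. You additionally verify the endpoint conditions via meta-level function extensionality, a step the paper leaves implicit.
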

\begin{proof}
    $$\lambda(P:\forall x.f(x)\leadsto g(x)).\lambda i.\lambda x.P(x)(i):(\forall x.f(x)\leadsto g(x))\to(f\leadsto g)$$
\end{proof}

So far we can see that the path type illustrates a reflexive relation that is persevered by all functions and has function extensionality on function types. We now introduce the completeness properties that force a type to be a synthetic category and, moreover, a synthetic poset. The first completeness property is the Segal-completeness that illustrates the composition of paths.
\begin{definition}
    A type $A$ is called \textbf{Segal-complete} (or \textbf{Segal} for short) if it is right-orthogonal to $\Lambda_2\hookrightarrow\Delta^2$. That is, every function of type $\Lambda_2\to A$ extends uniquely to a function of type $\Delta^2\to A$.
\end{definition}
In a Segal type, paths can be composed as shown in the following diagram:
\[\begin{tikzcd}
    {1\mapsto y} & {2\mapsto z} \\
    {0\mapsto x}
    \arrow["q", squiggly, from=1-1, to=1-2]
    \arrow["p", squiggly, from=2-1, to=1-1]
    \arrow["{p\cdot q}"', squiggly, from=2-1, to=1-2]
\end{tikzcd}\]

Next, we introduce the Rezk-completeness following the definitions in \cite{buchholtz_synthetic_2022,sterling_domains_2025}.
\begin{definition}
    Let $\E$ be the colimit of the following diagram:
    \[\begin{tikzcd}
        & \I && \I && \I \\
        \top && {\Delta^2} && {\Delta^2} && \top
        \arrow[from=1-2, to=2-1]
        \arrow["{\lambda i.(i,i)}"{description}, from=1-2, to=2-3]
        \arrow["{\lambda i.(i,0)}"{description}, from=1-4, to=2-3]
        \arrow["{\lambda i.(1,i)}"{description}, from=1-4, to=2-5]
        \arrow["{\lambda i.(i,i)}"{description}, from=1-6, to=2-5]
        \arrow[from=1-6, to=2-7]
    \end{tikzcd}\]
    A type $A$ is called \textbf{Rezk-complete} (or \textbf{Rezk} for short) if it is right-orthogonal to $\E\to\top$.
\end{definition}
Functions of type $\E\to A$ classify all isomorphism of the Segal type $A$, and the Rezk-completeness asserts that all isomorphisms are equivalences \cite{buchholtz_synthetic_2022}. If $A$ is both Segal and Rezk, $A$ is said to be a \textbf{synthetic category}.

To make a category a poset, we additionally require that parallel paths to be equivalent, which is described by $\I$-separateness \cite{sterling_domains_2025}.
\begin{definition}
    Let $\I_\parallel$ be the following pushout:
    \[\begin{tikzcd}
        {\set{0,1}} & \I \\
        \I & {\I_\parallel}
        \arrow[hook, from=1-1, to=1-2]
        \arrow[hook, from=1-1, to=2-1]
        \arrow[from=1-2, to=2-2]
        \arrow[from=2-1, to=2-2]
        \arrow["\lrcorner"{anchor=center, pos=0.125, rotate=180}, draw=none, from=2-2, to=1-1]
    \end{tikzcd}\]
    A type $A$ is called \textbf{$\I$-separated} if it is right-orthogonal to $\I_\parallel\to\I$.
\end{definition}
Functions of type $\I_\parallel\to A$ classify all pairs of parallel paths, and $\I$-separateness guarantees that there exists at most one path between any two points, which truncates a category to a poset\footnote{The antisymmetry is guaranteed by Rezk-completeness.}. An $\I$-separated synthetic category is called a \textbf{synthetic poset}.

In SDT, we care about posets with suprema for all $\omega$-chains, which gives an additional completeness property.
\begin{definition}
    A type $A$ is called \textbf{chain-complete} (or \textbf{complete} for short) if it is right-orthogonal to the embedding between the initial and final $L$-(co)algebras $\w\hookrightarrow\wb$.
\end{definition}
A chain-complete synthetic poset is called a \textbf{synthetic predomain}. If it in addition contains a minimum, it is called a \textbf{synthetic domain}.

\section{Topology and Completeness}
We now investigate under what conditions is the interval type a synthetic predomain. Notice that Theorem \ref{thm-prod} implies that an orthogonality property holds for all observational algebras iff it holds for the interval type $\I$. In this section, we only postulate Axiom \ref{ax-prop}, \ref{ax-lattice}, and \ref{ax-l-alg}, as we will relate Axiom \ref{ax-interpole} with completeness properties and other statements.

We first show that Axiom \ref{ax-interpole} is enough to illustrate a poset.
\begin{theorem}
    \label{thm-poset}
    $\I$ is a synthetic poset iff Axiom \ref{ax-interpole} holds.
\end{theorem}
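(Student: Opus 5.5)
We prove the two directions separately. Throughout we assume only Axioms \ref{ax-prop}, \ref{ax-lattice} and \ref{ax-l-alg}, and we unfold each completeness property into an explicit statement about maps into $\I$.

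\textbf{Interpolation implies synthetic poset.} We check the three orthogonality conditions in turn.

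For Segal-completeness, a map $\Lambda_2\to\I$ is a pair of paths $p,q:\I\to\I$ with $p(1)=q(0)$. By the Phoa principle (Theorem \ref{thm-phoa}), such a pair is the same as a chain $i\sqsubseteq j\sqsubseteq k$, that is, an element of $\Delta_3$; this is Theorem \ref{thm-spine} for $n=2$. The higher Phoa principle (Theorem \ref{thm-phoa-n}) identifies $\I^{\Delta^2}$ with $\Delta_3$ as well. It remains to check that restriction along $\Lambda_2\hookrightarrow\Delta^2$ corresponds to the identity on $\Delta_3$. This holds because both isomorphisms are evaluation on the vertices $v_0,v_1,v_2$, and the inclusion sends $0,1,2$ to those vertices. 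Hence restriction is a bijection.

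For $\I$-separatedness, a map $\I_\parallel\to\I$ is a pair of paths agreeing at $0$ and $1$. Axiom \ref{ax-interpole} says each path is determined by its endpoints, so the two paths coincide and the map factors uniquely through $\I_\parallel\to\I$.

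For Rezk-completeness, a map $\E\to\I$ is determined, via the higher Phoa principle on each copy of $\Delta^2$, by the values at the vertices. The identifications of the diagonals with $\top$ force $x\sqsubseteq y\sqsubseteq x$. Antisymmetry of the lattice order then collapses all vertex values to a single $i\in\I$. So $\I^{\E}\cong\I$ via the constant maps, which is exactly orthogonality to $\E\to\top$.

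\textbf{Synthetic poset implies interpolation.} Take $p:\I\to\I$. The goal is to show that $p$ and $q(t):=p(0)\sqcup(t\sqcap p(1))$ are parallel paths, so that $\I$-separatedness gives $p=q$.

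We have $q(0)=p(0)$ immediately. The identity $q(1)=p(0)\sqcup p(1)=p(1)$ requires $p(0)\sqsubseteq p(1)$, i.e.\ monotonicity of $p$, which is the crux of this direction. To get it, we first show $\sem{p(0)}\to\sem{p(1)}$. Given $\phi:\sem{p(0)}$, consider $r(t):=p(t)\sqcap \sigma(p(0),\lambda\_.\,\cdot)$; more cleanly, use the $\sigma$-structure to build the path $\lambda t.\,\sigma(p(0),\lambda\_.p(t))$, which equals $p(0)\sqcap p(t)$ by Lemma \ref{lemma-sigma-meet}. I expect this will not suffice on its own. The main obstacle is that monotonicity must come from the completeness properties.

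\emph{Plan for the monotonicity step.} Use Segal- and Rezk-completeness: build from $p$ a map $\Lambda_2\to\I$ via the spine $p(1)\leadsto p(0)\sqcup p(1)$ together with $p$ itself. Alternatively, build an element of $\E$ out of the paths $t\mapsto p(t)\sqcap p(0)$ and $t\mapsto p(0)$, and use Rezk-completeness to conclude that $p(0)\sqcap p(1)=p(0)$. If that route fails, the fallback is to argue inside $\Omega$: under the hypothesis $\sem{p(0)}$ we may replace $p(0)$ by $1$, as in the proof of Lemma \ref{lemma-sigma-eta}, and then the two paths $p$ and $\lambda\_.1$ agree at $0$.

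This monotonicity step is the one I expect to be hardest. Once it is established, $q$ and $p$ are parallel, and separatedness yields Axiom \ref{ax-interpole}.
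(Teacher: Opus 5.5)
Your first direction is correct, and it is more careful than the paper's one-line remark that interpolation gives $(i\sqsubseteq j)\cong(i\leadsto j)$. The Segal, $\I$-separatedness and Rezk checks all go through. For Rezk: each collapsed diagonal forces $a_0=a_2$, monotonicity gives $a_0\sqsubseteq a_1\sqsubseteq a_2$, and the glued edge makes the two constants agree.

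The gap is in the converse, exactly where you flagged it. Monotonicity of $p$ is never derived, and none of your three routes works as written.
\begin{itemize}
\item The fallback is circular. $p$ and $\lambda\_.1$ agree only at $0$, while $\I$-separatedness needs agreement at both endpoints, which is precisely $p(1)=1$.
\item Route (a) only yields a composite path $p(0)\leadsto p(0)\sqcup p(1)$. That says nothing about $p(0)\sqsubseteq p(1)$.
\item Route (b) targets the right equation, $p(0)\sqcap p(1)=p(0)$. But the constant path at $p(0)$ does not start at $p(0)\sqcap p(1)$, so it cannot serve as the return path. You also never say where the two triangles with constant diagonals, needed for a map $\E\to\I$, come from.
\end{itemize}

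The missing idea is antisymmetry of $\leadsto$ on $\I$, obtained from Segal, $\I$-separatedness and Rezk together. The paper's own one-line argument ("for $i\sqsubseteq j$ there is a unique path") tacitly needs this same fact; the paper's footnote attributes antisymmetry to Rezk-completeness. Here is how to fill the gap.
\begin{enumerate}
\item Put $a:=p(0)$, $m:=p(0)\sqcap p(1)$ and $r(t):=p(t)\sqcap a$, which is a path $a\leadsto m$.
\item Put $s(t):=m\sqcup(t\sqcap a)$. Since $m\sqsubseteq a$, this is a path $m\leadsto a$.
\item Segal-completeness gives fillers $F,G:\Delta^2\to\I$ with $F(i,0)=r(i)$, $F(1,i)=s(i)$ and $G(i,0)=s(i)$, $G(1,i)=r(i)$.
\item The diagonals of $F$ and $G$ are loops at $a$ and at $m$. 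By $\I$-separatedness they equal the constant paths.
\item Hence $F$ and $G$ glue along $F(i,0)=G(1,i)=r(i)$ into a map $\E\to\I$.
\item Rezk-completeness forces this map to be constant. So $r$ is constant and $a=m$, that is, $p(0)\sqsubseteq p(1)$.
\end{enumerate}
Equivalently, assume $\sem{p(0)}$, replace $p(0)$ by $1$, and run the same argument with $p:1\leadsto p(1)$ and $t\mapsto p(1)\sqcup t$.

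With monotonicity established, your final step goes through: $p$ and $p(0)\sqcup(t\sqcap p(1))$ are parallel, so $\I$-separatedness makes them equal.
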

\begin{proof}
    If $\I$ is a synthetic poset, for any $i\sqsubseteq j$, there is a unique path between $i$ and $j$, which implies Axiom \ref{ax-interpole}. In reverse, with Axiom \ref{ax-interpole}, we have $(i\sqsubseteq j)\cong(i\leadsto j)$, which makes $\leadsto$ a partial order on $\I$.
\end{proof}
Axiom \ref{ax-interpole} does not tell us whether $\I$ is chain-complete, so we have to make it an axiom.
\begin{axiom}
    \label{ax-omega}
    $\I$ is chain-complete.
\end{axiom}
\begin{theorem}
    \label{thm-domain}
    $\I$ is a synthetic domain iff Axiom \ref{ax-interpole} and \ref{ax-omega} holds.
\end{theorem}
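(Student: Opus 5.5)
The plan is to unfold the definition of a synthetic domain. It is a chain-complete synthetic poset that additionally has a minimum, so the proof splits into three components, each matched against one side of the equivalence.

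First, the poset component. Theorem \ref{thm-poset} already gives that $\I$ is a synthetic poset, that is Segal, Rezk and $\I$-separated, iff Axiom \ref{ax-interpole} holds. I will invoke it directly in both directions.

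Second, chain-completeness. By definition, $\I$ is chain-complete iff it is right-orthogonal to $\w\hookrightarrow\wb$, and this is literally Axiom \ref{ax-omega}. So this component is also an equivalence, with nothing further to prove.

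Third, the minimum. I will show that $\I$ always has a minimum once it is a synthetic poset, so this condition adds nothing to the two axioms. The candidate is $0$. For any $i\in\I$, the lattice structure of Axiom \ref{ax-lattice} gives $0\sqsubseteq i$. The minimum must be with respect to the path order $\leadsto$, not only the lattice order. Under Axiom \ref{ax-interpole}, the proof of Theorem \ref{thm-poset} identifies $(i\sqsubseteq j)\cong(i\leadsto j)$. Concretely, the path $p(t):=0\sqcup(t\sqcap i)=t\sqcap i$ satisfies $p(0)=0$ and $p(1)=i$, so $0\leadsto i$ for every $i$.

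Assembling these gives the result. For the forward direction, being a synthetic domain entails being a synthetic poset, which yields Axiom \ref{ax-interpole} by Theorem \ref{thm-poset}. It also entails being chain-complete, which is Axiom \ref{ax-omega}. For the converse, the two axioms give a chain-complete synthetic poset, that is a synthetic predomain, and the explicit path $t\mapsto t\sqcap i$ exhibits $0$ as its minimum.

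The only point needing care is the minimum. I must confirm that "minimum" is read in the path-induced order, and that this order agrees with $\sqsubseteq$. That agreement is exactly what Theorem \ref{thm-poset} and the interpolation axiom provide. Even without interpolation, the path $t\mapsto t\sqcap i$ still exists, so $0$ is minimal in the weaker sense as well.
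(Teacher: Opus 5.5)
Your proposal is correct and follows the paper's own proof: it reduces the statement to Theorem \ref{thm-poset}, to chain-completeness being Axiom \ref{ax-omega} by definition, and to $0$ being the minimum of $\I$. Your explicit path $t\mapsto t\sqcap i$ makes the minimum claim precise in the path order, which the paper leaves implicit.
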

\begin{proof}
    Derived immediately from Theorem \ref{thm-poset} and the fact that $0$ is the minimum of $\I$.
\end{proof}

Recall that we use the observational algebra $\Op(X):=\I^X$ to describe the topology of a space $X$, yet this structure only provides finite unions and intersections, while the common definition for a topology requires the open sets to be closed under arbitrary unions. Now, with the condition that $\I$ is a synthetic domain, we have countable unions given by
$$\bigsqcup_{n\in\N}\chi_n=\sup_{n\in\N}\bigsqcup_{k=0}^n\chi_k$$
Although this still differs from having arbitrary unions, one can see this as illustrating the topology of a second-countable space.

Now, we provide an alternative way to describe the completeness properties using the concept of sobriomorphisms.
\begin{definition}
    If $S\subseteq X$ and the inclusion $S\hookrightarrow X$ is a sobriomorphism, A is called a \textbf{spanning subset} of $X$, denoted $S\unlhd X$.
\end{definition}
\begin{lemma}
    If $A\unlhd B\unlhd C$, $A\unlhd C$.
\end{lemma}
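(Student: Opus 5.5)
The plan is to reduce transitivity of $\unlhd$ to the fact that composites of lattice isomorphisms are lattice isomorphisms, using contravariant functoriality of precomposition. Let $\iota_{AB}:A\hookrightarrow B$ and $\iota_{BC}:B\hookrightarrow C$ be the two inclusions. The hypotheses $A\unlhd B$ and $B\unlhd C$ say that $\iota_{AB}^*:\Op(B)\to\Op(A)$ and $\iota_{BC}^*:\Op(C)\to\Op(B)$ are lattice isomorphisms. The goal $A\unlhd C$ says the same of $\iota_{AC}^*:\Op(C)\to\Op(A)$, where $\iota_{AC}$ is the inclusion $A\hookrightarrow C$.

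The first step is to check that $\iota_{AC}=\iota_{BC}\circ\iota_{AB}$. In the paper's convention a subset is a $\Sigma$-type $\sum_{x:X}p(x)$, so $A\subseteq C$ is meaningful only through the chain $A\subseteq B\subseteq C$. I will therefore take the inclusion $A\hookrightarrow C$ to be this composite by definition. It is an embedding, since composites of embeddings are embeddings, so $A$ is genuinely a subset of $C$.

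The second step is the identity
$$(\iota_{BC}\circ\iota_{AB})^*=\iota_{AB}^*\circ\iota_{BC}^*,$$
which holds definitionally: for $\chi:C\to\I$ we have $\chi\circ(\iota_{BC}\circ\iota_{AB})=(\chi\circ\iota_{BC})\circ\iota_{AB}$ by associativity of function composition.

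The final step is to observe that $\iota_{AB}^*\circ\iota_{BC}^*$ is a bijection, because its inverse is the composite of the two inverses. It also preserves $\sqcap$, $\sqcup$, $0$ and $1$, because each factor does. In fact each precomposition map preserves these automatically, since the lattice operations on observational algebras are pointwise. So the composite is a lattice isomorphism, and $\iota_{AC}$ is a sobriomorphism.

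There is no real obstacle here. The only point needing care is the first step, namely fixing what ``the inclusion $A\hookrightarrow C$'' means so that it literally coincides with the composite. The rest is functoriality of $\Op(-)=\I^{(-)}$.
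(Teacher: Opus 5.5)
Your proposal is correct and is the same argument as the paper's, which simply composes the lattice isomorphisms $\Op(C)\cong\Op(B)\cong\Op(A)$. You additionally spell out two points the paper leaves implicit: the identity $(\iota_{BC}\circ\iota_{AB})^*=\iota_{AB}^*\circ\iota_{BC}^*$, and the identification of the inclusion $A\hookrightarrow C$ with the composite of the two given inclusions.
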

\begin{proof}
    Derived immediately from the composition of isomorphisms $\Op(A)\cong\Op(B)\cong\Op(C)$.
\end{proof}
\begin{lemma}
    \label{lemma-intermediate}
    If $A\unlhd C$ and $A\subseteq B\subseteq C$, $A\unlhd B\unlhd C$.
\end{lemma}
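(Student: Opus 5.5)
Let $i:A\hookrightarrow B$ and $j:B\hookrightarrow C$ be the inclusions, so that $j\circ i:A\hookrightarrow C$ is the given inclusion. Precomposition is contravariant, so
$$(j\circ i)^*=i^*\circ j^*:\Op(C)\to\Op(B)\to\Op(A),$$
and by hypothesis this composite is a lattice isomorphism. Precomposition along any map preserves the pointwise joins and meets, so $i^*$ and $j^*$ are already lattice homomorphisms. It therefore suffices to show that each of them is a bijection.

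The first step is to show that $i^*$ is surjective and $j^*$ is injective. Write $e:=(i^*\circ j^*)^{-1}:\Op(A)\to\Op(C)$. For $\chi\in\Op(A)$, the element $j^*(e(\chi))\in\Op(B)$ satisfies $i^*(j^*(e(\chi)))=\chi$, so $i^*$ has a section and is surjective. In the other direction, $j^*$ has the retraction $e\circ i^*$, since $e\circ i^*\circ j^*=\id_{\Op(C)}$, so $j^*$ is injective.

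The remaining step is to show that $i^*$ is injective, or equivalently that $j^*$ is surjective. This is the crux: from the composite alone, $i^*$ could in principle forget information that $j^*$ never reaches. My plan is to show $j^*\circ e\circ i^*=\id_{\Op(B)}$, which requires every open $\chi\in\Op(B)$ to be recovered from its restriction to $A$. The two sides agree after applying $i^*$, because $i^*\circ j^*\circ e=\id$. So the task reduces to showing that two opens of $B$ which agree on $A$ are equal.

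I would try to obtain this from the surrounding structure rather than from bare sets. One option is to use that opens of $B$ extend to $C$. If that fails in general, I would fall back on the instances that actually matter here, namely $\Lambda_n$ or the vertex sets inside $\Delta^n$, and in the transfinite case $\Lambda_\omega$, $\Delta^\omega$ and $\Delta^\infty$. In each of these, Theorem~\ref{thm-sobrio} and the higher and transfinite Phoa principles (Theorem~\ref{thm-phoa-n}, Theorem~\ref{thm-phoa-omega}) already show that every open is determined by its values on the vertices, and the vertices lie in $A$. I expect this injectivity step to be the main obstacle. If it goes through, $i^*$ and $j^*$ are both bijective lattice homomorphisms, hence lattice isomorphisms, and so $A\unlhd B$ and $B\unlhd C$.
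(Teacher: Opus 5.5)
Your reduction is sound. Since $i^*\circ j^*$ is invertible, $i^*$ has a section and $j^*$ has a retraction. Given $A\unlhd C$, the two conclusions are equivalent, as you note: if $i^*$ is invertible then $j^*=(i^*)^{-1}\circ(i^*\circ j^*)$ is, and symmetrically. The gap is that you never establish either one, and your suggested routes do not work. ``Opens of $B$ extend to $C$'' is literally surjectivity of $j^*$, so it is the claim itself. Restricting to $\Lambda_n$, $\Delta^\omega$, $\Delta^\infty$ proves a different, special-case statement. Moreover, the vertex sets are not spanning, since $\Op(\set{v_0,\dots,v_n})\cong\I^{n+1}\not\cong\Delta_{n+1}$. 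The paper's own proof simply asserts that both factors of an invertible composite are invertible, which is exactly the inference you rightly refused to make.

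No argument can close this gap, because the statement is false under the paper's axioms. Let $C$ be the pushout of two copies of $\I$ along $\set{0,1}$, with the second copy glued with $0$ and $1$ swapped. Write $e_1,e_2\colon\I\to C$ for the legs, $a:=e_1(0)=e_2(1)$ and $c:=e_1(1)=e_2(0)$.
\begin{itemize}
\item Every open of $C$ is constant. For $f\colon C\to\I$, Lemma~\ref{lemma-mono} applied to $f\circ e_1$ and $f\circ e_2$ gives $f(a)\sqsubseteq f(c)\sqsubseteq f(a)$, so $f(a)=f(c)=:k$. Axiom~\ref{ax-interpole} then gives $f(e_r(t))=k\sqcup(t\sqcap k)=k$ on each copy. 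Hence evaluation at $a$ is a lattice isomorphism $\Op(C)\cong\I\cong\Op(\set{a})$, so $\set{a}\unlhd C$.
\item The points are distinct, $a\neq c$. The maps $t\mapsto\neg\sem{t}$ on the first copy and $t\mapsto\sem{t}$ on the second agree at the glued endpoints. They therefore induce $h\colon C\to\Omega$ with $h(a)=\top\neq\bot=h(c)$.
\item The intermediate set is not spanning. Take $B:=\set{a,c}$; then $B\cong\top+\top$ and $\Op(B)\cong\I^2$. The restriction $\Op(B)\to\Op(\set{a})\cong\I$ is the first projection, which is not injective because $0\neq 1$.
\end{itemize}
Thus $\set{a}\subseteq B\subseteq C$ and $\set{a}\unlhd C$, yet neither $\set{a}\unlhd B$ nor $B\unlhd C$ holds. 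The lemma needs an additional hypothesis, for instance that opens of $B$ are determined by their restriction to $A$.
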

\begin{proof}
    The restriction $(-|_A):\Op(C)\to\Op(A)$ factors as $\Op(C)\to\Op(B)\to\Op(A)$. Since $(-|_A)$ is invertible, so is $(-|_B):\Op(C)\to\Op(B)$ and $(-|_A):\Op(B)\to\Op(A)$.
\end{proof}
Recall that from Axiom \ref{ax-interpole} we have shown Theorem \ref{thm-sobrio} and \ref{thm-sobrio-omega}, which in reverse can be seen as a condition implying Axiom \ref{ax-interpole} itself.
\begin{lemma}
    $\Lambda_\omega\unlhd\Delta^\omega$ implies that $\I$ is a synthetic poset.
\end{lemma}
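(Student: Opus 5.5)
By Theorem \ref{thm-poset}, showing that $\I$ is a synthetic poset is the same as deriving Axiom \ref{ax-interpole}. So the plan is to assume $\Lambda_\omega\unlhd\Delta^\omega$, using only Axioms \ref{ax-prop}, \ref{ax-lattice} and \ref{ax-l-alg}, and then prove that every $p:\I\to\I$ satisfies $p(i)=p(0)\sqcup(i\sqcap p(1))$.

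The first step is to reduce from $\Delta^\omega$ to a finite simplex using Lemma \ref{lemma-intermediate}. The image of the vertices $v:\N\hookrightarrow\Lambda_\omega$, together with the edges $n\le n+1$, sits inside $\Delta^\omega$. In particular the edge $0\le1$ maps onto $\iota_1(\Delta^1)=\set{(i,0,0,\dots)}$. Concretely, $\Lambda_\omega\to\Delta^\omega$ sends the $n$-th edge to $t\mapsto(1^n,t,\overline{0})$, the standard edge between $v_n$ and $v_{n+1}$. I would first check that this map is an embedding, so that $\Lambda_\omega$ can be viewed as a subset.

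I would then look at $\Lambda_1\subseteq\Delta^1$ directly. The copy of $\Delta^1=\I$ inside $\Delta^\omega$ is a retract: the maps $\iota_1$ and $\pi_0$ compose to $\id$ on $\I$. Moreover, the retraction $\pi_0:\Delta^\omega\to\I$ restricts to $\Lambda_\omega$, where it maps the $0$-th edge identically and sends every later edge constantly to $1$.

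Next I transport functions along this retraction. Take $p:\I\to\I$. Both $p\circ\pi_0$ and $q\circ\pi_0$, where $q(i):=p(0)\sqcup(i\sqcap p(1))$, are functions $\Delta^\omega\to\I$. By the sobriomorphism hypothesis, restriction to $\Lambda_\omega$ is injective, so it suffices to show these two functions agree on $\Lambda_\omega$.

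\begin{itemize}
\item On edges $n\ge1$, both functions are constantly equal to $p(1)$ and $q(1)$ respectively, and $q(1)=p(0)\sqcup p(1)$.
\item On the $0$-th edge, we need $p(t)=q(t)$ for all $t\in\I$, which is exactly the axiom itself.
\end{itemize}

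So this naive transport is circular, and the obstacle is choosing the comparison correctly. Instead I will use surjectivity of restriction, i.e.\ that every function on $\Lambda_\omega$ extends to $\Delta^\omega$. Build $h:\Lambda_\omega\to\I$ from $p$ on the $0$-th edge. Since the endpoints must match, this requires $p(1)$ at $v_1$, and we take $h$ constant at $p(1)$ on all later edges. Extend $h$ to $\tilde h:\Delta^\omega\to\I$. Then restrict $\tilde h$ along the inclusion $\Delta^2\hookrightarrow\Delta^\omega$, $(a,b)\mapsto(a,b,\overline 0)$, to obtain $g:\Delta^2\to\I$, and use $g$ to compare $p$ with $q$.

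The remaining difficulty is that I expect to need a genuine interpolation argument for $g$ that avoids Axiom \ref{ax-interpole}. The idea is to exploit uniqueness once more, applied to the family of functions on $\Delta^\omega$ given by
$$x\mapsto\tilde h\big(x_0\sqcap i,\ x_1\sqcap i,\ \dots\big),$$
indexed by $i\in\I$. Each such function restricts on $\Lambda_\omega$ to something expressible through $p$ at the points $i\sqcap t$. Comparing these restrictions with the restriction of the candidate $p(0)\sqcup(x_0\sqcap i\sqcap p(1))$, and using injectivity of restriction, I would aim to reduce the claim to the lattice identities of Axiom \ref{ax-lattice} together with Lemma \ref{lemma-sigma-meet}. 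This last reduction is the step I expect to be hardest and the most likely to need adjustment.
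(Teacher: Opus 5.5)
\textbf{There is a genuine gap, and it cannot be closed.} You rightly see that the first transport is circular, but the replacement argument does not escape that circularity, and the final step you flag as hardest cannot be carried out.

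The extension $\tilde h$ is forced. The map $p\circ\pi_0$ already restricts on $\Lambda_\omega$ to your $h$: it is $p$ on the $0$-th edge and constantly $p(1)$ on the others. By uniqueness $\tilde h=p\circ\pi_0$, so surjectivity contributes nothing new. Your rescaled family is then just $x\mapsto p(x_0\sqcap i)$. Comparing it on the $0$-th edge with $p(0)\sqcup(x_0\sqcap i\sqcap p(1))$ requires $p(t\sqcap i)=p(0)\sqcup(t\sqcap i\sqcap p(1))$, which is Axiom \ref{ax-interpole} again.

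Your opening appeal to Lemma \ref{lemma-intermediate} does not apply either. $\Lambda_\omega$ lies in no finite $\Delta^n$, and $\Lambda_1\hookrightarrow\Delta^1$ is an isomorphism, so it carries no information.

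The structural reason is this. $\Lambda_\omega\unlhd\Delta^\omega$ only identifies $\Op(\Delta^\omega)$ with sequences of \emph{arbitrary} maps $\I\to\I$ glued at their endpoints. It controls how edges assemble, never what a single edge $\I\to\I$ can be.

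In fact no argument from Axioms \ref{ax-prop}, \ref{ax-lattice} and \ref{ax-l-alg} (the only ones postulated in that section) can succeed. Take the Boolean model $\I=\mathbf{2}$ mentioned in Chapter \ref{chapter-interval}; all three axioms hold there because every proposition involved is decidable.
\begin{itemize}
\item Here $\Delta^\omega=\set{1^n\overline{0}:n\in\N}\cong\N$.
\item Also $\Lambda_\omega\cong\N$: each edge $\set{n}\times\mathbf 2$ contributes only its two endpoints, which are identified with $v_n$ and $v_{n+1}$.
\item So $\Lambda_\omega\to\Delta^\omega$ is a bijection, and $\Lambda_\omega\unlhd\Delta^\omega$ holds trivially.
\end{itemize}
Yet negation $p:\I\to\I$ gives $p(0)\sqcup(1\sqcap p(1))=1\neq 0=p(1)$, so Axiom \ref{ax-interpole} fails. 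Independently of Theorem \ref{thm-poset}, one computes that $\E$ has two points: the classes of $(0,0)$ and $(1,0)$ in the first copy of $\Delta^2$. Hence $\I$ is not even Rezk.

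So the statement is not provable as posed. The paper gives no proof, only the remark that Theorems \ref{thm-sobrio} and \ref{thm-sobrio-omega} may be read ``in reverse'', and it is exactly that converse which fails. A correct version needs an additional hypothesis that constrains $\I^\I$ itself, since no comparison of $\Delta^\omega$ with $\Lambda_\omega$ can do so.
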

We also have an analogous condition for Axiom \ref{ax-omega}.
\begin{lemma}
    If $\I$ is a synthetic poset, $\Delta^\omega\unlhd\Delta^\infty$ implies that $\I$ is a synthetic domain.
\end{lemma}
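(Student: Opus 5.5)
The plan is to unfold both the hypothesis and the conclusion into statements about restriction maps between observational algebras. I then expect them to coincide once the embedding $\w\hookrightarrow\wb$ is identified with the inclusion $\Delta^\omega\subseteq\Delta^\infty$. By Theorem \ref{thm-domain}, together with Theorem \ref{thm-poset} applied to the hypothesis that $\I$ is a synthetic poset, it is enough to establish Axiom \ref{ax-omega}: $\I$ is right-orthogonal to $\w\hookrightarrow\wb$. Orthogonality to an object ($U=\top$) says that precomposition with $\w\hookrightarrow\wb$ is a bijection $\I^{\wb}\to\I^{\w}$. In other words, every $h:\w\to\I$ extends uniquely along the embedding.

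The first step is to transport this along the isomorphisms $\w\cong\Delta^\omega$ and $\wb\cong\Delta^\infty$ from Chapter \ref{chapter-omega}. I must check that the canonical embedding $\w\hookrightarrow\wb$ corresponds to the inclusion $\Delta^\omega\subseteq\Delta^\infty$. This embedding is the unique coalgebra map from $\w$, made into a coalgebra by $\phi^{-1}$ (Lambek), to the final coalgebra $\psi$.

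\begin{itemize}
\item On $\Delta^\omega$, the algebra is $\phi(i,j)=(i,\sigma_\omega(i,j))$. Its inverse sends a sequence $i$ to $(i(0),\lambda\_.\lambda n.i(n+1))$, using Lemma \ref{lemma-sigma-eta-n} pointwise, exactly as in the proof of Lemma \ref{lemma-l-delta-n}.
\item This inverse agrees with $\psi$ restricted to $\Delta^\omega$. Hence the inclusion $\Delta^\omega\subseteq\Delta^\infty$ is a coalgebra homomorphism, and by finality it is the canonical map.
\end{itemize}

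Alternatively, one can check componentwise. Each colimit injection $\iota_n:\Delta^n\to\Delta^\omega$ followed by the inclusion and then the limit projections to $\Delta^m$ is the evident pad-with-zeros-then-truncate map. These agree with the composites $L^{n+1}\bot\to L^m\top$ used to build the comparison.

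With that identification in place, right-orthogonality of $\I$ to $\w\hookrightarrow\wb$ is precisely the statement that the restriction $(-|_{\Delta^\omega}):\Op(\Delta^\infty)\to\Op(\Delta^\omega)$ is a bijection. The hypothesis $\Delta^\omega\unlhd\Delta^\infty$ says this restriction is an isomorphism of lattices, so in particular it is a bijection. This gives Axiom \ref{ax-omega}. Theorem \ref{thm-domain} then concludes that $\I$ is a synthetic domain, with $0$ serving as the minimum.

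The main obstacle I expect is the identification step: verifying that, under the isomorphisms with $\Delta^\omega$ and $\Delta^\infty$, the abstract comparison $\w\to\wb$ is literally the inclusion. The rest is definitional. I would first try the coalgebra-homomorphism argument via Lemma \ref{lemma-sigma-eta-n}, since it avoids manipulating colimits. I would fall back on the componentwise comparison only if the Lambek inverse turns out awkward to describe on the quotient $\Delta^\omega$.
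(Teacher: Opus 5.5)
Your proposal is correct and matches the argument the paper intends; the paper states this lemma without proof. Under $\w\cong\Delta^\omega$ and $\wb\cong\Delta^\infty$, the hypothesis $\Delta^\omega\unlhd\Delta^\infty$ is exactly right-orthogonality of $\I$ to $\w\hookrightarrow\wb$, i.e.\ Axiom \ref{ax-omega}, and Theorem \ref{thm-domain} with Theorem \ref{thm-poset} finishes; your Lambek-lemma check that the inclusion $\Delta^\omega\subseteq\Delta^\infty$ is the canonical coalgebra map into $(\Delta^\infty,\psi)$ is sound and fills in an identification the paper takes for granted.
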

Now we combine these two conditions using Lemma \ref{lemma-intermediate}.
\begin{theorem}
    \label{thm-last}
    $\Lambda_\omega\unlhd\Delta^\infty$ implies that $\I$ is a synthetic domain.
\end{theorem}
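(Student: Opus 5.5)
The plan is to reduce Theorem \ref{thm-last} to the two preceding lemmata by means of Lemma \ref{lemma-intermediate}. To apply that lemma we need an intermediate object between $\Lambda_\omega$ and $\Delta^\infty$, and the natural candidate is $\Delta^\omega$. So the first step is to exhibit the chain of inclusions
$$\Lambda_\omega\subseteq\Delta^\omega\subseteq\Delta^\infty.$$

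The second inclusion is immediate: $\Delta^\omega$ was defined as the subset $\set{i\in\Delta^\infty:\exists n.\,i(n)=0}$. For the first, I would construct a map $\Lambda_\omega\to\Delta^\omega$ by the universal property of the coequalizer. On the vertex $n$ it sends $n\mapsto v_n=1^n\overline{0}$. On the edge $(n,t)\in\N\times\I$ it sends $(n,t)\mapsto(1^n,t,\overline{0})$. The two endpoint conditions hold: $t=0$ gives $v_n$ and $t=1$ gives $v_{n+1}$, so $p_0$ and $p_1$ are equalized.

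Next I must check that this map is a monomorphism, so that $\Lambda_\omega$ really is a subset of $\Delta^\omega$. This holds because distinct edges meet only at shared vertices. The edge $(n,t)$ is recovered from its image as the unique position $n$ where the sequence takes the value $t$ after $n$ ones; the degenerate values $t\in\set{0,1}$ are identified exactly as in the quotient. I expect this to be the main technical obstacle, since the argument is constructive and we cannot case-split on $t$. If full injectivity is awkward, I would instead note that Lemma \ref{lemma-intermediate} only uses that the restriction $\Op(\Delta^\infty)\to\Op(\Lambda_\omega)$ factors through $\Op(\Delta^\omega)$. That factorization holds for any maps $\Lambda_\omega\to\Delta^\omega\to\Delta^\infty$ whose composite is the given map, so the proof goes through with ``inclusion'' read as ``map''.

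With the chain in place, the hypothesis $\Lambda_\omega\unlhd\Delta^\infty$ and Lemma \ref{lemma-intermediate} give $\Lambda_\omega\unlhd\Delta^\omega\unlhd\Delta^\infty$. The first conjunct, by the lemma stating that $\Lambda_\omega\unlhd\Delta^\omega$ implies $\I$ is a synthetic poset, shows that $\I$ is a synthetic poset. Feeding this, together with the second conjunct $\Delta^\omega\unlhd\Delta^\infty$, into the subsequent lemma shows that $\I$ is a synthetic domain. Equivalently, by Theorem \ref{thm-domain}, Axioms \ref{ax-interpole} and \ref{ax-omega} hold, which proves the theorem.
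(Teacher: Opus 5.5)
Your argument is the paper's own: its proof is just the remark that the two preceding lemmata combine through $\Delta^\omega$ via Lemma~\ref{lemma-intermediate}. Your explicit map $\Lambda_\omega\to\Delta^\omega$ is fine, and injectivity needs no case split on $t$, only a comparison of $n$ and $m$ in $\N$. The trouble is that both inferences this route relies on fail as used. First, Lemma~\ref{lemma-intermediate} is not valid as proved, and your fallback states the invalid step explicitly. If the composite $\Op(\Delta^\infty)\to\Op(\Delta^\omega)\to\Op(\Lambda_\omega)$ is invertible, you only learn that the first restriction is a split mono and the second a split epi, so knowing that the restriction factors is not enough. Here it can be repaired. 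Suppose $F,F'\colon\Delta^\omega\to\I$ agree on $\Lambda_\omega$, and let $d_n\colon\Delta^\infty\to\Delta^n$ be truncation. Then $F\circ\iota_n\circ d_n$ and $F'\circ\iota_n\circ d_n$ agree on $\Lambda_\omega$, because $d_n$ carries $\Lambda_\omega$ into $\Lambda_n$. The hypothesis makes them equal, and since $d_n$ retracts the padding map, $F\circ\iota_n=F'\circ\iota_n$ for all $n$, so $F=F'$. That injectivity makes both restrictions isomorphisms.

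Second, and more seriously, the lemma that $\Lambda_\omega\unlhd\Delta^\omega$ makes $\I$ a synthetic poset is false under the axioms in force in that section (Axioms~\ref{ax-prop}, \ref{ax-lattice}, \ref{ax-l-alg}). Take the Boolean model $\I=\Omega=\{0,1\}$ in classical sets. There both $\Lambda_\omega$ and $\Delta^\omega$ are just $\N$ and the inclusion is a bijection, so $\Lambda_\omega\unlhd\Delta^\omega$ holds trivially. Yet $\neg\colon\I\to\I$ is a path from $1$ to $0$ violating Axiom~\ref{ax-interpole}, so by Theorem~\ref{thm-poset} $\I$ is not a synthetic poset. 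Hence the poset half cannot be extracted from the conjunct $\Lambda_\omega\unlhd\Delta^\omega$; it must use the part of the hypothesis involving $\Delta^\infty$, which fails in the Boolean model, so the theorem itself is not refuted. That part does carry order information. The shift $s(i)=(1,i(0),i(1),\dots)$ preserves $\Lambda_\omega$ and fixes $(1,1,\dots)$. Extend the map that is $t\mapsto p^n(t\sqcap p(0))$ on the $n$-th edge to $G\colon\Delta^\infty\to\I$. Uniqueness then gives $G\circ s=p\circ G$, so every $p\colon\I\to\I$ has the fixed point $G(1,1,\dots)$, which already rules out $\neg$. Turning such information into full interpolation (equivalently Rezk-completeness and $\I$-separatedness) is the missing idea. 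Neither your proposal nor the paper's one-line proof supplies it.
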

We now propose a conjecture from the observation of Theorem \ref{thm-last} to conclude the investigations on sobriomorphisms so far:
\begin{conjecture}
    \label{hypo-last}
    A type $A$ is a synthetic predomain if it is right-orthogonal to $\Lambda_\omega\hookrightarrow\wb$.
\end{conjecture}
Although this conjecture may not be true, we still wish that the gap between orthogonality and completeness of a synthetic predomain is reachable within a few additional properties. More specifically, we expect that, under certain situations, the orthogonality to the embedding above implies the Segal and chain completeness, since the embedding is the composition of $\Lambda_\omega\hookrightarrow\w$ and $\w\hookrightarrow\wb$.

%% file: chapters/7-conclusion.tex

\chapter{Conclusions}

In this thesis, we have shown a new generalisation of Phoa principle that applies on the initial $L$-algebra. This generalisation suggests a possible connection between the Segal completeness and the chain completeness. We expect that this hidden connection will be made explicit in future and will eventually guide us on constructing the models for synthetic domain theory.

Nevertheless, we should admit that the route which brings us from the classical Phoa principle to the transfinite Phoa principle heavily relies on the lattice structure of the interval type. As a result, Conjecture \ref{hypo-last} may see a non-negligible gap on types other than the observational algebras. One may expect a duality between the ``ordinary'' types and their observational algebras analogous to the Stone duality between the sober spaces and their observational algebras of open sets. One may ask if applying the functor $\Op$ twice gives such duality. It turns out that, if we define the a sober space to be a space $X$ satisfying $X\cong\Op^2(X)$, none of the simplices (either finite or transfinite) are sober, though we have the intuition that simplices contain the most points among all spaces capturing the same order structure. We hereby list two questions that remain to answer:
\begin{enumerate}
    \item Does the observational algebra $\Op(X):=X\to\I$ really behave like a synthetic counterpart of the frame of open sets, even though it does not guarantee arbitrary unions?
    \item Is there a general way to assign each ``good'' type (Segal, Rezk, $\I$-separated, etc.) a lattice that captures the information order and in reverse, reconstruct a type from the observational algebra?
\end{enumerate}
Sterling and Ye proposed another duality between spaces and lattices \cite{sterling_domains_2025}. They see $\Op$ as a functor sending an ``affine space'' to $\I$-algebras, and see the right adjoint of this functor as finding the spectrum of a $\I$-algebra. This framework might lead to a possible way to prove or disprove Conjecture \ref{hypo-last}.

This thesis also provides formalisations of the proofs in Cubical Agda. The formalisation codebase includes the following results:
\begin{enumerate}
    \item Axiomatisation of the lattice and $L$-algebra on $\I$.
    \item Relation between $L$-algebras and simplices.
    \item Classical and higher Phoa principles and their variants on spines.
    \item Transfinite Phoa principle and its variant on $\omega$-spine.
\end{enumerate}
We expect that the explicit constructions in our formalisation could contribute as a foundation for the future development of a full-scale calculus for synthetic domain theory.

Kudasov et al. have delivered a proof-assistant call Rzk \cite{kudasov_formalizing_2023} based on the synthetic $\infty$-category theory by Riehl et al. \cite{riehl_type_2023}, and it is promising to migrate our work into Rzk to utilize its native support for directed homotopy. Another work that provides a formalisation framework for synthetic domain theory is the guarded mode in Agda, based on the theory in \cite{mogelberg_bisimulation_2019} and \cite{kristensen_greatest_2022}. Despite the fact that the guarded mode is designed for the synthetic guarded domain theory \cite{birkedal_first_2012} instead of SDT mentioned in this thesis, we still see the possibility of implementing SDT as a language extension of Agda in the future.


%% file: chapters/a-proof.tex

\chapter{Formalizing Proofs in Agda}
\label{apdx-agda}

We have formalized some of the major theorems and lemmata in (Cubical) Agda. Table \ref{tab-agda} lists these results and their corresponding formalisations in the code base. This appendix will give a walkthrough of the technical details. Notice that the code snippets in the appendix are not necessarily the same as the original code, and some simplifications are made to avoid distracting readers to irrelevant technical details (e.g., universe levels, aliased constructors, etc.).

\begin{table}[h]
    \centering
    \begin{tabular}{l|l|l}
    Theorem or Lemma & Module & Term \\ \hline
    Lemma \ref{lemma-l-delta-n} & \verb|SemiLattice| & \verb|L□↓≡□↓| \\ \hline
    Theorem \ref{thm-l-delta} & \verb|SemiLattice| & \verb|Δ≡□↓| \\ \hline
    Lemma \ref{lemma-mono} & \verb|Lattice| & \verb|SMonotone| \\ \hline
    Theorem \ref{thm-phoa} & \verb|Lattice| & \verb|Phoa| \\ \hline
    Lemma \ref{lemma-mono-n} & \verb|Lattice| & \verb|SMonotoneN| \\ \hline
    Theorem \ref{thm-phoa-n} & \verb|Lattice| & \verb|PhoaN| \\ \hline
    Theorem \ref{thm-spine} & \verb|Lattice| & \verb|PhoaΛ| \\ \hline
    Theorem \ref{thm-phoa-omega} & \verb|Omega| & \verb|Phoaω| \\ \hline
    Theorem \ref{thm-spine-omega} & \verb|Omega| & \verb|PhoaΛω| \\ 
    \end{tabular}
    \caption{Formalized Theorem and Lemmata in Agda}
    \label{tab-agda}
\end{table}

\section{Postulating the Interval Type}
In SDT, the interval type $\I$ is a non-trivial bounded distributive lattice equipped with a $L$-algebra as mentioned in Axiom \ref{ax-l-alg}. Although it is feasible to list all these properties as axioms, we seek a more concise and elegant way to do it ``bottom-up''. That is, we only postulate the minimal information on the algebraic structures on $\I$ and prove other properties when necessary. The selected postulates are:
\begin{enumerate}
    \item (Axiom \ref{ax-prop}, \verb|PreSDT.SisSet|) $\I$ is a h-set.
    \item (Axiom \ref{ax-prop}, \verb|PreSDT.s0≠s1|) $0\in\I$ and $1\in\I$ with $0\ne1$.
    \item (Axiom \ref{ax-prop}, \verb|PreSDT.defIsMono|) $\sem{i}=\sem{j}$ implies $i=j$.
    \item (Axiom \ref{ax-l-alg}, \verb|SemiLattice.SΣ-def|) There exists a $L$-algebra $\sigma:L\I\to\I$ satisfying
    $$\sem{\sigma(i,j)}=\sum_{\phi:\sem{i}}\sem{j(\phi)}$$
    Define $i\sqcap j:=\sigma(i,\lambda\_.j)$.
    \item (Axiom \ref{ax-lattice}, \verb|Lattice.⊔-def|) The exists a binary operation $-\sqcup-:\I\to\I\to\I$ satisfying
    $$\sem{i\sqcup j}=\sem{i}*\sem{j}$$
    where $A*B$ is the pushout
    \[\begin{tikzcd}
        {A\times B} & B \\
        A & {A*B}
        \arrow["{\pi_2}", from=1-1, to=1-2]
        \arrow["{\pi_1}"', from=1-1, to=2-1]
        \arrow[from=1-2, to=2-2]
        \arrow[from=2-1, to=2-2]
        \arrow["\lrcorner"{anchor=center, pos=0.125, rotate=180}, draw=none, from=2-2, to=1-1]
    \end{tikzcd}\]
\end{enumerate}

\section{Encoding of Cubes and Simplices}
To encode the finite cubes $\I^n$ and the simplices $\Delta^n$ and $\Delta_n$, we introduced a vector type (\verb|SemiLattice.Vector.Vector|) defined inductively as
\begin{code}
Vector : Type → ℕ → Type
Vector A zero = Unit
Vector A (suc n) = A × Vector A n
\end{code}
together with a parametrised predicate that indicates a vector is monotonic with respect to a certain binary relation (\verb|SemiLattice.Vector.IsMonotonic|)
\begin{code}
IsMonotonic : ∀ {A} (R : A → A → Type) → ∀ {n} → Vector A n → Type
IsMonotonic R {0} _ = Unit*
IsMonotonic R {1} _ = Unit*
IsMonotonic R {suc (suc _)} (x , y , z) =
    (R x y) × (IsMonotonic R (y , z))
\end{code}
Notice that this predicate assumes that \verb|R| is transitive. The simplices $\Delta^n$ and $\Delta_n$ are then defined as the sum type of $\I^n$ as \verb|Vector| and a witness of the predicate \verb|IsMontonic|.

The infinite cases $\I^\N$, $\Delta^\omega$, and $\Delta^\infty$ are more complicated. In the on-paper proofs, we have a chain of subsets $\Delta^\omega\subseteq\Delta^\infty\subseteq\I^\N$. While we define $\Delta_\infty$ as subsets of $\I^\N$ satisfying a predicate (\verb|Omega.increasing∞|), the $\omega$-simplex $\Delta^\omega$ is defined as the colimit of chains (see Lemma \ref{lemma-chain}), which is a HIT in Cubical Agda (\verb|Cubical.HITs.SequentialColimit|). This allows us to utilize the results for finite cases when proving theorems for the infinite cases.

\section{Encoding of Spines}
As described earlier in Chapter \ref{chapter-phoa} and \ref{chapter-omega}, spines can be expressed as HITs in directed homotopy, which can be further translated into HITs in Cubical. The actual definition of the $\omega$-spine $\Lambda_\omega$ (\verb|Omega.Λω|) is the same as described in Chapter \ref{chapter-omega}. However, for finite cases, we have to find a way to inductively define the $n$-spines inductively for all $n\in\N$. Notice that $\Lambda_0$ is a single point, and $\Lambda_{n+1}$ can be seen as $\Lambda_n$ with an additional point attached to one of its ends via a directed path. To illustrate this in Agda, we do not give an inductive definition of $\Lambda_n$ directly. Instead, we see $\Lambda_n$ as a pointed space (\verb|Lattice.Λ∙|), which is the result of iterative application the following type constructor (\verb|Lattice.⊥∙|):
\begin{code}
data ⊥∙ (A∙ : Pointed) : Type where
    base : ⊥∙ A∙
    incl : typ A∙ → ⊥∙ A∙
    path : S → ⊥∙ A∙
    path-0 : path s0 ≡ base
    path-1 : path s1 ≡ incl (pt A∙)
\end{code}
which corresponds to the following directed HIT:
\begin{code}
data ⊥∙ (A∙ : Pointed) : Type where
    base : ⊥∙ A∙
    incl : typ A∙ → ⊥∙ A∙
    path : base ⊑ incl (pt A∙)
\end{code}
Notice that the base points of the pointed spaces are always chosen as \verb|base|.

One may ask why we choose to inductively add points to the bottom instead of to the top. The reason is that, when evaluating a function $f:\Lambda_n\to\I$ over the vertices, we will always start with the lowest one, resulting in an ascending sequence of values. This is consistent with Theorem \ref{thm-spine}: $\I^{\Lambda_\omega}\cong\Delta_\infty$. It can be shown that doing the other way will give an isomorphic definition of the spines.

One may also compare this construction with the Sierpi\'nski cone \cite{pugh_when_2025}. While both constructions introduce a new minimum, the cone construction adds paths from the new minimum to all other paths, while our construction adds only one path from the new minimum to the old and hence is only applicable to pointed spaces. The following diagrams depict the difference.
\[\begin{tikzcd}
    & \bullet &&& \bullet \\
    \bullet & \bullet && \bullet & \bullet \\
    \\
    {*} &&& {*}
    \arrow[from=2-1, to=1-2]
    \arrow[from=2-1, to=2-2]
    \arrow[from=2-2, to=1-2]
    \arrow[from=2-4, to=1-5]
    \arrow[from=2-4, to=2-5]
    \arrow[from=2-5, to=1-5]
    \arrow[dashed, from=4-1, to=1-2]
    \arrow[dashed, from=4-1, to=2-1]
    \arrow[dashed, from=4-1, to=2-2]
    \arrow[dashed, from=4-4, to=2-4]
\end{tikzcd}\]
On the left is the cone construction, which introduces three paths, while the construction on the right introduces only one.